\newtheorem{theorem}{Theorem}
\newtheorem{lemma}{Lemma}
\newtheorem{corollary}{Corollary}
\begin{document}
\title{
Optimized Video Streaming over Cloud: A Stall-Quality Trade-off }

\author{Abubakr Alabbasi and Vaneet Aggarwal    \thanks{The authors are affiliated with Purdue University, West Lafayette, IN 47907, email:\{aalabbas,vaneet\}@purdue.edu.}}

\maketitle


%


\begin{abstract}

As video-streaming services have expanded and improved, cloud-based video has evolved into a necessary feature of any successful business for reaching internal and external audiences. In this paper, video streaming over distributed storage is considered where the video segments are encoded using an erasure code for better reliability. There are multiple parallel streams between each server and the edge router. For each client request, we need to determine the subset of servers to get the data, as well as one of the parallel stream from each chosen server. In order to have this scheduling, this paper proposes a two-stage probabilistic scheduling. The selection of video quality is also chosen with a certain probability distribution. With these parameters, the playback time of video segments is determined by characterizing the download time of each coded chunk for each video segment. Using the playback times, a bound on the moment generating function of the stall duration is used to bound the mean stall duration. Based on this, we  formulate an optimization problem to jointly optimize the convex combination of mean stall duration and average video quality for  all requests, where the two-stage probabilistic scheduling, probabilistic video quality selection, bandwidth split among parallel streams, and auxiliary bound parameters can be chosen. This non-convex problem  is solved using an efficient iterative algorithm.  Evaluation results show significant improvement in QoE metrics for cloud-based video as compared to the considered baselines.

\end{abstract}


\begin{IEEEkeywords}
	Video Streaming over Cloud, Erasure Codes, Mean Stall Duration,  Video Quality, Two-stage probabilistic scheduling. \end{IEEEkeywords}

\section{Introduction}

Cloud computing has changed the way many Internet services are provided and operated. Video-on-Demand (VoD)  providers are increasingly moving their streaming services, data storage, and encoding software to cloud service providers \cite{yCloud2010,Aggarwal:multimedia:2013}. With the annual growth of global video streaming at a rate of 18.3\% \cite{marketsandmarkets},  cloud-based video has become an imperative feature of any successful business. For example,  IBM estimates cloud-based video will be a \$105 billion market opportunity by 2019 \cite{ibm}. In this paper, we will give a novel approach to an optimized cloud-based-video streaming.



 Since the  computing has been growing exponentially \cite{Denning:2016:ELC:3028256.2976758}, 
 the computation of decoding will not limit the latencies in delay sensitive video streaming and the networking latency will govern the system designs. The key advantage of erasure coding is that it  reduces storage cost while providing similar reliability as replicated systems \cite{2015_1, Dimakis:10}, and thus has now been widely adopted by companies like Facebook \cite{Sathiamoorthy13}, Microsoft \cite{Asure14}, and Google \cite{Fikes10}.  Further, we note that replication is a special case of erasure coding. Thus, the proposed research using erasure-coded content on the servers can also be used when the content is replicated on the servers.


In cloud-based-video, the users are connected to an edge router, which fetch the contents from the distributed storage servers (as depicted in Fig. \ref{fig:sys}). There are multiple parallel streams (PSs) between a server and the edge router which help in getting multiple streams simultaneously. We assume that the connection between users and edge router is not limited.  Unlike the case of file download, the later video-chunks do not have to be downloaded as fast as possible to improve the quality-of-experience (QoE) and thus multiple parallel streams help achieve better QoE. The key QoE metrics for video streaming are the duration of  stalls at the clients and the streamed average video quality. Every viewer can relate the QoE for watching videos to the stall duration and is thus one of the key focus in the studied streaming algorithms \cite{huang2015buffer,han2016mp}. The  average quality of the streamed video is an important QoE metric. 

The key challenge in quantification of stall duration is the choice of scheduling strategy to choose the storage servers for each request, as well as the parallel stream from the chosen servers. For a single video-chunk and single quality videos, the problem is equivalent to minimizing the download latency. This problem is an  open problem, since the optimal  strategy of choosing these $k$ servers (when file is erasure coded with parameters $(n,k)$)  would need a Markov approach similar to that in  \cite{MDS-Queue} which suffers from a  state explosion  problem. Further, the choice of video quality makes the problem challenging since the choice of video quality would depend on the current queue states.  The authors of \cite{Xiang:2014:Sigmetrics:2014,Yu_TON} proposed a probabilistic scheduling method for file scheduling,  where each possibility of $k$ servers is chosen with certain probability that can be optimized. In this paper, we extend this scheduling to a two-stage probabilistic scheduling which chooses $k$ servers and one of the parallel streams from each of these $k$ servers. Further, the choice of video quality is chosen independent of the scheduling and is chosen by a discrete probabilistic distribution. Thus, the proposed scheduling and quality assignment do not account for the current queue state making the approach manageable for analysis.




The data chunk transfer time in practical systems follows a shifted exponential distribution \cite{Yu_TON,CS14} which motivates the choice that the service time distribution for each video server is a shifted exponential distribution. Further, the request arrival rates for each video is assumed to be Poisson. The video segments are encoded using an $(n,k)$ erasure code and the coded segments are placed on $n$ different servers. When a video is requested, the segments need to be requested from $k$ out of $n$ servers as well as one of the parallel streams from each of the $k$ servers. Using the two-stage probabilistic scheduling and probabilistic quality assignment, the random variables corresponding to the times for download of different video segments from each server are characterized. By using ordered statistics over the $k$ parallel streams (one from each of the chosen $k$ servers), the random variables corresponding to the playback time of each video segment are then calculated. These are then used to find a bound on the mean stall duration. Moment generating functions of the ordered statistics of different random variables are used in the bound.  We note that the problem of finding latency for file download is very different from the video stall duration for streaming. This is because the stall duration accounts for download time of each video segment rather than only the download time of the last video segment. Further, the download time of segments are correlated since the download of chunks from a server are in sequence and the playback time of a video segment are dependent on the playback time of the last segment and the download time of the current segment. Taking these dependencies into account, this paper characterizes the bound on the mean stall duration.






A convex combination of mean stall duration and average video quality is optimized over the choice of two-stage probabilistic scheduling, video quality assignment probability, bandwidth allocation among different streams, and the auxiliary variables in the bounds. Changing the convex combination parameter gives a tradeoff between the mean stall duration and the average video quality.  An efficient algorithm is proposed to solve this non-convex problem. The proposed algorithm performs an alternating optimization over the different parameters, where each sub-problem is  shown to have convex constraints and thus can be efficiently solved using iNner cOnVex
Approximation (NOVA)  algorithm proposed in \cite{scutNOVA}. The proposed algorithm is shown to converge to a local optimal. Evaluation results demonstrate significant improvement of QoE metrics as compared to the considered baselines. The key contributions of our paper are summarized as follows. 
\begin{itemize}[leftmargin=0cm,itemindent=.3cm,labelwidth=\itemindent,labelsep=0cm,align=left]
\item This paper proposes a two-stage probabilistic scheduling for the choice of servers and the parallel streams. Further, the video quality is chosen using a discrete probability distribution. 

\item Two-stage probabilistic scheduling and probabilistic quality assignment are used to find the distribution of the  (random) download time of a chunk of each video segment from a parallel stream. Using ordered statistics, the random variable corresponding to the playback time of each video segment is characterized. This is further used to give bounds on the mean stall duration. 

\item  The QoE metrics of mean stall duration and average video quality are used to formulate an optimization problem over the  two-stage probabilistic scheduling access policy, probabilistic quality assignment, the bandwidth allocation weights among the different streams,  and the auxiliary bound parameters which are related to the moment generating function. Efficient iterative solutions are provided for these  optimization problems.
\item The experimental results validate our theoretical analysis and
demonstrate the efficacy of our proposed algorithm. Further, numerical results show that the proposed algorithms
converge within a few iterations. Further, the QoE metrics are shown to have significant improvement as compared to the considered baselines. Even for the minimum stall point, the proposed algorithm gets better quality than the lowest  quality. Further, the tradeoff between stalls and quality can be used by the service provider to effectively find an operating point. 

\end{itemize}

The remainder of this paper is organized as follows. Section \ref{sec:RelWork}
provides related work for this paper. In Section \ref{sec:SysMod}, we describe
the system model used in the paper with a description of
video streaming over cloud storage. Section \ref{sec:DownLoadPlay} derives expressions for the download and play times of the chunks which are used in  Section \ref{sec:mean}  to find an upper bound on the mean stall duration. Section \ref{sec:probForm} formulates the  QoE optimization problem as a weighted combination of the two QoE metrics  and  proposes the iterative algorithmic solution of this problem.   Numerical results are provided in Section \ref{sec:num}. Section \ref{sec:conc} concludes the paper.

\section{Related Work}\label{sec:RelWork}

{\em Latency in Erasure-coded Storage: } To our best knowledge, however, while latency in erasure coded storage systems has been widely studied, quantifying exact latency for erasure-coded storage system in data-center network is an open problem. 
Recently, there has been a number of attempts at finding latency bounds for an erasure-coded storage system \cite{MG1:12,Joshi:13,MDS-Queue,Xiang:2014:Sigmetrics:2014,Yu_TON,CS14}. The key scheduling approaches include {\em block-one-scheduling} policy that only allows the request at the head of the buffer to move forward \cite{MG1:12},  fork-join queue \cite{Makowski:89,Joshi:13} to request data from all server and wait for the first $k$ to finish, and the probabilistic scheduling \cite{Xiang:2014:Sigmetrics:2014,Yu_TON} that allows choice of every possible subset of $k$ nodes with certain probability. Mean latency and tail latency have been characterized in \cite{Xiang:2014:Sigmetrics:2014,Yu_TON} and \cite{Jingxian,Tail_TON}, respectively, for a system with multiple files using probabilistic scheduling. The probabilistic scheduling has also been shown to be optimal for tail latency index when the file sizes are heavy-tailed \cite{TailIndex}.  This paper considers video streaming rather than file downloading. The metrics for video streaming does not only account for the end of the download of the video but also of the download of each of the segment. Thus, the analysis for the content download cannot be extended to the video streaming directly and the analysis approach in this paper is very different from the prior works in the area.

{\em Video Streaming over Cloud: } Servicing Video on Demand and Live TV Content from cloud servers have been studied widely \cite{lee2013vod,huang2011cloudstream,he2014cost,chang2016novel,oza2016implementation}. 
The placement of content and resource optimization over the cloud servers have been considered. To the best of our knowledge, reliability of content over the cloud servers have not been considered for video streaming applications. In the presence of erasure-coding, there are novel challenges to characterize and optimize the QoE metrics at the end user. Adaptive streaming algorithms have also been considered for video streaming \cite{chen2012amvsc,wang2013ames}, which are beyond the scope of this paper and are left for future work. 

Recently, the authors of \cite{Abubakr_TON} considered video-streaming over cloud. However, the videos were a single quality and the quality optimization was not accounted. Further,  \cite{Abubakr_TON}  considered single stream between each storage server and edge node and thus two-stage probabilistic scheduling was not needed. Thus, the analysis and the problem formulation in this work is different from that in \cite{Abubakr_TON}.

\section{System Model}\label{sec:SysMod}

\begin{figure}
	\includegraphics[trim=1.5in 2.5in 1.8in 0.3in, clip, width=0.40\textwidth]{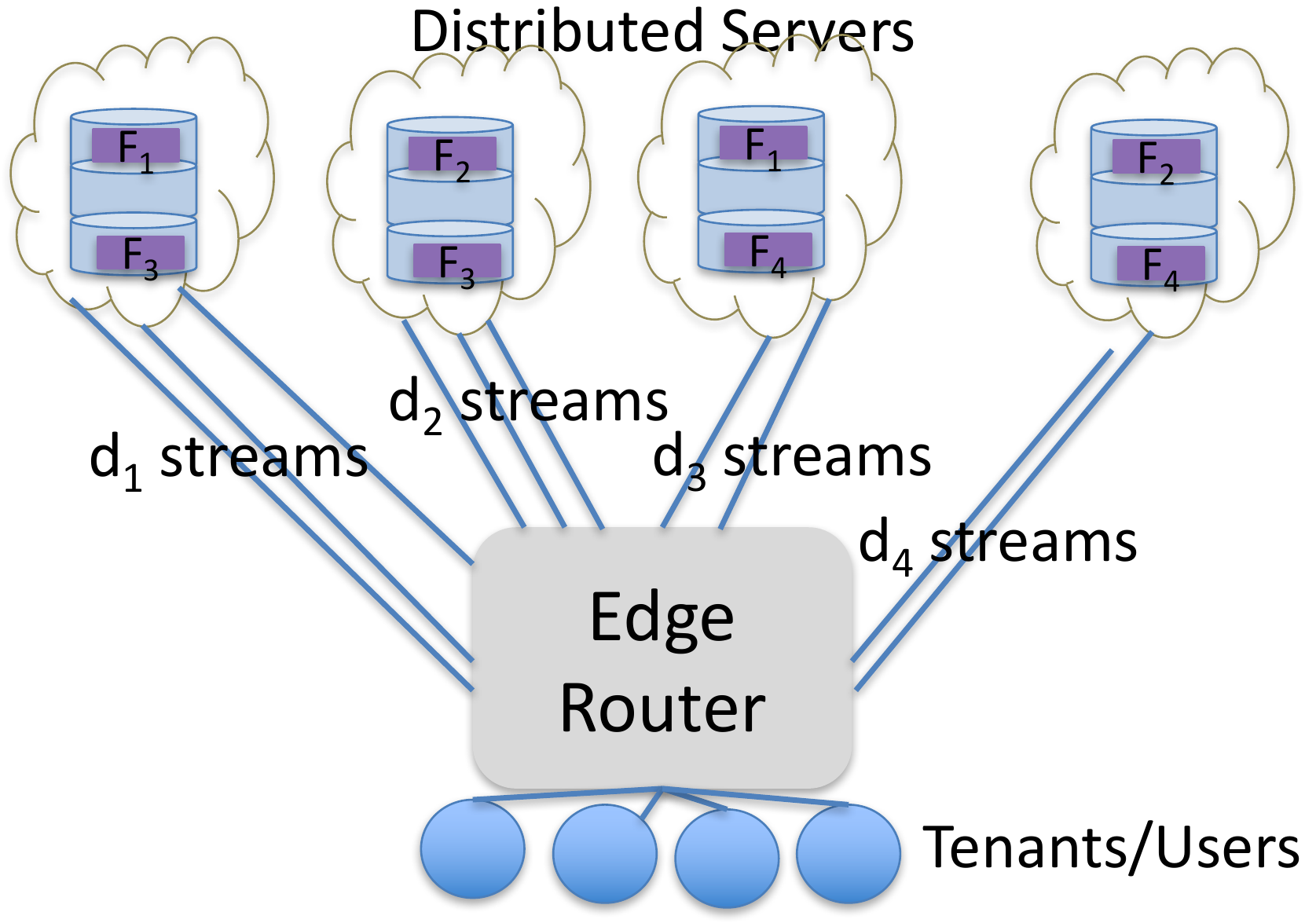}
	\vspace{-.1in}
	\caption{ An Illustration of a distributed storage system
		equipped with $m=4$ nodes. Storage server $j$ has $d_j$ streams to the edge router. \label{fig:sys}}
	\vspace{-.15in}
\end{figure}

We consider a distributed storage system consisting of $m$ heterogeneous servers
(also called storage nodes), denoted by $\mathcal{M}=1,2,\cdots,m$. Each server $j$ can be split into $d_{j}$ virtual outgoing parallel streams (queues) to the edge router, where the server bandwidth is split among all $d_j$ parallel streams (PSs).  This is depicted in Fig. \ref{fig:sys}.  The reason of having $d_j$ PSs is to serve $d_j$ video files simultaneously from a server thus helping one file not to have files wait for the previous long video files. This is a key difference for video streaming as compared to file download since the deadline for the later video chunks are late thus motivating prioritizing earlier chunks. This parallelization helps download  multiple files in parallel which also delays the finishing of download of the last chunks of multiple requests. Multiple users are connected to edge-router, where we assume that the connection between user and edge router is infinite and thus only consider the links from the server to the edge router. Thus, we can consider edge router as an aggregation of multiple users. Let $\left\{ w_{j,\nu_{j}},\forall j= 1, \cdots, m,   \nu_j= 1, \cdots d_j \right\} $ be a set of $d_j$ non-negative
weights representing the split of bandwidth at server $j$ on the $d_j$ PSs. The weights satisfy  $\sum_{\nu_j=1}^{d_j}w_{j,\nu_j}\leq1  \forall j$. 
The sum of weights at all PSs can be smaller than $1$, representing that the bandwidth may not be completely utilized. By optimizing $w_{j,\nu_j}$, the server bandwidth can be efficiently split among different PSs. Optimizing these weights help avoid bandwidth under-utilization and congestion, for example,  assigning larger bandwidth to heavy workload PSs can help reduce mean stall duration.



\begin{figure}
	\includegraphics[trim=0in 0in 0in 0in, clip, scale=0.30]{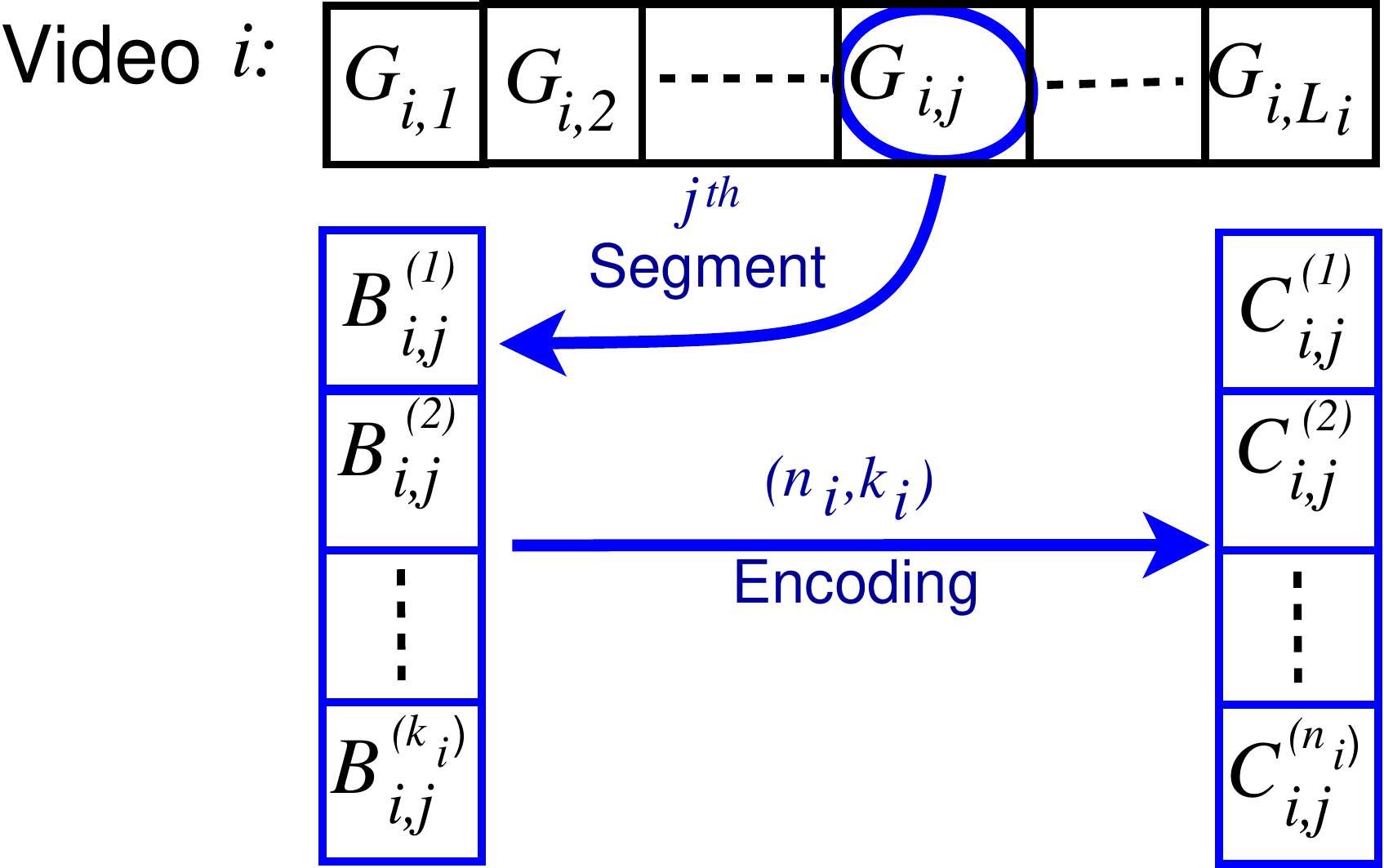}
	\vspace{-.1in}
	\caption{A schematic illustrates video fragmentation and erasure-coding processes. Video $i$ is composed of $L_{i}$ segments. Each segment is partitioned into $k_{i}$ chunks and then encoded using an $(n_{i},k_{i})$ MDS code. The quality index is omitted in the figure for simplicity. \label{fig:videoEncoding}}
	\vspace{-.2in}
\end{figure}

Each video file $i$, where $i=1,2,\cdots, r,$ is divided into $L_{i}$ equal segments, each of length $\tau$ seconds.  We assume that each video file is encoded to different qualities, {\em i.e.}, $\ell\in \{1,2, \cdots,V\}$, where $V$ are the number of possible choices for the quality level. The $L_{i}$  segments of video file $i$ at quality $\ell$ are denoted as $G_{i,\ell,1}, \cdots, G_{i,\ell,L_i}$. Then, each segment $G_{i,\ell,u}$ for $u\in\left\{ 1,2,\ldots,L_{i}\right\} $ and $\ell\in \{1,2, \cdots,V\}$ is  partitioned  into $k_i$ fixed-size chunks  and then
 encoded  using an $(n_i, k_i)$ Maximum Distance Separable (MDS) erasure code to generate $n_i$  distinct chunks for each segment $G_{i,\ell,u}$. These coded chunks are denoted as $C_{i,\ell,u}^{(1)}, \cdots, C_{i,\ell,u}^{(n_i)}$. The encoding setup is illustrated in Figure \ref{fig:videoEncoding}. The encoded chunks for all quality levels are stored on the disks of $n_i$ distinct storage nodes. The storage nodes chosen for quality level $\ell$ are represented by a set $\mathcal{S}_{i}^{(\ell)}$, such that 
 $\mathcal{S}_{i}^{(\ell)}\subseteq\mathcal{M}$ and $n_{i}=\left|\mathcal{S}_{i}^{(\ell)}\right|$. Each server $z\in \mathcal{S}_{i}^{(\ell)}$ stores all the chunks $C_{i,\ell, u}^{(g)}$ for all $u$ and for some $g$. In other words, $n_i$ servers store the entire content, where a server stores coded chunk $g$ for all the video-chunks for some $g$ or does not store any chunk.  We will use a probabilistic quality assignment strategy, where  a chunk of quality $\ell$  of size $a_{\ell}$  is requested with probability $b_{i,\ell}$ for all $\ell\in \{1,2, \cdots,V\}$. We further  assume  all the chunks of the video are fetched at the same quality level.  Note that $k_i=1$ indicates that the video file $i$ is replicated $n_i$ times.

In order to serve the incoming request at the edge router, the video can be reconstructed from the video chunks from any subset  of $k_{i}$-out-of-$n_{i}$ servers. Further, we need to assign one of the $d_j$ PSs for each server $j$ that is selected. We assume that files at each PS are served in order of the request in a first-in-first-out (FIFO) policy. Further, the different video chunks in a video are processed in order. In order to select the different PSs for video $i$ and quality $\ell$, the request goes to a set $\mathcal{A}_{i}^{(\ell)} = \{(j,\nu_j): j\in \mathcal{S}_{i}^{(\ell)}, \nu_j \in \{1, \cdots, d_j\}\}$, with $\left|\mathcal{A}_{i}^{(\ell)}\right|=k_i$ and  for every $(j,\nu_j)$ and $(k,\nu_k)$ in $\mathcal{A}_{i}^{(\ell)}$, $j\ne k$. Here, the choice of $j$ represents the server to choose and $\nu_j$ represents the PS selected. From each choice $(j,\nu_j) \in \mathcal{A}_{i}^{(\ell)}$, all chunks $C_{i,\ell,u}^{(g)}$ for all $u$ and the value of $g$ corresponding to that placed on server $j$ are requested from PS $\nu_j$. The choice of optimal scheduling strategy, or set $\mathcal{A}_{i}^{(\ell)}$ is an open problem. In this paper, we extend the probabilistic scheduling  proposed in \cite{Xiang:2014:Sigmetrics:2014,Yu_TON} to two-stage probabilistic scheduling. The two-stage probabilistic scheduling chooses every possible subset of $k_i$-out-of-$n_i$ nodes with certain probability, and for every chosen node $j$, chooses $1$-out-of-$d_j$ PSs with certain probability.  Let $\pi_{i,j,\nu_{j}}^{(\ell)}$ is the probability of requesting file $i$ from the PS $\nu_{j}$ that belongs to server $j$ for quality level $\ell$. Thus, $\pi_{i,j,\nu_{j}}^{(\ell)}$ is given by
\begin{equation}
\pi_{i,j,\nu_{j}}^{(\ell)}=q_{i,j}^{(\ell)}p_{j,\nu_j}^{(\ell)},\label{eq:pi_i_j_nu}
\end{equation}
where $q_{i,j}^{(\ell)}$ is the probability of choosing server $j$ and $p_{j,\nu}^{(\ell)}$ is the probability of choosing  PS $\nu_j$ at server $j$. Following  \cite{Xiang:2014:Sigmetrics:2014,Yu_TON}, it can be seen that the two-stage probabilistic scheduling gives feasible probabilities for choosing $k_i$-out-of $n_i$ nodes and one-out-of-$d_j$ PSs  if and only if there exists conditional probabilities $q_{i,j}^{(\ell)}\in\left[0,1\right]$ and $p_{j,\nu_j}^{(\ell)}\in\left[0,1\right]$ satisfying
\begin{equation}
\sum_{j=1}^{m}q_{i,j}^{(\ell)}=k_{i}\,\,\,\,\forall i\,\,\,\,\,\,\,\,\,\mbox{and}\,\,\,\,\,\,q_{i,j}^{(\ell)}=0\,\,\,\,\,\mbox{if\,\,\,\ensuremath{j\notin \mathcal{S}_{i}^{(\ell)}}},
\end{equation}
and
\begin{equation}
\sum_{\nu_j=1}^{d_j}p_{j,\nu_j}^{(\ell)}=1\,\,\,\,\forall j.
\end{equation}

We now describe a queuing model of the distributed storage system.
We assume that the arrival of  requests at the edge router for each video $i$ 
form
an independent Poisson process with a known rate $\lambda_{i}$.  Using the two stage probabilistic scheduling and the quality assignment probability distribution, the arrival of file requests at PS $\nu_j$ at node $j$ forms a Poisson Process
with rate $\varLambda_{j,\nu_j}=\sum_{i,\ell}\lambda_{i}\pi_{i,j,\nu_j}^{(\ell)} b_{i, \ell}$ which is
the superposition of $r d_j$ Poisson processes each with rate $\lambda_{i}\pi_{i,j,\nu_j}^{(\ell)}b_{i, \ell}$.  We assume that the chunk service time for each coded chunk $C_{i,\ell,u}^{(g)}$ at PS $\nu_j$ of server $j$,  $X_{j,\nu_j}^{(\ell)}$,  follows a shifted exponential distribution as has been demonstrated in realistic systems \cite{Yu_TON,CS14} and is given by
the probability distribution function $f_{j,\nu_j}^{(\ell)}(x)$, which is
\begin{equation}
f_{j,\nu_j}^{(\ell)}(x)=\begin{cases}
\begin{array}{cc}
\alpha_{j,\nu_j}^{(\ell)}e^{-\alpha_{j,\nu_j}^{(\ell)}\left(x-\beta_{j,\nu_j}^{(\ell)}\right)}\,, & \,\,\,\,\,x\geq\beta_{j,\nu_j}^{(\ell)}\\
0\,, & \,\,\,\,\,\,x<\beta_{j,\nu_j}^{(\ell)}
\end{array}\end{cases}.
\end{equation}

We note that exponential distribution is a special case with $\beta_{j,\nu_j}^{(\ell)}=0$. Let $M_{j,\nu_j}^{(\ell)}(t)=\mathbb{E}\left[e^{tX_{j,\nu_j}^{(\ell)}}\right]$ be the moment generating function of $X_{j,\nu_j}^{(\ell)}$ whose quality is ${\ell}$.  Then, $M_{j,\nu_j}^{(\ell)}(t)$ is given as

\begin{equation}
M_{j,\nu_j}^{(\ell)}(t)=\frac{\alpha_{j,\nu_j}^{(\ell)}}{\alpha_{j,\nu_j}^{(\ell)}-t}\,e^{\beta_{j,\nu_j}^{(\ell)}t}\,\,\,\,\,\,\,\,\,\,t<\alpha_{j,\nu_j}^{(\ell)} 
\label{M_j_t_1}
\end{equation}

Note that the value of  $\beta_{j,\nu_j}^{(\ell)}$ increases in proportion to the
chunk size, and the value of $\alpha_{j,\nu_j}^{(\ell)}$ decreases in proportion to the chunk size in the shifted-exponential service time distribution. Further, the rate $\alpha_{j,\nu_j}^{(\ell)}$ is proportional to the assigned bandwidth $w_{j,\nu_j}$. More formally, the parameters $\alpha_{j,\nu_j}^{(\ell)}$  and $\beta_{j,\nu_j}^{(\ell)}$ are given as
\begin{equation}
\alpha_{j,\nu_j}^{(\ell)} = \alpha_j w_{j,\nu_j}/a_\ell, \   \beta_{j,\nu_j}^{(\ell)} = \beta_j a_\ell, \label{eq:alphabeta}
\end{equation}
where $\alpha_j$ and $\beta_j$ are constant service time parameters when $a_\ell=1$ and the entire bandwidth is allocated to one PS. Since $ \beta_{j,\nu_j}^{(\ell)}$ mainly represents the read time and other processing times, we assume that all PSs have the same value of  $\beta_{j,\nu_j}^{(\ell)}$.



We note that the arrival rates are given in terms of the video files, and the service rate above is provided in terms of the coded chunks at each server. The client plays the video segment after all the $k_i$ chunks for the segment have been downloaded and the previous segment has been played. We also assume that there is a start-up delay of $d_{s}$ (in seconds) for the video which is the duration in which the content can be buffered but not played.  This paper will characterize the mean stall duration using two-stage probabilistic scheduling and probabilistic quality assignment.

\section{Download and Play Times of the Chunks} \label{sec:DownLoadPlay}

In order to understand the stall duration, we need to see the download time of different coded chunks and the play time of the different segments of the video. 

\subsection{Download Times of the Chunks from each Server}
In this subsection, we will quantify the download time of chunk for video file $i$ from server $j$ which has  chunks $C_{i,\ell,u}^{(g)}$ for all $u = 1, \cdots L_i$. The download of $C_{i,\ell,u}^{(g)}$ consists of two components - the waiting time of the video files in the queue of the PS before file $i$ request and the service time of all chunks of video file $i$ up to the $g^{\text{th}}$ chunk. Let  $\ensuremath{W_{j,\nu_j}}$ be the random variable corresponding to the waiting time of all the video files in queue of PS $\nu_j$ at server $j$ before file $i$ request and $Y_{j,\nu_j}^{(g,\ell)}$ be the (random) service time of coded chunk $g$ for file $i$ with quality $\ell$ from PS $\nu_j$ at server $j$. Then, the (random) download time for coded chunk $u\in \{1, \cdots, L_i\}$ for file $i$ at PS $\nu_j$ at server $j\in \mathcal{A}_{i}^{(\ell)}$, $D_{i,j}^{(u,\ell)}$, is given as 
\begin{equation}
D_{i,j,\nu_j}^{(u,\ell)} = W_{j,\nu_j} + \sum_{v=1}^u Y_{j,\nu_j}^{(v,\ell)}. \label{dije}
\end{equation}


We will now find the distribution of $W_{j,\nu_j}$. We note that this is the waiting time for the video files whose arrival rate is given as $\varLambda_{j,\nu_j}=\sum_{i,\ell}\lambda_{i}b_{i,\ell}\pi_{i,j,\nu_j}^{(\ell)}$. In order to find the waiting time, we would need to find the service time statistics of the video files. Note that $f_{j,\nu_j}^{(\ell)}(x)$ gives the service time distribution of only a chunk and not of the video files. 

Video file $i$ of quality $\ell$ consists of $L_{i}$ coded chunks at PS $\nu_j$ at server $j$ ($j\in \mathcal{S}_{i}^{(\ell)}$). The total service time for video file $i$ with quality $\ell$ at PS $\nu_j$ at server $j$ if requested from server $j$, $ST_{i,j,\nu_j}^{(\ell)} $, is given as  
\begin{equation}
ST_{i,j,\nu_j}^{(\ell)} = \sum_{v=1}^{L_i} Y_{j,\nu_j}^{(v,\ell)}.
\end{equation}

The service time of the video files is given as 
\begin{equation}
R_{j,\nu_j} = \begin{cases}
ST_{i,j,\nu_j}^{(\ell)}  \quad \text{ with probability } \frac{\pi_{i,j,\nu_j}^{(\ell)}\lambda_{i}b_{i,\ell}}{\Lambda_{j,\nu_j}} \quad \forall i,\ell, 
\end{cases}
\end{equation}
since the service time is $ST_{i,j,\nu_j}^{(\ell)} $ when file $i$ is requested at quality $\ell$ from PS $\nu_j$ from server $j$. Let $\overline{R}_{j,\nu_j}(s) = {\mathbb E}[e^{-sR_{j,\nu_j} }]$ be the Laplace-Stieltjes Transform of $R_{j,\nu_j}$. 

\begin{lemma}\label{LJ:servTimeofFile}
	The Laplace-Stieltjes Transform of $R_{j,\nu_j}$, $\overline{R}_{j,\nu_j}(s)=\mathbb{E}\left[e^{-s\overline{R}_{j,\nu_j}}\right]$ is given as
	\begin{equation}
\overline{R}_{j,\nu_j}(s)  = \sum_{i=1}^r \sum_{\ell=1}^{V} \frac{\pi_{i,j,\nu_j}^{(\ell)}\lambda_ib_{i,\ell}}{\Lambda_{j,\nu_j}}	\left(\frac{\alpha_{j,\nu_j}^{(\ell)}e^{-\beta_{j,\nu_j}^{(\ell)}s}}{\alpha_{j,\nu_j}^{(\ell)}+s}\right)^{L_{i}}\label{eq:servTimeofFile}
	\end{equation}
	\end{lemma}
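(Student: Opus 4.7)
The plan is to compute $\overline{R}_{j,\nu_j}(s)=\mathbb{E}[e^{-sR_{j,\nu_j}}]$ by first conditioning on which (file, quality) pair triggers the service, and then exploiting independence of the per-chunk service times within a video.

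First I would apply the law of total expectation to the mixture representation of $R_{j,\nu_j}$. Since $R_{j,\nu_j} = ST_{i,j,\nu_j}^{(\ell)}$ with probability $\pi_{i,j,\nu_j}^{(\ell)}\lambda_i b_{i,\ell}/\Lambda_{j,\nu_j}$ for each $(i,\ell)$ (and these probabilities sum to one by the definition of $\Lambda_{j,\nu_j}$), we get
\begin{equation}
\overline{R}_{j,\nu_j}(s) \;=\; \sum_{i=1}^{r}\sum_{\ell=1}^{V} \frac{\pi_{i,j,\nu_j}^{(\ell)}\lambda_i b_{i,\ell}}{\Lambda_{j,\nu_j}}\,\mathbb{E}\!\left[e^{-s\,ST_{i,j,\nu_j}^{(\ell)}}\right].
\end{equation}

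Next I would evaluate the inner expectation. Because $ST_{i,j,\nu_j}^{(\ell)} = \sum_{v=1}^{L_i} Y_{j,\nu_j}^{(v,\ell)}$ is a sum of $L_i$ independent chunk service times that are identically distributed with the shifted-exponential density $f_{j,\nu_j}^{(\ell)}$, the Laplace transform factorizes as
\begin{equation}
\mathbb{E}\!\left[e^{-s\,ST_{i,j,\nu_j}^{(\ell)}}\right] \;=\; \prod_{v=1}^{L_i} \mathbb{E}\!\left[e^{-s Y_{j,\nu_j}^{(v,\ell)}}\right] \;=\; \bigl(\mathbb{E}\!\left[e^{-s Y_{j,\nu_j}^{(1,\ell)}}\right]\bigr)^{L_i}.
\end{equation}
Replacing $t$ with $-s$ in the MGF expression \eqref{M_j_t_1} (or directly integrating the shifted-exponential density) gives the per-chunk LST $\tfrac{\alpha_{j,\nu_j}^{(\ell)}}{\alpha_{j,\nu_j}^{(\ell)}+s}\,e^{-\beta_{j,\nu_j}^{(\ell)}s}$, valid for $s>-\alpha_{j,\nu_j}^{(\ell)}$. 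Raising this to the $L_i$-th power and plugging back into the conditioned sum yields exactly \eqref{eq:servTimeofFile}.

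There is no real obstacle here; the only subtle points are checking that the mixture probabilities are correctly normalized by $\Lambda_{j,\nu_j}$ (which is immediate from its definition as the superposition rate) and ensuring the i.i.d.\ assumption on the $Y_{j,\nu_j}^{(v,\ell)}$ across chunks of the same video, which is a standing modeling assumption from Section \ref{sec:SysMod}. Everything else is a routine application of the LST on a finite independent sum.
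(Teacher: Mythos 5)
Your proposal is correct and follows essentially the same route as the paper's proof: decompose $\overline{R}_{j,\nu_j}(s)$ as a mixture over the $(i,\ell)$ pairs weighted by $\pi_{i,j,\nu_j}^{(\ell)}\lambda_i b_{i,\ell}/\Lambda_{j,\nu_j}$, factor the LST of $ST_{i,j,\nu_j}^{(\ell)}=\sum_{v=1}^{L_i}Y_{j,\nu_j}^{(v,\ell)}$ using independence into the $L_i$-th power of the per-chunk LST, and substitute the shifted-exponential transform $\alpha_{j,\nu_j}^{(\ell)}e^{-\beta_{j,\nu_j}^{(\ell)}s}/(\alpha_{j,\nu_j}^{(\ell)}+s)$. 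Your added remarks on normalization of the mixture weights and the validity range $s>-\alpha_{j,\nu_j}^{(\ell)}$ are correct and slightly more careful than the paper's own exposition.
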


\begin{proof}

\begin{align}
\overline{R}_{j,\nu_j}(s) & =\sum_{i=1}^r \sum_{\ell=1}^{V} \frac{\pi_{i,j,\nu_j}^{(\ell)}\lambda_{i}b_{i,\ell}}{\Lambda_{j,\nu_j}}\mathbb{E}\left[e^{-s\left(ST_{i,j,\nu_j}^{(\ell)}\right)}\right]\nonumber \\
 & \overset{}{=}\sum_{i=1}^r \sum_{\ell=1}^{V}  \frac{\pi_{i,j,\nu_j}^{(\ell)}\lambda_{i}b_{i,\ell}}{\Lambda_{j,\nu_j}}\mathbb{E}\left[e^{-s\left(\sum_{\nu=1}^{L_{i}}Y_{j,\nu_j}^{(\nu,\ell)}\right)}\right]\nonumber \\
 & =\sum_{i=1}^r \sum_{\ell=1}^{V} \frac{\pi_{i,j,\nu_j}^{(\ell)}\lambda_{i}b_{i,\ell}}{\Lambda_{j,\nu_j}}\left(\mathbb{E}\left[e^{-s\left(Y_{j,\nu_j}^{(1,\ell)}\right)}\right]\right)^{L_{i}}\nonumber \\
 & =\sum_{i=1}^r \sum_{\ell=1}^{V} \frac{\pi_{i,j,\nu_j}^{(\ell)}\lambda_{i}b_{i,\ell}}{\Lambda_{j,\nu_j}}\left(\frac{\alpha_{j,\nu_j}^{(\ell)}e^{-\beta_{j,\nu_j}^{(\ell)}s}}{\alpha_{j,\nu_j}^{(\ell)}+s}\right)^{L_{i}}
\end{align}

\end{proof}

\begin{corollary}
	The moment generating function for the service time of video files when requested from server $j$ and PS $\nu_j$, $B_{j,\nu_j}(t)$, is given as
	\begin{equation}
B_{j,\nu_j}(t)  = \sum_{i=1}^r \sum_{\ell=1}^{V} \frac{\pi_{i,j,\nu_j}^{(\ell)}\lambda_i b_{i,\ell}}{\Lambda_{j,\nu_j}}	\left(\frac{\alpha_{j,\nu_j}^{(\ell)}e^{\beta_{j,\nu_j}^{(\ell)}t}}{\alpha_{j,\nu_j}^{(\ell)}-t}\right)^{L_{i}}\label{eq:servTimeofFileB_j_i}
	\end{equation}
	for any $t>0$, and $t< \alpha_{j,\nu_j}$.
	\end{corollary}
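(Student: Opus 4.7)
The plan is to derive the corollary by a direct substitution into Lemma~\ref{LJ:servTimeofFile}, since the moment generating function and the Laplace--Stieltjes transform of a nonnegative random variable are related by a simple sign flip. Specifically, $B_{j,\nu_j}(t) = \mathbb{E}[e^{tR_{j,\nu_j}}] = \mathbb{E}[e^{-(-t)R_{j,\nu_j}}] = \overline{R}_{j,\nu_j}(-t)$, so plugging $s=-t$ into \eqref{eq:servTimeofFile} yields the claimed expression after noting that $e^{-\beta_{j,\nu_j}^{(\ell)}(-t)} = e^{\beta_{j,\nu_j}^{(\ell)}t}$ and $\alpha_{j,\nu_j}^{(\ell)}+(-t) = \alpha_{j,\nu_j}^{(\ell)}-t$.

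For a self-contained proof that does not appeal to an analytic continuation of the LST, I would instead repeat the conditioning argument of Lemma~\ref{LJ:servTimeofFile} with $e^{tR_{j,\nu_j}}$ in place of $e^{-sR_{j,\nu_j}}$. First, condition on which file $i$ and which quality level $\ell$ the request is for; by definition of $R_{j,\nu_j}$, this yields $B_{j,\nu_j}(t)=\sum_{i,\ell}\frac{\pi_{i,j,\nu_j}^{(\ell)}\lambda_i b_{i,\ell}}{\Lambda_{j,\nu_j}}\,\mathbb{E}[e^{t\, ST_{i,j,\nu_j}^{(\ell)}}]$. Next, since $ST_{i,j,\nu_j}^{(\ell)}=\sum_{v=1}^{L_i}Y_{j,\nu_j}^{(v,\ell)}$ is a sum of $L_i$ i.i.d.\ shifted-exponential service times, independence gives $\mathbb{E}[e^{t\,ST_{i,j,\nu_j}^{(\ell)}}]=(M_{j,\nu_j}^{(\ell)}(t))^{L_i}$, and substituting \eqref{M_j_t_1} gives the stated closed form.

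The only subtle point is the domain of validity. The MGF of each $Y_{j,\nu_j}^{(v,\ell)}$ exists only for $t<\alpha_{j,\nu_j}^{(\ell)}$, so the stated bound $t<\alpha_{j,\nu_j}$ must be interpreted as $t<\min_{\ell}\alpha_{j,\nu_j}^{(\ell)}$ to guarantee that every term in the finite sum is well defined. Once this is observed, no further estimates are needed and the result follows by at most two lines of algebra; I would not expect any step to pose real difficulty.
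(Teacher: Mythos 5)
Your proof is correct and follows essentially the same route as the paper, which likewise obtains the corollary by substituting $s=-t$ into the Laplace--Stieltjes transform of Lemma~\ref{LJ:servTimeofFile}; your additional direct conditioning argument and the clarification that $t<\alpha_{j,\nu_j}$ should be read as $t<\min_{\ell}\alpha_{j,\nu_j}^{(\ell)}$ are sound refinements but do not change the approach.
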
	
\begin{proof}
This corollary follows from (\ref{eq:servTimeofFile})
 by setting $t=-s$.
\end{proof}

The server utilization for the video files at PS $\nu_j$ of server $j$ is given as $\rho_{j,\nu_j}=\varLambda_{j,\nu_j}\mathbb{E}\left[R_{j,\nu_j}\right]$. Since $\mathbb{E}\left[R_{j,\nu_j}\right] = B_{j,\nu_j}'(0)$, using Lemma \ref{LJ:servTimeofFile}, we have







%

\begin{equation}
\rho_{j,\nu_j}=\sum_{i=1}^{r} \sum_{\ell=1}^{V} \pi_{i,j,\nu_j}^{(\ell)} \lambda_{i} b_{i,\ell}L_{i}\left(\beta_{j,\nu_j}^{(\ell)}+\frac{1}{\alpha_{j,\nu_j}^{(\ell)}}\right)\label{eq:rho_j_1}.
\end{equation}


Having characterized the service time distribution of the video files via a Laplace-Stieltjes Transform $\overline{R}_{j,\nu_j}(s) $, the Laplace-Stieltjes Transform of the waiting time $W_{j,\nu_j}$ can be characterized using Pollaczek-Khinchine formula for M/G/1 queues \cite{zwart2000sojourn}, since the request pattern is Poisson and the service time is general distributed. Thus, the Laplace-Stieltjes Transform of the waiting time $W_{j,\nu_j}$ is given as 

\begin{equation}
\mathbb{E}\left[e^{-sW_{j,\nu_j}}\right]=\frac{\left(1-\rho_{j,\nu_j}\right)s\overline{R}_{j,\nu_j}(s)}{s-\Lambda_{j,\nu_j}\left(1-\overline{R}_{j,\nu_j}(s)\right)}\label{eq:E_W_j_laplace}
\end{equation}


By characterizing the Laplace-Stieltjes Transform of the waiting time $W_{j,\nu_j}$ and knowing the distribution of  $Y_{j,\nu_j}^{(v,\ell)}$, the Laplace-Stieltjes Transform of the download time $D_{i,j,\nu_j}^{(u,\ell)}$ is given as

\begin{equation}
{\mathbb E}[e^{-sD_{i,j,\nu_j}^{(u,\ell)}}] = \frac{\left(1-\rho_{j,\nu_j}\right)s\overline{R}_{j,\nu_j}(s)}{s-\Lambda_{j,\nu_j}\left(1-\overline{R}_{j,\nu_j}(s)\right)}\left( \frac{\alpha_{j,\nu_j}^{(\ell)}\,e^{-\beta_{j,\nu_j}^{(\ell)}s}}{\alpha_{j,\nu_j}^{(\ell)}+s}\right)^u.\label{LapOfE_D_ij}
\end{equation}

We note that the expression above holds only in the range of $s$ when  $s-\Lambda_{j,\nu_j}\left(1-\overline{R}_{j,\nu_j}(s)\right)>0$ and $\alpha_{j,\nu_j}^{(\ell)}+s>0$. Further, the server utilization $\rho_{j,\nu_j} $ must be less than $1$. The overall download time of all the chunks for the segment $G_{i,u,\ell}$ at the client,  $D_{i}^{(u,\ell)}$, is given by 

\begin{equation}
D_{i}^{(u,\ell)} = \max_{(j,\nu_j)\in \mathcal{A}_{i}}  D_{i,j,\nu_j}^{(u,\ell)}. \label{deq}
\end{equation}

\subsection{Play Time of Each Video Segment}
 Let $T_{i}^{\left(u,\ell\right)}$ be the  time at which the segment $G_{i,\ell,u}$ is played (started) at the client. The startup delay of the video is $d_s$. Then, the first segment can be played at the maximum of the time the first segment can be downloaded and the startup delay. Thus, 

\begin{eqnarray}
T_{i}^{(1,\ell)} & = & \mbox{max }\left(d_{s},\,D_{i}^{(1,\ell)}\right).
\label{eq:T_i_qi_1}
\end{eqnarray}

For $1<u\le L_i$, the  play time of  segment $u$ of file $i$ is given by the maximum of the time it takes to download the segment and the time at which the previous segment is played plus the time to play a segment ($\tau$ seconds). Thus, the play time of segment $u$ of file $i$, $T_{i}^{(u,\ell)}$  can be expressed as 
\begin{eqnarray}
T_{i}^{(u,\ell)} & = & \mbox{max }\left(T_{i}^{(u-1,\ell)}+\tau,\,D_{i}^{(u,\ell)}\right).
\label{eq:T_i_qi}
\end{eqnarray}



Equation \eqref{eq:T_i_qi} gives a recursive equation, which can yield

\begin{eqnarray}
T_{i}^{(L_{i},\ell)} & = & \mbox{max }\left(T_{i}^{(L_{i}-1,\ell)}+\tau,\,D_{i}^{(L_{i},\ell)}\right)\nonumber\\
 & = & \mbox{max }\left(T_{i}^{(L_{i}-2,\ell)}+2\tau,\,D_{i}^{(L_{i}-1,\ell)}+\tau,\,D_{i}^{(L_{i},\ell)}\right)\nonumber\\
\!\!& = & \!\!\!\!\mbox{max}\!\left(\!\mathcal{F}_{j,1,\nu_j,\ell},\max_{z=2}^{L_{i}+1}D_{i}^{(z-1,\ell)}+(\!L_{i}\!-z\!+1)\!\tau\right)\ \label{eq:T_{i}{}_{q}i2} \label{eq:T_i_qi2}
\end{eqnarray}
where 
\begin{eqnarray}
\mathcal{F}_{j,z,\nu_j,\ell}=\begin{cases}
d_{s}+\left(L_{i}-1\right)\tau &,\,\, z=1\\
\\
D_{i,j,\nu_j}^{(z-1,\ell)} + (L_i-z+1)\tau \ &,\,\, 2\leq z\leq (L_{i}+1)
\end{cases}.\label{eq:pjz}
\end{eqnarray}

Since $D_{i}^{(u,\ell)} = \max_{(j,\nu_j)\in \mathcal{A}_{i}^{(\ell)}}^{(\ell)}  D_{i,j,\nu_j}^{(u,\ell)}$ from \eqref{deq}, $T_{i}^{(L_{i},\ell)}$ can  be written as

\begin{eqnarray}
T_{i}^{(L_{i},\ell)}= \max_{z=1}^{L_i+1}\max_{(j,\nu_j)\in \mathcal{A}_i}\left(\mathcal{F}_{j,z,\nu_j,\ell}\right)\label{eq:T_i_L_i}. \label{eq:pjstart}
\end{eqnarray}


We next give the moment generating function of $\mathcal{F}_{j,z,\nu_j,\ell}$ that will be used in the calculations of the mean stall duration in the next section.
		
\begin{lemma}
	The moment generating function for $\mathcal{F}_{j,z,\nu_j,\ell}$, is given as
	\begin{equation}
\mathbb{E}\left[e^{t\mathcal{F}_{j,z,\nu_j,\ell}}\right]=\begin{cases}
e^{t\left(d_{s}+\left(L_{i}-1\right)\tau\right)} & ,\,z=1\\
e^{t\left(L_{i}+1-z\right)\tau}Z_{D_{i,j,\nu_j}^{(z-1,\ell)}}\left(t\right) & ,2\leq z\leq L_{i}+1
\end{cases}
\label{eq:momntPjz}
	\end{equation}
where 

\begin{equation}
Z_{D_{i,j,\nu_j}^{(u,\ell)}}\left(t\right) = {\mathbb E}[e^{tD_{i,j,\nu_j}^{(u,\ell)}}]  = \frac{\left(1-\rho_{j,\nu_j}\right)t_{i}B_{j,\nu_j}(t_{i})\left(M_{j,\nu_j}^{(\ell)}(t_{i})\right)^{u}}{t_{i}-\Lambda_{j,\nu_j}\left(B_{j,\nu_j}(t_{i})-1\right)}\label{eq:M_D_ij} 
\end{equation}
	\end{lemma}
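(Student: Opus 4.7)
The plan is to split along the piecewise definition of $\mathcal{F}_{j,z,\nu_j,\ell}$ in \eqref{eq:pjz}. The case $z=1$ is immediate: $\mathcal{F}_{j,1,\nu_j,\ell}=d_{s}+(L_i-1)\tau$ is a deterministic constant, so $\mathbb{E}[e^{t\mathcal{F}_{j,1,\nu_j,\ell}}]=e^{t(d_s+(L_i-1)\tau)}$ by definition. For the case $2\le z\le L_i+1$, the constant shift $(L_i-z+1)\tau$ factors out of the exponential, reducing the problem to showing that $\mathbb{E}[e^{tD_{i,j,\nu_j}^{(z-1,\ell)}}]=Z_{D_{i,j,\nu_j}^{(z-1,\ell)}}(t)$ has the claimed rational form.

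For this reduction I would invoke the decomposition $D_{i,j,\nu_j}^{(u,\ell)}=W_{j,\nu_j}+\sum_{v=1}^{u}Y_{j,\nu_j}^{(v,\ell)}$ from \eqref{dije}. The waiting time $W_{j,\nu_j}$ depends only on the arrivals and services of requests ahead of the tagged file, whereas the $Y_{j,\nu_j}^{(v,\ell)}$ are the service times of the tagged request's chunks; under the M/G/1 / Poisson model of Section~\ref{sec:SysMod} these are independent, and the per-chunk services are i.i.d.\ with common MGF $M_{j,\nu_j}^{(\ell)}(t)$ given by \eqref{M_j_t_1}. Independence therefore gives
\begin{equation}
\mathbb{E}[e^{tD_{i,j,\nu_j}^{(u,\ell)}}]=\mathbb{E}[e^{tW_{j,\nu_j}}]\cdot\bigl(M_{j,\nu_j}^{(\ell)}(t)\bigr)^{u}.
\end{equation}

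To finish, I need to rewrite the waiting-time MGF in terms of $B_{j,\nu_j}$ and $\rho_{j,\nu_j}$. Starting from the Pollaczek--Khinchine expression \eqref{eq:E_W_j_laplace} for the Laplace--Stieltjes transform of $W_{j,\nu_j}$, I would substitute $s\mapsto -t$ and use $\overline{R}_{j,\nu_j}(-t)=B_{j,\nu_j}(t)$ (which is the content of the corollary with $t=-s$). A single sign cleanup, multiplying numerator and denominator by $-1$, yields
\begin{equation}
\mathbb{E}[e^{tW_{j,\nu_j}}]=\frac{(1-\rho_{j,\nu_j})\,t\,B_{j,\nu_j}(t)}{t-\Lambda_{j,\nu_j}\bigl(B_{j,\nu_j}(t)-1\bigr)},
\end{equation}
and multiplying this by $(M_{j,\nu_j}^{(\ell)}(t))^{u}$ produces precisely the claimed $Z_{D_{i,j,\nu_j}^{(u,\ell)}}(t)$.

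The only subtleties I anticipate are bookkeeping rather than conceptual: (i) tracking the sign swap when moving from the LST in $s$ to the MGF in $t$ without introducing spurious minus signs, and (ii) stating the domain of validity, namely $t<\alpha_{j,\nu_j}^{(\ell)}$ (so that $M_{j,\nu_j}^{(\ell)}(t)$ and $B_{j,\nu_j}(t)$ are finite) together with $t-\Lambda_{j,\nu_j}(B_{j,\nu_j}(t)-1)>0$ and $\rho_{j,\nu_j}<1$ (stability of the queue), as already flagged after \eqref{LapOfE_D_ij}. No additional ordered-statistics or cross-server argument is needed here, since the lemma concerns a single PS $(j,\nu_j)$.
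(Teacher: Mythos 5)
Your proposal is correct and follows essentially the same route as the paper: the paper's proof simply substitutes $t=-s$ into the download-time Laplace--Stieltjes transform \eqref{LapOfE_D_ij} (which was itself obtained from the decomposition $D=W+\sum_v Y$ and the Pollaczek--Khinchine formula), exactly the chain you reconstruct. Your statement of the validity range ($t<\alpha_{j,\nu_j}^{(\ell)}$ together with $t-\Lambda_{j,\nu_j}(B_{j,\nu_j}(t)-1)>0$ and $\rho_{j,\nu_j}<1$) is in fact cleaner than the condition printed in the paper's proof.
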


\begin{proof}
This follows by substituting
$t=-s$ in (\ref{LapOfE_D_ij}) and $B_{j,\nu_j}(t_{i})$ is given by  (\ref{eq:servTimeofFileB_j_i}) and $M_{j,\nu_j}^{(\ell)}(t_i)$ is given by (\ref{M_j_t_1}). This expression holds when $t_{i}-\Lambda_{j,\nu_j}\left(B_{j,\nu_j}(t_{i})-1\right)>0$ and $t_i<0  \,\forall j,\nu_j$,  since the moment generating function does not exist if the above do not hold.
\end{proof}

Ideally, the last segment should have started played by time $d_s + (L_i-1) \tau$. The difference between $T_{i}^{(L_i,\ell)}$ and $d_s + (L_i-1) \tau$ gives the stall duration. We note that $T_{i}^{(L_i,\ell)}$ is not the download time of the last segment, but the play time of the last segment and accounts for the download of all the $L_i$ segments. This is a key difference as compared to the file download since the download time of each segment of the video has to be accounted for computing stall duration.   Thus, the stall duration for the request of video file $i$ of quality $\ell$, i.e.,   
$\Gamma^{(i, \ell)}$,  is given as
\begin{equation}
\Gamma^{(i,\ell)} = T_{i}^{(L_i,\ell)} - d_s - (L_i-1) \tau. \label{eq:base}
\end{equation}
In the next  section, we will use this stall time to determine the bound on the mean stall duration of the streamed video.

\section{Mean Stall Duration}
\label{sec:mean}
In this section, we will provide a bound on the mean stall duration for a file $i$. We will find the bound by two-stage probabilistic scheduling  and since this scheduling is one feasible  strategy, the obtained bound is an upper bound to the optimal strategy. Using \eqref{eq:base}, the expected stall time for file $i$ is given as follows
\begin{eqnarray}
\mathbb{E}\left[\Gamma^{(i,\ell)}\right] & = & \mathbb{E}\left[T_{i}^{(L_{i},\ell)}-d_{s}-\left(L_{i}-1\right)\tau\right]\nonumber \\
\nonumber \\
& = & \mathbb{E}\left[T_{i}^{(L_{i},\ell)}\right]-d_{s}-\left(L_{i}-1\right)\tau\label{eq:E_T_s_2}
\end{eqnarray}

Exact evaluation for the play time of segment $L_{i}$ is hard due to the dependencies between  $\mathcal{F}_{j,z,\nu_j,\ell}$ random variables for different values of $j$, $\nu_j$, $z$, and $\ell$, where $z\in{(1,2,...,L_{i}+1)}$ and $(j,\nu_j)\in \mathcal{A}_{i}^{(\ell)}$. Hence, we derive an upper-bound on the playtime of the segment $L_{i}$ as follows. Using Jensen's inequality \cite{kuczma2009introduction}, we have for $t_i>0$, 
 
 \begin{equation}
e^{t_{i}\mathbb{E}\left[T_{i}^{\left(L_{i},\ell\right)}\right]} \leq \mathbb{E}\left[e^{t_{i}T_{i}^{\left(L_{i},\ell\right)}}\right]. \label{eq:jensen}
 \end{equation}
 
 Thus, finding an upper bound on the moment generating function for $T_i^{(L_i,\ell)}$ can lead to an upper bound on the mean stall duration. Thus, we will now bound the moment generating function for $T_i^{(L_i,\ell)}$.

\begin{eqnarray}
 && \mathbb{E}\left[e^{t_{i}T_{i}^{\left(L_{i},\ell\right)}}\right] \overset{(a)}{=}  \mathbb{E}\left[\underset{z}{\mbox{max}}  \,\underset{(j,\nu_j)\in\mathcal{A}_{i}^{(\ell)}}{\mbox{max}}\,e^{t_{i}{\mathcal{F}}_{j,z,\nu_j,\ell}}\right]\\
 & = & \mathbb{E}_{\mathcal{A}_{i}^{(\ell)}}\left[\mathbb{E}\left[\underset{z}{\mbox{max}}\,\underset{(j,\nu_j)\in{\mathcal{A}_{i}}^{(\ell)}}{\mbox{max}}\,e^{t_{i}{\mathcal{F}}_{j,z,\nu_j,\ell}}|\,\mathcal{A}_{i}^{(\ell)}\right]\right]\\
 &\overset{(b)}{\leq} & \mathbb{E}_{\mathcal{A}_{i}^{(\ell)}}\left[\sum_{(j,\nu_j)\in \mathcal{A}_{i}^{(\ell)}}\mathbb{E}\left[\underset{z}{\mbox{max}}\,e^{t_{i}{\mathcal{F}}_{j,z,\nu_j,\ell}}\right]\right]
 \end{eqnarray}
 \begin{eqnarray}
 & = & \mathbb{E}_{\mathcal{A}_{i}^{(\ell)}}\left[\sum_{(j,\nu_j)}F_{i,j,\nu_j,\ell}\mathbf{1}_{\left\{ (j,\nu_j)\in\mathcal{A}_{i}^{(\ell)}\right\} }\right] \\
  & = & \sum_{(j,\nu_j)}F_{i,j,\nu_j,\ell}\,\mathbb{E}_{\mathcal{A}_{i}^{(\ell)}}\left[\mathbf{1}_{\left\{ (j,\nu_j)\in\mathcal{A}_{i}^{(\ell)}\right\} }\right]\\
 & = & \sum_{(j,\nu_j)}F_{i,j,\nu_j,\ell}\,\mathbb{P}\left((j,\nu_j)\in\mathcal{A}_{i}^{(\ell)}\right)\\
 & \overset{(c)}{=} & \sum_{j=1}^m \sum_{\nu_j=1}^{d_j}F_{i,j,\nu_j,\ell}\pi_{i,j,\nu_j}^{(\ell)}
\label{eq:mgf_bound}
\end{eqnarray}
where (a) follows from \eqref{eq:pjstart}, (b) follows by upper bounding $\max_{(j,\nu_j)\in \mathcal{A}_{i}^{(\ell)}} $ by $\sum_{(j,\nu_j)\in \mathcal{A}_{i}^{(\ell)}} $, (c) follows by two-stage probabilistic scheduling where $\mathbb{P}\left((j,\nu_j)\in\mathcal{A}_{i}^{(\ell)}\right) = \pi_{i,j,\nu_j}^{(\ell)}$,   and  $F_{i,j,\nu_j,\ell}\triangleq \mathbb{E}\left[\underset{z}{\mbox{max}}  \, e^{t_{i}\mathcal{F}_{i,z,\nu_j,\ell}}\right]$. We note that the only inequality here is for replacing the maximum by the sum. Since this term will be inside the logarithm for the mean stall latency, the gap between the term and its bound becomes additive rather than multiplicative.

To use  the bound \eqref{eq:mgf_bound}, $F_{i,j,\nu_j,\ell}$ needs to be bounded too. Thus, an upper bound on $F_{i,j,\nu_j,\ell}$  is calculated as follows.
\begin{align}
F_{i,j,\nu_j,\ell} & =\mathbb{E}\left[\underset{z}{\text{max}}\, e^{t_{i}\mathcal{F}_{j,z,\nu_j,\ell}}\right]\nonumber \\
 & \overset{(d)}{\leq}\sum_{z}\mathbb{E}\left[e^{t_{i}\mathcal{F}_{j,z,\nu_j,\ell}}\right]\nonumber \\
 & \overset{(e)}{=}e^{t_{i}(d_{s}+(L_{i}-1)\tau)}+\nonumber \\
 & \sum_{z=2}^{L_{i}+1}\frac{e^{t_{i}\left(L_{i}-z+1\right)\tau}\left(1-\rho_{j,\nu_j}\right)t_{i}B_{j,\nu_j}(t_{i})}{t_{i}-\Lambda_{j,\nu_j}\left(B_{j,\nu_j}(t_{i})-1\right)}\left(\frac{\alpha_{j,\nu_j}^{(\ell)}e^{t_{i}\beta_{j,\nu_j}^{(\ell)}}}{\alpha_{j,\nu_j}^{(\ell)}-t_{i}}\right)^{z-1}\nonumber \\
  & \overset{(f)}{=}e^{t_{i}(d_{s}+(L_{i}-1)\tau)}+\nonumber \\& \sum_{v=1}^{L_{i}}\frac{e^{t_{i}\left(L_{i}-v\right)\tau}\left(1-\rho_{j,\nu_j}\right)t_{i}B_{j,\nu_j}(t_{i})}{t_{i}-\Lambda_{j,\nu_j}\left(B_{j,\nu_j}(t_{i})-1\right)}\left(\frac{\alpha_{j,\nu_j}^{(\ell)} e^{t_{i}\beta_{j,\nu_j}^{(\ell)}}}{\alpha_{j,\nu_j}^{(\ell)}-t_{i}}\right)^{v}\label{eq:F_ij}
\end{align}
where (d) follows by bounding the maximum by the sum, (e) follows from (\ref{eq:momntPjz}), and (f) follows by substituting $v=z-1$. 

Substituting \eqref{eq:mgf_bound} in \eqref{eq:jensen}, we have 

\begin{equation}
\mathbb{E}\left[T_{i}^{(L_{i},\ell)}\right]\leq\frac{1}{t_{i}}\text{log}\left(\sum_{j=1}^{m}\sum_{\nu_j=1}^{d_j}\pi_{i,j,\nu_j}^{(\ell)}F_{i,j, \nu_j,\ell}\right).\label{eq:ET_i}
\end{equation}

Further, substituting the bounds \eqref{eq:F_ij} and \eqref{eq:ET_i} in \eqref{eq:E_T_s_2}, the  mean stall duration is bounded as follows.  

\begin{eqnarray}
 && \mathbb{E}\left[\Gamma^{(i,\ell)}\right]\leq \frac{1}{t_{i}}\text{log}\left(\sum_{j=1}^{m}\sum_{\nu_j=1}^{d_j}\pi_{i,j,\nu_j}^{(\ell)}\left(e^{t_{i}(d_{s}+(L_{i}-1)\tau)}+\right.\right.\nonumber\\
 && \left.\left.\sum_{v=1}^{L_{i}}e^{t_{i}\left(L_{i}-\nu\right)\tau}Z_{D_{{i,j,\nu_j}}}^{(v,\ell)}(t_{i})\right)\right)-\left(d_{s}+\left(L_{i}-1\right)\tau\right)\nonumber
 \end{eqnarray}
 \begin{eqnarray}
&&= \frac{1}{t_{i}}\text{log}\left(\sum_{j=1}^{m}\sum_{\nu_j=1}^{d_j}\pi_{i,j,\nu_j}^{(\ell)}\left(e^{t_{i}(d_{s}+(L_{i}-1)\tau)}+\right.\right.\nonumber\\
 && \left.\left.\sum_{v=1}^{L_{i}}e^{t_{i}\left(L_{i}-v\right)\tau}Z_{D_{{i,j,\nu_j}}}^{(v,\ell)}(t_{i})\right)\right)-\frac{1}{t_{i}}\text{log}\left(e^{t_{i}\left(d_{s}+\left(L_{i}-1\right)\tau\right)}\right) \nonumber\\
&= & \frac{1}{t_{i}}\text{log}\left(\sum_{j=1}^{m}\sum_{\nu_j=1}^{d_j}\pi_{i,j,\nu_j}^{(\ell)}\left(1+\right. \right. \nonumber\\
&& \left. \left.
\sum_{v=1}^{L_{i}}e^{-t_{i}\left(d_{s}+\left(v-1\right)\tau\right)}Z_{D_{i,j,\nu_j}}^{(v,\ell)}(t_{i})\right)\right),
\label{eq:ET_s_i}
\end{eqnarray}
where $Z_{D_{{i,j,\nu_j}}}^{(v,\ell)}(t_{i}) \triangleq \frac{\left(1-\rho_{j,\nu_j}\right)t_{i}B_{j,\nu_j}(t_{i})}{t_{i}-\Lambda_{j,\nu_j}\left(B_{j,\nu_j}(t_{i})-1\right)}\left(\frac{\alpha_{j,\nu_j}^{(\ell)} e^{t_{i}\beta_{j,\nu_j}^{(\ell)}}}{\alpha_{j,\nu_j}^{(\ell)}-t_{i}}\right)^{v}$. Let \\ $H_{i,j,\nu_j,\ell}=\sum_{v=1}^{L_{i}}e^{-t_{i}\left(d_{s}+\left(v-1\right)\tau\right)}Z_{D_{{i,j,\nu_j}}}^{(v,\ell)}(t_{i})$, which is the inner summation in  (\ref{eq:ET_s_i}). $H_{i,j,\nu_j,\ell}$ can be simplified using the geometric series formula to obtain

\begin{eqnarray}
  H_{i,j,\nu_{j},\ell} & = &\sum_{v=1}^{L_{i}}\left(\frac{e^{-t_{i}\left(d_{s}+\left(v-1\right)\tau\right)}\left(1-\rho_{j,\nu_{j}}\right)t_{i}B_{j,\nu_{j}}(t_{i})}{t_{i}-\Lambda_{j,\nu_{j}}\left(B_{j,\nu_{j}}(t_{i})-1\right)}\nonumber\right.\times\\
 && \left.\left(\frac{\alpha_{j,\nu_{j}}^{(\ell)}e^{t_{i}\beta_{j,\nu_{j}}^{(\ell)}}}{\alpha_{j,\nu_{j}}^{(\ell)}-t_{i}}\right)^{v}\right)\nonumber\\
 & =&\frac{e^{-t_{i}d_{s}}\left(1-\rho_{j,\nu_{j}}\right)t_{i}B_{j,\nu_{j}}(t_{i})}{t_{i}-\Lambda_{j,\nu_{j}}\left(B_{j,\nu_{j}}(t_{i})-1\right)}\times\nonumber\\
 && \sum_{v=1}^{L_{i}}\left(e^{-t_{i}\left(v-1\right)\tau}\left(\frac{\alpha_{j,\nu_{j}}^{(\ell)}e^{t_{i}\beta_{j,\nu_{j}}^{(\ell)}}}{\alpha_{j,\nu_{j}}^{(\ell)}-t_{i}}\right)^{v}\right)\nonumber\\
 & =&\frac{e^{-t_{i}\left(d_{s}-\tau\right)}\left(1-\rho_{j,\nu_{j}}\right)t_{i}B_{j,\nu_{j}}(t_{i})}{t_{i}-\Lambda_{j,\nu_{j}}\left(B_{j,\nu_{j}}(t_{i})-1\right)}\times\nonumber\\
 && \left[\frac{\widetilde{M}_{j,\nu_{j}}^{(\ell)}(t_{i})\left(1-\left(\widetilde{M}_{j,\nu_{j}}^{(\ell)}(t_{i})\right)^{L_{i}}\right)}{1-\widetilde{M}_{j,\nu_{j}}^{(\ell)}(t_{i})}\right]
\label{eq:H}\end{eqnarray}where \begin{equation}\widetilde{M}_{j,\nu_j}^{(\ell)}(t_{i})=M_{j,\nu_j}^{(\ell)}(t_{\text{i}})e^{-t_{i}\tau},\label{eq:tildem}\end{equation} $M_{j,\nu_j}^{(\ell)}(t_i)$ is given in \eqref{M_j_t_1}, and $B_{j,\nu_j}(t_i)$ is given in \eqref{eq:servTimeofFileB_j_i}. 


\begin{theorem}
The mean stall duration time for file $i$ streamed with quality  $\ell$ is bounded by 

\begin{equation}
\mathbb{E}\left[\Gamma^{(i,\ell)}\right]\leq\frac{1}{t_{i}}\text{log}\left(\sum_{j=1}^{m}\sum_{\nu_j=1}^{d_j} \pi_{i,j,\nu_j}^{(\ell)}\left(1+H_{i,j,\nu_j,\ell}\right)\right)\label{eq:T_s_main_stall}
\end{equation}
for any $t_{i}>0$, $\rho_{j,\nu_j}=\sum_{i,\ell}\pi_{i,j,\nu_j}^{(\ell)}\lambda_{i}b_{i,\ell}L_{i}\left(\beta_{j,\nu_j}^{(\ell)}+\frac{1}{\alpha_{j,\nu_j}^{(\ell)}}\right)$, $\rho_{j,\nu_j}<1,\,\text{and }$ \\
$\,\sum_{f=1}^r\sum_{\ell=1}^V\pi_{f,j,\nu_j}^{(\ell)}\lambda_{f}b_{f,\ell}\left(\frac{\alpha_{j,\nu_j}^{(\ell)}e^{-\beta_{j,\nu_j}^{(\ell)}t_{i}}}{\alpha_{j,\nu_j}^{(\ell)}-t_{i}}\right)^{L_{f}}-\left(\Lambda_{j,\nu_j}+t_{i}\right)<0,\,\forall j,\nu_j$.\label{meanthm}
\end{theorem}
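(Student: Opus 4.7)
The plan is to build the bound on $\mathbb{E}[\Gamma^{(i,\ell)}]$ by starting from the identity $\Gamma^{(i,\ell)} = T_i^{(L_i,\ell)} - d_s - (L_i-1)\tau$ and replacing the direct computation of $\mathbb{E}[T_i^{(L_i,\ell)}]$ (which is intractable due to the dependencies between the $\mathcal{F}_{j,z,\nu_j,\ell}$ random variables) with an MGF-based bound via Jensen's inequality. Concretely, for any $t_i > 0$, Jensen gives $e^{t_i \mathbb{E}[T_i^{(L_i,\ell)}]} \le \mathbb{E}[e^{t_i T_i^{(L_i,\ell)}}]$, so I only need to upper-bound $\mathbb{E}[e^{t_i T_i^{(L_i,\ell)}}]$, then take logs and divide by $t_i$.

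Second, I would use the representation \eqref{eq:pjstart}, namely $T_i^{(L_i,\ell)} = \max_{z} \max_{(j,\nu_j)\in \mathcal{A}_i^{(\ell)}} \mathcal{F}_{j,z,\nu_j,\ell}$, and then apply the maximum-by-sum inequality twice: once to move the inner maximum over $(j,\nu_j)$ into a sum over the chosen set $\mathcal{A}_i^{(\ell)}$, and once inside $F_{i,j,\nu_j,\ell} = \mathbb{E}[\max_z e^{t_i \mathcal{F}_{j,z,\nu_j,\ell}}]$ to bound it by $\sum_z \mathbb{E}[e^{t_i \mathcal{F}_{j,z,\nu_j,\ell}}]$. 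The key algebraic step is that the outer expectation over the random set $\mathcal{A}_i^{(\ell)}$ produces the marginal probabilities $\mathbb{P}((j,\nu_j)\in \mathcal{A}_i^{(\ell)}) = \pi_{i,j,\nu_j}^{(\ell)}$ from the two-stage probabilistic scheduling, yielding the prefactor $\sum_{j,\nu_j}\pi_{i,j,\nu_j}^{(\ell)}$.

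Third, I would substitute the closed-form expression for $\mathbb{E}[e^{t_i \mathcal{F}_{j,z,\nu_j,\ell}}]$ from \eqref{eq:momntPjz}, splitting the $z=1$ boundary term ($e^{t_i(d_s+(L_i-1)\tau)}$) from the sum over $z=2,\ldots,L_i+1$, and then re-indexing $v=z-1$. After forming the resulting sum, the $e^{t_i(d_s+(L_i-1)\tau)}$ term cancels cleanly with the $-(d_s+(L_i-1)\tau)$ that appears from $\mathbb{E}[\Gamma^{(i,\ell)}]$ once the whole expression is put inside $(1/t_i)\log(\cdot)$. This cancellation is what produces the neat $1 + H_{i,j,\nu_j,\ell}$ factor inside the logarithm. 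The remaining sum over $v$ is geometric in $\widetilde{M}_{j,\nu_j}^{(\ell)}(t_i) = M_{j,\nu_j}^{(\ell)}(t_i) e^{-t_i \tau}$, which gives the stated closed-form for $H_{i,j,\nu_j,\ell}$ in \eqref{eq:H}.

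The main obstacle is bookkeeping the conditions under which every step is valid: the MGF $M_{j,\nu_j}^{(\ell)}(t_i)$ requires $t_i < \alpha_{j,\nu_j}^{(\ell)}$; the Pollaczek--Khinchine form behind $Z_{D_{i,j,\nu_j}}^{(v,\ell)}(t_i)$ requires stability $\rho_{j,\nu_j} < 1$ and positivity of the denominator $t_i - \Lambda_{j,\nu_j}(B_{j,\nu_j}(t_i) - 1) > 0$; and Jensen's inequality in the direction used requires $t_i > 0$. I would therefore track these constraints through each substitution, expressing the last inequality explicitly in terms of the sum over files and qualities using \eqref{eq:servTimeofFileB_j_i}, which gives exactly the condition $\sum_{f,\ell}\pi_{f,j,\nu_j}^{(\ell)}\lambda_f b_{f,\ell}\bigl(\alpha_{j,\nu_j}^{(\ell)} e^{-\beta_{j,\nu_j}^{(\ell)}t_i}/(\alpha_{j,\nu_j}^{(\ell)}-t_i)\bigr)^{L_f} - (\Lambda_{j,\nu_j}+t_i) < 0$ listed in the theorem. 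Putting the cancellations, the geometric sum \eqref{eq:H}, and the domain conditions together yields the claimed bound.
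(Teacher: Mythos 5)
Your proposal is correct and follows essentially the same route as the paper: Jensen's inequality on the MGF of $T_i^{(L_i,\ell)}$, the max-by-sum bound applied twice (over $(j,\nu_j)\in\mathcal{A}_i^{(\ell)}$ and over $z$), the marginalization over the random set yielding $\pi_{i,j,\nu_j}^{(\ell)}$, the cancellation of the $e^{t_i(d_s+(L_i-1)\tau)}$ term against $-(d_s+(L_i-1)\tau)$ inside the logarithm to produce the $1+H_{i,j,\nu_j,\ell}$ factor, and the geometric-series closed form for $H$ with the same existence conditions on $t_i$. No gaps.
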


Note that Theorem above holds only in the range of $t_i$ when  $t_i-\Lambda_{j,\nu_j}\left(B_{j,\nu_j}(t_i)-1\right)>0$ which reduces to \\
$\,\sum_{f=1}^r\sum_{\ell=1}^V\pi_{f,j,\nu_j}^{(\ell)}\lambda_{f}b_{f,\ell}\left(\frac{\alpha_{j,\nu_j}^{(\ell)}e^{-\beta_{j,\nu_j}^{(\ell)}t_{i}}}{\alpha_{j,\nu_j}^{(\ell)}-t_{i}}\right)^{L_{f}}-\left(\Lambda_{j,\nu_j}+t_{i}\right)<0,\,\forall i, j,\nu_j$,
and $\alpha_{j,\nu_{j}}-t_i>0$. Further, the server utilization $\rho_{j,\nu_j} $ must be less than $1$ for stability of the system.


\section{Optimization Problem Formulation and Proposed Algorithm} \label{sec:probForm}

\subsection{Problem Formulation}

Let $\boldsymbol{q}=(q_{i,j}^{(\ell)} \forall i=1,\ldots,r, j=1,\cdots,m, \ell=1,\ldots,V)$, $\boldsymbol{b}=(b_{i,\ell}, \forall i=1,\cdots,r, l=1, \cdots, V)$,  $\boldsymbol{w}=\left(w_{j,\nu_j} \forall j=1,\cdots,m, \nu_j = 1, \cdots, d_j\right)$,  $\boldsymbol{p}=\left(p_{j,\nu_j}^{(\ell)} \forall j=1,\cdots,m, \ \nu_j = 1, \cdots d_j, \ell = 1, \cdots, V \right)$, and \\
 $\boldsymbol{t}=\left(t_{1},t_{2},\cdots,t_{r}\right)$.  We wish to minimize the two proposed QoE metrics over the choice of two-stage probabilistic scheduling parameters, bandwidth allocation, probability of the quality of the streamed video and auxiliary variables. Since this is a multi-objective optimization, the objective can be modeled as a convex combination of the two QoE metrics.

Let $\overline{\lambda}=\sum_{i}\lambda_{i}$ be the total arrival rate of file $i$. Then, $\lambda_{i}/\overline{\lambda}$ is the ratio of video $i$ requests. The first objective is the minimization of the mean stall duration, averaged over all the file requests, and is given as $\sum_{i,\ell}\frac{\lambda_{i}}{\overline{\lambda}}\,\mathbb{E}\left[\Gamma^{\left(i,\ell\right)}\right]$. The second objective is  maximizing the streamed quality of all video requests, averaged over all the file requests, and is given as $\sum_{i,\ell}\frac{\lambda_i} {\overline{\lambda}} L_i b_{i,\ell} a_{\ell}$. Using the expressions for the mean stall duration in Section \ref{sec:mean} and the average streamed quality, optimization of a convex combination of the two QoE metrics can be formulated as follows.

\begin{align}
\ensuremath{\text{min}\,\,\,\sum_{i=1}^r\frac{\lambda_{i}}{\overline{\lambda}_{i}}\left[\theta\left(\sum_{\ell=1}^V-b_{i,\ell}L_{i}a_{\ell}\right)+\left(1-\theta\right)\right.\nonumber}\\
\left.\sum_{\ell}\frac{b_{i,\ell}}{t_{i}}\text{log}\left(\sum_{j=1}^m\sum_{\nu_{j}=1}^{d_j}q_{i,j}^{(\ell)}p_{j,\nu_{j}}^{(\ell)}(1+H_{i,j,\nu_{j},\ell})\right)\right]
\label{eq:joint_otp_prob}
\end{align}
s.t.  \eqref{eq:H}, \eqref{eq:tildem}, \eqref{M_j_t_1}, \eqref{eq:servTimeofFileB_j_i}, \eqref{eq:rho_j_1}, \eqref{eq:pi_i_j_nu}, \eqref{eq:alphabeta}, 

	\begin{eqnarray}
&  & \rho_{j,\nu_j}<1  \ \ \    \forall j, \nu_j \label{eq:rho_j}\\
	&  & \varLambda_{j,\nu_j}=\sum_{f=1}^r\sum_{\ell=1}^{V}\lambda_{f} b_{f,\ell}q_{i,j}^{(\ell)}p_{j,\nu_{j}}^{(\ell)}\,\,\,\,\,\forall j, \nu_j\label{eq:Lambda_j}\\
&  & \sum_{j=1}^{m}q_{i,j}^{(\ell)}=k_{i}\, ,\,\forall i,\,\ell\,\,\label{eq:sum_ij}
		\end{eqnarray}
\begin{eqnarray}
	&  & \mbox{ \ensuremath{q_{i,j}^{(\ell)}}=0}\,\,\,\mbox{if \ensuremath{j\notin S_{i}^{(\ell)}}}\,,\ensuremath{q_{i,j}^{(\ell)}}\in\left[0,1\right] \label{eq:pij}\\
	&  & \sum_{\nu_j}p_{j,\nu_j}^{(\ell)} = 1,\,\, p_{j,\nu_j}^{(\ell)}\geq0,\,\,\forall j, \nu_j,\ell, 
	\label{eq:sum_p_nuj}\\
		&  & \sum_{\ell}b_{i,\ell}=1,\,b_{i,\ell}\geq0,\,\,\forall i,\ell
	\label{eq:bil}\\
	&  & 0\leq w_{j,\nu_j} \leq 1,\,\,\forall j, \nu_j
	\label{eq:w_alpha}\\
		&  & \sum_{\nu_j}w_{j,\nu_j} \leq 1,\,\,\forall j, 
	\label{eq:sum_w_nuj}\\
	&  & 0<{t}_{i}<\alpha_{j,\nu_j}^{(\ell)},\,\forall i, j,\ell,\nu_j \label{eq:t_i_alpha_j2}\\
	&  & \alpha_{j,\nu_j}^{(\ell)}\left(e^{(\beta_{j,\nu_j}^{(\ell)}-\tau){t}_i}-1\right)+{t}_i<0\,,\forall i, j, \nu_j,  \ell \label{M_telda_less_1} \\
	&  & \sum_{f=1}^r\sum_{\ell=1}^{V}q_{f,j}^{(\ell)}p_{j,\nu_{j}}^{(\ell)}b_{f,\ell}\lambda_{f}\left(\frac{\alpha_{j,\nu_j}^{(\ell)}e^{\beta_{j,\nu_j}^{(\ell)}{t}_{i}}}{\alpha_{j,\nu_j}^{(\ell)}-{t}_{i}}\right)^{L_{f}}-\nonumber\\
	&&\left(\Lambda_{j,\nu_j}+{t}_{i}\right)<0,\,\forall i, j,\nu_j\label{eq:don_pos_cond2}\\
&  & 	\mbox{var.} \ \ \   \boldsymbol{q}, \boldsymbol{t}, \boldsymbol{b}, 	\boldsymbol{w}, 		\boldsymbol{p}
		\label{eq:vars}
		\end{eqnarray}

Here, $\theta\in [0,1]$ is a trade-off factor that determines the relative significance of the mean stall duration and the average streamed quality in the minimization problem. Varying $\theta=0$ to $\theta=1$, the solution for
(\ref{eq:joint_otp_prob}) spans the solutions that  maximize the video quality to those minimizing the mean stall duration. The equations \eqref{eq:H}, \eqref{eq:tildem}, \eqref{M_j_t_1}, \eqref{eq:servTimeofFileB_j_i}, \eqref{eq:rho_j_1}, \eqref{eq:pi_i_j_nu}, and \eqref{eq:alphabeta} give the terms in the objective function. The  constraint (\ref{eq:rho_j}) indicates that the  load intensity of server $j$ is less than $1$. Equation (\ref{eq:Lambda_j}) gives the aggregate arrival rate $\Lambda_j$ for each node. Constraints \eqref{eq:sum_ij}, \eqref{eq:pij}, and \eqref{eq:sum_p_nuj}  guarantee that the two-stage scheduling probabilities are feasible. Constraint \eqref{eq:bil}  guarantees that the quality assignment probabilities are feasible and \eqref{eq:sum_w_nuj} is for bandwidth splitting among different streams. Constraints \eqref{eq:t_i_alpha_j2}, (\ref{M_telda_less_1}), and (\ref{eq:don_pos_cond2})  ensure that $\widetilde{M}_{j}({t})$ and the moment generating function given in (\ref{eq:M_D_ij}) exist. In the next subsection, we will describe the proposed algorithm for this optimization problem.

%


%
%
%
%





\subsection{Proposed Algorithm}\label{sec:algo}

The mean stall duration optimization problem given in  \eqref{eq:joint_otp_prob}-\eqref{eq:vars}  is optimized over five set of variables:  server scheduling probabilities $\boldsymbol{q}$, PS selection probabilities $\boldsymbol{p}$,  auxiliary parameters $\boldsymbol{t}$, video quality parameters $\boldsymbol{b}$, and bandwidth allocation weights $\boldsymbol{w}$. We first note that the problem is non-convex in all the parameters jointly, which can be easily seen in the terms which are product of the different variables. Since the problem is non-convex, we propose an iterative algorithm to solve the problem. The proposed algorithm divides the problem into five sub-problems that optimize one variable while fixing the remaining four. The five sub-problems are labeled as   (i) Server Access Optimization: optimizes  $\boldsymbol{q}$, for given $\boldsymbol{p}$, $\boldsymbol{t}$, $\boldsymbol{b}$ and $\boldsymbol{w}$,  (ii) PS Selection Optimization: optimizes  $\boldsymbol{p}$, for given $\boldsymbol{q}$, $\boldsymbol{t}$, $\boldsymbol{b}$ and $\boldsymbol{w}$,  (iii) Auxiliary Variables Optimization: optimizes  $\boldsymbol{t}$ for given $\boldsymbol{q}$, $\boldsymbol{p}$,  $\boldsymbol{b}$ and $\boldsymbol{w}$,  and (iv) Video Quality Optimization: optimizes  $\boldsymbol{b}$ for given $\boldsymbol{q}$, $\boldsymbol{p}$, $\boldsymbol{t}$, and $\boldsymbol{w}$, and (v) Bandwidth Allocation Optimization: optimizes  $\boldsymbol{w}$ for given $\boldsymbol{q}$, $\boldsymbol{p}$, $\boldsymbol{t}$, and $\boldsymbol{b}$. The algorithm is summarized as follows. 

%
%

\begin{enumerate}[leftmargin=0cm,itemindent=.5cm,labelwidth=\itemindent,labelsep=0cm,align=left]
\item \textbf{Initialization}: Initialize $\boldsymbol{t}$,\,$\boldsymbol{b}$, $\boldsymbol{w}$, $\boldsymbol{p}$,
and $\boldsymbol{q}$ in the feasible set. 
\item \textbf{While Objective Converges}
\begin{enumerate}

\item  Run Server Access Optimization using current values of $\boldsymbol{p}$, $\boldsymbol{t}$, $\boldsymbol{b}$,
and $\boldsymbol{w}$ to get new values of $\boldsymbol{q}$
\item Run PS Selection Optimization using current values of $\boldsymbol{q}$, $\boldsymbol{t}$, $\boldsymbol{b}$,
and $\boldsymbol{w}$ to get new values of $\boldsymbol{p}$
\item  Run Auxiliary Variables Optimization using current values of $\boldsymbol{q}$, $\boldsymbol{p}$, $\boldsymbol{b}$,
and $\boldsymbol{w}$ to get new values of $\boldsymbol{t}$
\item  Run Streamed Quality   Optimization using current values
of $\boldsymbol{q}$,  $\boldsymbol{p}$,  $\boldsymbol{t}$, and  $\boldsymbol{w}$ to get new values of $\boldsymbol{b}$.
\item  Run Bandwidth Allocation Optimization using current values
of $\boldsymbol{q}$, $\boldsymbol{p}$, $\boldsymbol{t}$, and $\boldsymbol{b}$ to get new values of $\boldsymbol{w}$. 
\end{enumerate}
\end{enumerate}

We next describe the five sub-problems along with the proposed solutions for the sub-problems.

\subsubsection{Server Access Optimization}

Given the probability distribution of the streamed video quality, the bandwidth allocation weights, the PS selection probabilities, and the auxiliary variables, this subproblem can be written as follows. 

\textbf{Input: $\boldsymbol{t}$, $\boldsymbol{b}$, $\boldsymbol{p}$, and  $\boldsymbol{w}$}

\textbf{Objective:} $\qquad\quad\;\;$min $\left(\ref{eq:joint_otp_prob}\right)$

$\hphantom{\boldsymbol{\text{Objective:}\,}}\qquad\qquad$s.t. \eqref{eq:rho_j}, \eqref{eq:Lambda_j}, \eqref{eq:sum_ij}, 
\eqref{eq:pij},  \eqref{eq:don_pos_cond2}

$\hphantom{\boldsymbol{\text{Objective:}\,}}\qquad\qquad$var. $\boldsymbol{q}$


In order to solve this problem, we have used iNner cOnVex
Approximation (NOVA)  algorithm proposed in \cite{scutNOVA} to solve this sub-problem. The key idea for this algorithm is that the non-convex objective function is replaced by suitable convex approximations at which convergence to a stationary solution of the original non-convex optimization is established. NOVA solves the approximated function efficiently and maintains feasibility in each iteration. The objective function can be approximated by a convex one (e.g., proximal gradient-like approximation) such that the first order properties are preserved \cite{scutNOVA},  and this convex approximation can be used in NOVA algorithm.

Let $\widetilde{U_q}\left(\boldsymbol{q};\boldsymbol{q^\nu}\right)$ be the
convex approximation at iterate $\boldsymbol{q^\nu}$ to the original non-convex problem $U\left(\boldsymbol{q}\right)$, where $U\left(\boldsymbol{q}\right)$ is given by (\ref{eq:joint_otp_prob}). Then, a valid choice of  $\widetilde{U_q}\left(\boldsymbol{q};\boldsymbol{q^\nu}\right)$ is the first order approximation of  $U\left(\boldsymbol{q}\right)$, e.g., (proximal) gradient-like approximation, i.e.,  
\begin{equation}
\widetilde{U_q}\left(\boldsymbol{q},\boldsymbol{q^\nu}\right)=\nabla_{\boldsymbol{q}}U\left(\boldsymbol{q^\nu}\right)^{T}\left(\boldsymbol{q}-\boldsymbol{q^\nu}\right)+\frac{\tau_{u}}{2}\left\Vert \boldsymbol{q}-\boldsymbol{q^\nu}\right\Vert ^{2},\label{eq:U_x_u_bar}
\end{equation}
where $\tau_u$ is a regularization parameter. Note that all the constraints \eqref{eq:rho_j}, \eqref{eq:Lambda_j}, \eqref{eq:sum_ij},  \eqref{eq:pij},  and \eqref{eq:don_pos_cond2} are linear in $\boldsymbol{q_{i,j}}$.  The NOVA Algorithm for optimizing $\boldsymbol{q}$ is described in  Algorithm \ref{alg:NOVA_Alg1Pi} (given in Appendix \ref{apdx_table}). Using the convex approximation $\widetilde{U_q}\left(\boldsymbol{q};\boldsymbol{q^\nu}\right)$, the minimization steps in Algorithm \ref{alg:NOVA_Alg1Pi} are convex, with linear constraints and thus can be solved using a projected gradient descent algorithm. A step-size ($\gamma
$)  is also used in the update of the iterate $\boldsymbol{q}^{\nu}$. Note that the iterates $\left\{ \boldsymbol{q}^{(\nu)}\right\} $ generated by the algorithm are all feasible for the original problem and, further, convergence is guaranteed, as shown in \cite{scutNOVA} and described in   lemma \ref{lem_pi}. 

In order to use NOVA, there are some assumptions (given in \cite{scutNOVA}) that have to be satisfied in both original function and its approximation. These assumptions can be classified into two categories. The first category is the set of conditions that ensure that the original problem and its constraints are continuously differentiable on the domain of the function, which are satisfied in our problem. The second category is the set of conditions that ensures that the approximation of the original problem is uniformly strongly convex on the domain of the function. The latter set of conditions are also satisfied as the chosen function is strongly convex and its domain is also convex. To see this, we need to show that the constraints \eqref{eq:rho_j}, \eqref{eq:Lambda_j}, \eqref{eq:sum_ij}, 
\eqref{eq:pij},  \eqref{eq:don_pos_cond2} form a convex domain in $\mathbf{q}$ which is easy to see from the linearity of the constraints in $\mathbf{q}$. Further details on the assumptions and function approximation can be found in \cite{scutNOVA}. Thus, the following result holds.

\begin{lemma}  \label{lem_pi}
For fixed $\boldsymbol{b}$, $\boldsymbol{p}$, $\boldsymbol{w}$, and  $\boldsymbol{t}$, the optimization of our problem over 
$\boldsymbol{q}$ generates a sequence of  decreasing
objective values and therefore is guaranteed to converge to a stationary point.
\end{lemma}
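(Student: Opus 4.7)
The plan is to verify that the sub-problem in $\boldsymbol{q}$ (with all other variables held fixed) satisfies the hypotheses required by the NOVA convergence result of \cite{scutNOVA}, and then invoke that result directly. The structure of the argument naturally splits into three stages: (i) check the feasible set in $\boldsymbol{q}$ is closed and convex; (ii) check that the original objective $U(\boldsymbol{q})$ is continuously differentiable on this set; and (iii) check that the chosen surrogate $\widetilde{U_q}(\boldsymbol{q};\boldsymbol{q}^\nu)$ in \eqref{eq:U_x_u_bar} is uniformly strongly convex in $\boldsymbol{q}$ and matches $U$ to first order at $\boldsymbol{q}^\nu$.

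For stage (i), I would go through the constraints that still involve $\boldsymbol{q}$ after the other variables are frozen, namely \eqref{eq:rho_j}, \eqref{eq:Lambda_j}, \eqref{eq:sum_ij}, \eqref{eq:pij}, and \eqref{eq:don_pos_cond2}. With $\boldsymbol{b}$, $\boldsymbol{p}$, $\boldsymbol{w}$, $\boldsymbol{t}$ fixed, the expressions $\pi_{i,j,\nu_j}^{(\ell)} = q_{i,j}^{(\ell)}p_{j,\nu_j}^{(\ell)}$, $\Lambda_{j,\nu_j}$, and $\rho_{j,\nu_j}$ are all linear in $\boldsymbol{q}$; the simplex-type constraints \eqref{eq:sum_ij}, \eqref{eq:pij} are obviously linear; and the ``existence of MGF'' constraint \eqref{eq:don_pos_cond2} reduces, for fixed $\boldsymbol{t}$, $\boldsymbol{b}$, $\boldsymbol{p}$, $\boldsymbol{w}$, to a linear inequality in the $q_{f,j}^{(\ell)}$'s (since the bracketed exponential factor no longer depends on $\boldsymbol{q}$). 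Hence the feasible region is a (non-empty) polytope, which is closed and convex.

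For stage (ii), I would argue that the only places $\boldsymbol{q}$ enters the objective \eqref{eq:joint_otp_prob} are through the $\pi_{i,j,\nu_j}^{(\ell)}$ that appear (a) multiplying $(1+H_{i,j,\nu_j,\ell})$ inside the logarithm, and (b) inside $\rho_{j,\nu_j}$ and $\Lambda_{j,\nu_j}$ which in turn appear in $H_{i,j,\nu_j,\ell}$. Because the logarithm is taken of a strictly positive convex combination (the argument is bounded away from zero throughout the feasible region in view of constraints \eqref{eq:rho_j} and \eqref{eq:don_pos_cond2}, which respectively keep $\rho_{j,\nu_j}<1$ and the denominator $t_i-\Lambda_{j,\nu_j}(B_{j,\nu_j}(t_i)-1)$ positive), $U(\boldsymbol{q})$ is $C^1$ on the feasible set; this gives access to $\nabla_{\boldsymbol{q}} U(\boldsymbol{q}^\nu)$ used in \eqref{eq:U_x_u_bar}.

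For stage (iii), the surrogate $\widetilde{U_q}(\boldsymbol{q};\boldsymbol{q}^\nu) = \nabla_{\boldsymbol{q}}U(\boldsymbol{q}^\nu)^T(\boldsymbol{q}-\boldsymbol{q}^\nu)+\tfrac{\tau_u}{2}\|\boldsymbol{q}-\boldsymbol{q}^\nu\|^2$ is, for any $\tau_u>0$, $\tau_u$-strongly convex in $\boldsymbol{q}$; moreover, its gradient at $\boldsymbol{q}=\boldsymbol{q}^\nu$ equals $\nabla_{\boldsymbol{q}}U(\boldsymbol{q}^\nu)$, so the first-order consistency condition of \cite{scutNOVA} holds by construction. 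With these three ingredients in place, the NOVA convergence theorem applies verbatim: the iterates $\{\boldsymbol{q}^{(\nu)}\}$ produced by Algorithm~\ref{alg:NOVA_Alg1Pi} remain feasible at every step (since the inner minimization is performed over the convex feasible set and the step-size $\gamma$ is chosen so convex combinations stay feasible), the sequence $\{U(\boldsymbol{q}^{(\nu)})\}$ is monotonically non-increasing, and every limit point is a stationary point of $U$ on the feasible set. The main (minor) obstacle is the bookkeeping in stage (ii), namely confirming that every nonlinearity in the objective that is usually problematic (the logarithm, the $(1-\rho_{j,\nu_j})$ factor, the rational form of $Z_{D_{i,j,\nu_j}}^{(v,\ell)}$) remains smooth on the feasible region — but this is secured by the strict inequalities in \eqref{eq:rho_j}, \eqref{eq:t_i_alpha_j2}, \eqref{M_telda_less_1}, \eqref{eq:don_pos_cond2}, which are preserved throughout the NOVA iterations.
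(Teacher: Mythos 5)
Your proposal is correct and follows essentially the same route as the paper: the paper's justification (given in the paragraph preceding the lemma) likewise consists of verifying the NOVA assumptions from \cite{scutNOVA} — continuous differentiability of the objective and constraints, strong convexity of the quadratic surrogate in \eqref{eq:U_x_u_bar}, and convexity of the feasible set via linearity of \eqref{eq:rho_j}, \eqref{eq:Lambda_j}, \eqref{eq:sum_ij}, \eqref{eq:pij}, and \eqref{eq:don_pos_cond2} in $\boldsymbol{q}$ — and then invoking the NOVA convergence guarantee. Your write-up is somewhat more explicit than the paper's (in particular about why the logarithm and the rational terms remain smooth on the feasible region), but the substance is identical.
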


\subsubsection{Auxiliary Variables Optimization}
Given the probability distribution of the streamed video quality, the bandwidth allocation weights, the PS selection probabilitites and the server scheduling probabilities, this subproblem can be written as follows. 

\textbf{Input: $\boldsymbol{q}$, $\boldsymbol{p}$, $\boldsymbol{b}$, and $\boldsymbol{w}$ }

\textbf{Objective:} $\qquad\quad\;\;$min $\left(\ref{eq:joint_otp_prob}\right)$

$\hphantom{\boldsymbol{\text{Objective:}\,}}\qquad\qquad$s.t.  \eqref{eq:t_i_alpha_j2}, \eqref{M_telda_less_1}, 
 \eqref{eq:don_pos_cond2}

$\hphantom{\boldsymbol{\text{Objective:}\,}}\qquad\qquad$var. $\boldsymbol{t}$

Similar to Access Optimization, this optimization can be solved using NOVA algorithm. The constraint  \eqref{eq:t_i_alpha_j2} is linear in $\boldsymbol{t}$. Further, the next two Lemmas show that the constraints  \eqref{M_telda_less_1} and  \eqref{eq:don_pos_cond2} are convex in $\boldsymbol{t}$, respectively. 

\begin{lemma}
The  constraint \eqref{M_telda_less_1} is convex with respect to  $\boldsymbol{{t}}$. 
\end{lemma}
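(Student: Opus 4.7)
The plan is to observe first that the left-hand side of \eqref{M_telda_less_1} is separable across the indices $i$: for each fixed $(j,\nu_j,\ell)$, the constraint couples only to the single scalar $t_i$. Thus the constraint set in $\boldsymbol{t}$ is an intersection (over $i,j,\nu_j,\ell$) of one-dimensional sublevel sets, and convexity of the whole set in $\boldsymbol{t}$ reduces to showing that, for each fixed triple $(j,\nu_j,\ell)$ and each $i$, the scalar function
\begin{equation}
g(t_i) \;\triangleq\; \alpha_{j,\nu_j}^{(\ell)}\left(e^{(\beta_{j,\nu_j}^{(\ell)}-\tau)t_i}-1\right)+t_i
\end{equation}
is convex in $t_i$.

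To verify convexity, I would simply differentiate twice. The first derivative is
\begin{equation}
g'(t_i) \;=\; \alpha_{j,\nu_j}^{(\ell)}\,(\beta_{j,\nu_j}^{(\ell)}-\tau)\,e^{(\beta_{j,\nu_j}^{(\ell)}-\tau)t_i}+1,
\end{equation}
and the second derivative is
\begin{equation}
g''(t_i) \;=\; \alpha_{j,\nu_j}^{(\ell)}\,(\beta_{j,\nu_j}^{(\ell)}-\tau)^{2}\,e^{(\beta_{j,\nu_j}^{(\ell)}-\tau)t_i}.
\end{equation}
Since the service rate $\alpha_{j,\nu_j}^{(\ell)}>0$, the exponential is positive, and the squared factor is non-negative, we get $g''(t_i)\ge 0$ for all $t_i$, so $g$ is convex on $\mathbb{R}$.

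Finally, because $g$ is convex in $t_i$, the sublevel set $\{t_i:g(t_i)<0\}$ is convex. Taking the Cartesian product over $i$ and the intersection over $(j,\nu_j,\ell)$ preserves convexity, and the resulting set is exactly the feasible region defined by \eqref{M_telda_less_1} in $\boldsymbol{t}$. This proves the lemma. The only subtle point worth noting in the write-up is the separability in $i$ (so no cross-terms can spoil convexity) and the fact that the sign of $\beta_{j,\nu_j}^{(\ell)}-\tau$ is irrelevant because it enters $g''$ squared; otherwise the argument is a routine one-line second-derivative check, so I do not expect a real obstacle.
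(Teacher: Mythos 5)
Your proof is correct and follows essentially the same route as the paper's: exploit separability in $t_i$, compute the second derivative of $\alpha_{j,\nu_j}^{(\ell)}\bigl(e^{(\beta_{j,\nu_j}^{(\ell)}-\tau)t}-1\bigr)+t$, and observe it is non-negative because the factor $(\beta_{j,\nu_j}^{(\ell)}-\tau)$ appears squared. Your added remarks on sublevel sets and intersections being convex are a slightly more careful packaging of the same argument, but there is no substantive difference.
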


\begin{proof}

The constraint \eqref{M_telda_less_1} is separable for each $t_i$ and thus it is enough to prove convexity of $C(t)=\alpha_{j,\nu_j}\left(e^{\left(\beta_{j,\nu_j}-\tau\right)t}-1\right)+t$. Thus, it is enough to prove that $C''(t)\ge 0$. 

The first derivative of $C(t)$ is given as

\begin{equation}
C'(t)=\alpha_{j,\nu_j}\left(\left(\beta_{j,\nu_j}-\tau\right)e^{\left(\beta_{j,\nu_j}-\tau\right)t}\right)+1
\end{equation}

Differentiating it again, we get the second derivative as follows. 
\begin{equation}
C''(t)  =\alpha_{j,\nu_j}\left(\beta_{j,\nu_j}-\tau\right)^{2}e^{\left(\beta_{j,\nu_j}-\tau\right)t}\label{eq:C_22}
\end{equation}

Since $\alpha_{j,\nu_j}\ge0$, $C''(t)$ given in  (\ref{eq:C_22}) is non-negative, which proves the Lemma. 
\vspace{-.1in}
\end{proof}

\begin{lemma}
The  constraint \eqref{eq:don_pos_cond2} is convex with respect to   $\boldsymbol{t}$.  
\end{lemma}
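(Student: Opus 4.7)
The plan is to exploit separability of constraint \eqref{eq:don_pos_cond2} in $\boldsymbol{t}$. For each fixed triple $(i,j,\nu_j)$, the left-hand side depends only on the single variable $t_i$: the double summation ranges over $f$ and $\ell$, neither of which touches the index of an auxiliary variable. Hence it suffices to prove, for each $(i,j,\nu_j)$, that the scalar function of $t_i$ obtained by fixing everything else is convex on the feasible domain $t_i < \alpha_{j,\nu_j}^{(\ell)}$ guaranteed by \eqref{eq:t_i_alpha_j2}. The term $-(\Lambda_{j,\nu_j} + t_i)$ is affine and the prefactors $q_{f,j}^{(\ell)} p_{j,\nu_j}^{(\ell)} b_{f,\ell} \lambda_f$ are non-negative, so by the standard closure properties of convex functions (non-negative combinations plus affine terms), the task reduces to showing that, for every $f$ and $\ell$,
\begin{equation*}
g(t)\;=\;\left(\frac{\alpha_{j,\nu_j}^{(\ell)} e^{\beta_{j,\nu_j}^{(\ell)} t}}{\alpha_{j,\nu_j}^{(\ell)} - t}\right)^{L_f}
\end{equation*}
is convex in $t$ on $t < \alpha_{j,\nu_j}^{(\ell)}$.

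Rather than grinding out $g''(t)$ directly (which is doable but messy due to the $L_f$-th power), I would proceed by log-convexity. Writing
\begin{equation*}
\log g(t) \;=\; L_f \log\alpha_{j,\nu_j}^{(\ell)} \;+\; L_f \beta_{j,\nu_j}^{(\ell)} t \;-\; L_f \log\bigl(\alpha_{j,\nu_j}^{(\ell)} - t\bigr),
\end{equation*}
the first two terms are affine in $t$, and differentiating the last term twice gives $L_f/(\alpha_{j,\nu_j}^{(\ell)}-t)^2$, which is strictly positive on the feasible domain. Hence $\log g$ is convex. Since $g = \exp(\log g)$ is the composition of the non-decreasing convex function $\exp(\cdot)$ with a convex function, $g$ is convex. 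Assembling the pieces (affine term, non-negative weights, convex summands) proves that the left-hand side of \eqref{eq:don_pos_cond2}, viewed as a function of $t_i$, is convex, which is exactly what is needed for the sub-problem in $\boldsymbol{t}$ to have a convex feasible set.

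The main conceptual step is recognising separability of the constraint in $\boldsymbol{t}$ so that the problem collapses to a one-dimensional convexity check; after that, the log-convexity shortcut avoids any combinatorial calculation of derivatives of the $L_f$-th power. The only subtlety is keeping track of the domain condition $\alpha_{j,\nu_j}^{(\ell)} - t_i > 0$, which is needed for both well-definedness of $g$ and positivity of the second derivative of $-\log(\alpha_{j,\nu_j}^{(\ell)} - t)$; this is furnished by \eqref{eq:t_i_alpha_j2}, so no additional assumption is required.
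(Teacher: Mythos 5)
Your proof is correct, and it follows the paper's overall structure — observe that the constraint is separable in $t_i$, strip off the affine term $-(\Lambda_{j,\nu_j}+t_i)$ and the non-negative weights, and reduce to convexity of a single scalar function of $t$ on the domain $t<\alpha_{j,\nu_j}^{(\ell)}$. Where you diverge is in how that scalar convexity is established. The paper works directly with $D(t)=e^{L_f\beta_{j,\nu_j}t}/(\alpha_{j,\nu_j}-t)^{L_f}$ and computes $D'(t)$ and $D''(t)$ explicitly, arguing each is non-negative term by term; you instead note that $\log g(t)=L_f\bigl(\log\alpha_{j,\nu_j}^{(\ell)}+\beta_{j,\nu_j}^{(\ell)}t-\log(\alpha_{j,\nu_j}^{(\ell)}-t)\bigr)$ has second derivative $L_f/(\alpha_{j,\nu_j}^{(\ell)}-t)^2>0$, so $g$ is log-convex and hence convex as the composition of the increasing convex function $\exp(\cdot)$ with a convex function. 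Your route is cleaner: it sidesteps the product-rule bookkeeping for the $L_f$-th power entirely (the paper's displayed expression for $D''$ is in fact not quite the correct derivative, though its sign conclusion still holds since every term is non-negative), and it makes the role of the domain condition $\alpha_{j,\nu_j}^{(\ell)}-t>0$ transparent. The paper's direct computation, on the other hand, yields the explicit derivative expressions that are reused in the analogous convexity argument for the bandwidth variables. Both arguments are valid and establish the lemma.
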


\begin{proof}

The constraint \eqref{eq:don_pos_cond2}   is separable for each   $t_i$, and thus it is enough to prove convexity of\\ $E(t) = \sum_{f=1}^r\pi_{f,j,\nu_j}\lambda_{f}b_{f,\ell}a_{\ell}\left(\frac{\alpha_{j,\nu_j}e^{\beta_{j,\nu_j}t}}{\alpha_{j,\nu_j}-t}\right)^{L_{f}}-\left(\Lambda_{j,\nu_j}+t\right)$ for $t<\alpha_{j,\nu_j}$. Thus, it is enough to prove that $E''(t)\ge 0$  for $t<\alpha_{j,\nu_j}$. We further note that it is enough to prove that $D''(t)\ge 0$, where $D(t) = \frac{e^{L_f\beta_{j,\nu_j}t}}{(\alpha_{j,\nu_j}-t)^{L_{f}}}$. This follows since 
\begin{eqnarray}
D^{'}(t) & =&\frac{L_{f}e^{L_{f}\beta_{j,\nu_j}t}\left[\beta_{j,\nu_j}+\left(\alpha_{j,\nu_j}-t\right)^{-1}\right]}{\left(\alpha_{j,\nu_j}-t\right)^{L_{f}}}\ge 0\nonumber\\
D^{''}(t)& =&\frac{L_{f}\beta_{j,\nu_j}e^{L_{f}\beta_{j,\nu_j}t}\left[\beta_{j,\nu_j}+\frac{1+L_{f}}{\alpha_{j,\nu_j}-t}\left(1+\frac{1/\beta_{j,\nu_j}}{\alpha_{j,\nu_j}-t}\right)\right]}{\left(\alpha_{j,\nu_j}-t\right)^{L_{f}+2}}\ge 0\nonumber
\end{eqnarray}




\vspace{-.1in}
\end{proof}


 Algorithm \ref{alg:NOVA_Alg1} (given in Appendix \ref{apdx_table}) shows the used procedure to solve for $\boldsymbol{t}$. Let $\overline{U}\left(\boldsymbol{t};\boldsymbol{t^\nu}\right)$ be the
convex approximation at iterate $\boldsymbol{t^\nu}$ to the original non-convex problem $U\left(\boldsymbol{t}\right)$, where $U\left(\boldsymbol{t}\right)$ is given by (\ref{eq:joint_otp_prob}), assuming other parameters constant. Then, a valid choice of  $\overline{U}\left(\boldsymbol{t};\boldsymbol{t^\nu}\right)$ is the first order approximation of  $U\left(\boldsymbol{t}\right)$, i.e.,  
\begin{equation}
\overline{U}\left(\boldsymbol{t},\boldsymbol{t^\nu}\right)=\nabla_{\boldsymbol{t}}U\left(\boldsymbol{t^\nu}\right)^{T}\left(\boldsymbol{t}-\boldsymbol{t^\nu}\right)+\frac{\tau_{t}}{2}\left\Vert \boldsymbol{t}-\boldsymbol{t^\nu}\right\Vert ^{2}.\label{eq:U_t_u_bar}
\end{equation}
where $\tau_t$ is a regularization parameter. The detailed steps can be seen in Algorithm \ref{alg:NOVA_Alg1}. Since all the constraints \eqref{eq:t_i_alpha_j2}, \eqref{M_telda_less_1}, and  \eqref{eq:don_pos_cond2} have been shown to be convex in $\boldsymbol{t}$, the optimization problem in Step 1 of  Algorithm \ref{alg:NOVA_Alg1} can be solved by the standard projected gradient descent algorithm. 



\begin{lemma} \label{lem_t}

For fixed $\boldsymbol{q}$, $\boldsymbol{b}$, $\boldsymbol{w}$, and  $\boldsymbol{p}$, the optimization of our problem over 
$\boldsymbol{t}$ generates a sequence of monotonically decreasing
objective values and therefore is guaranteed to converge to a stationary point.
\end{lemma}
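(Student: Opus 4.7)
The plan is to invoke the convergence theory for the NOVA framework of \cite{scutNOVA} and verify that its hypotheses are met by our sub-problem in $\boldsymbol{t}$. NOVA guarantees that the iterates it produces yield monotonically non-increasing objective values and every limit point is a stationary point of the original non-convex problem, provided (i) the objective is continuously differentiable on the feasible set, (ii) the feasible set is closed and convex, and (iii) the surrogate approximation is uniformly strongly convex with gradient consistent with the true objective at the current iterate. My plan is to check each of these items in turn.

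First, I would observe that with $\boldsymbol{q}$, $\boldsymbol{p}$, $\boldsymbol{b}$, $\boldsymbol{w}$ fixed, the objective in \eqref{eq:joint_otp_prob} reduces to a sum of terms of the form $\frac{1}{t_i}\log(\cdot)$, where the argument of the logarithm is built from $H_{i,j,\nu_j,\ell}$ via \eqref{eq:H}, \eqref{eq:tildem}, \eqref{M_j_t_1}, and \eqref{eq:servTimeofFileB_j_i}. On the open set defined by \eqref{eq:t_i_alpha_j2}, \eqref{M_telda_less_1}, \eqref{eq:don_pos_cond2}, every denominator appearing in these expressions is strictly positive and every exponential is finite, so $U(\boldsymbol{t})$ is $C^{1}$ (in fact $C^{\infty}$) there. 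This handles (i).

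Next, for (ii), I would appeal to the two convexity lemmas proved immediately above: constraint \eqref{M_telda_less_1} is convex in each $t_i$ (shown via $C''(t)\ge 0$) and constraint \eqref{eq:don_pos_cond2} is convex in each $t_i$ (shown via $D''(t)\ge 0$). Constraint \eqref{eq:t_i_alpha_j2} is linear in $\boldsymbol{t}$. Since the constraints are separable across the coordinates $t_i$ and each is convex, the feasible set is a (closed) intersection of convex sets, hence convex. For (iii), the surrogate defined in \eqref{eq:U_t_u_bar} is a quadratic with Hessian $\tau_t I$; choosing $\tau_t>0$ makes it uniformly strongly convex, and by construction $\nabla_{\boldsymbol{t}}\overline{U}(\boldsymbol{t}^\nu;\boldsymbol{t}^\nu)=\nabla_{\boldsymbol{t}}U(\boldsymbol{t}^\nu)$, so the first-order consistency required by NOVA is automatic.

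With all three hypotheses verified, I would then quote the NOVA convergence theorem of \cite{scutNOVA} to conclude that the iterates $\{\boldsymbol{t}^\nu\}$ produced by Algorithm~\ref{alg:NOVA_Alg1} are feasible, $U(\boldsymbol{t}^{\nu+1})\le U(\boldsymbol{t}^\nu)$, and every limit point is a stationary point of the sub-problem; since $U$ is bounded below (the stall bound is non-negative after shifting), the monotone decreasing sequence of objective values converges. The main obstacle is really the verification of (ii): one must be careful that each non-linear constraint is convex in $\boldsymbol{t}$ on the relevant domain $t_i<\alpha_{j,\nu_j}^{(\ell)}$, which is why the preceding two lemmas are essential prerequisites. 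Once they are in hand, the rest is an application of the black-box NOVA result.
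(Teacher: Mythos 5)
Your proposal is correct and follows essentially the same route as the paper: the paper also establishes convexity of the feasible set in $\boldsymbol{t}$ via the linearity of \eqref{eq:t_i_alpha_j2} and the two convexity lemmas for \eqref{M_telda_less_1} and \eqref{eq:don_pos_cond2}, takes the strongly convex first-order surrogate \eqref{eq:U_t_u_bar}, and then invokes the NOVA convergence guarantee of \cite{scutNOVA}. Your writeup simply makes explicit the hypothesis-checking that the paper leaves implicit in the surrounding discussion.
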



\subsubsection{Streamed Video Quality Optimization}
Given the auxiliary variables, the bandwidth allocation weights, the PS  selection probabilities, and the scheduling probabilities, this subproblem can be written as follows. 

\textbf{Input: $\boldsymbol{q}$, $\boldsymbol{p}$, $\boldsymbol{t}$, and $\boldsymbol{w}$}

\textbf{Objective:} $\qquad\quad\;\;$min $\left(\ref{eq:joint_otp_prob}\right)$

$\hphantom{\boldsymbol{\text{Objective:}\,}}\qquad\qquad$s.t. 
\eqref{eq:rho_j},
\eqref{eq:Lambda_j},
\eqref{eq:bil},
\eqref{eq:don_pos_cond2}

$\hphantom{\boldsymbol{\text{Objective:}\,}}\qquad\qquad$var. $\boldsymbol{b}$

Similar to the aforementioned  two Optimization problems, this optimization can be solved using NOVA algorithm. The constraints  \eqref{eq:rho_j}, \eqref{eq:Lambda_j}, \eqref{eq:bil}, and \eqref{eq:don_pos_cond2} are linear in $\boldsymbol{b}$, and hence, form a convex domain.

 Algorithm \ref{alg:NOVA_Alg3} (given in Appendix \ref{apdx_table}) shows the used procedure to solve for $\boldsymbol{b}$. Let ${U_b}\left(\boldsymbol{b};\boldsymbol{b^\nu}\right)$ be the
convex approximation at iterate $\boldsymbol{b^\nu}$ to the original non-convex problem $U\left(\boldsymbol{b}\right)$, where $U\left(\boldsymbol{b}\right)$ is given by (\ref{eq:joint_otp_prob}), assuming other parameters constant. Then, a valid choice of  ${U_b}\left(\boldsymbol{b};\boldsymbol{b^\nu}\right)$ is the first order approximation of  $U\left(\boldsymbol{b}\right)$, i.e.,  
\begin{equation}
{U_b}\left(\boldsymbol{b},\boldsymbol{b^\nu}\right)=\nabla_{\boldsymbol{b}}U\left(\boldsymbol{b^\nu}\right)^{T}\left(\boldsymbol{b}-\boldsymbol{b^\nu}\right)+\frac{\tau_{b}}{2}\left\Vert \boldsymbol{b}-\boldsymbol{b^\nu}\right\Vert ^{2}.\label{eq:U_b}
\end{equation}
where $\tau_t$ is a regularization parameter. The detailed steps can be seen in Algorithm \ref{alg:NOVA_Alg3}. Since all the constraints  have been shown to be convex in $\boldsymbol{b}$, the optimization problem in Step 1 of  Algorithm \ref{alg:NOVA_Alg3} can be solved by the standard projected gradient descent algorithm. 



\begin{lemma} \label{lem_b}

For fixed $\boldsymbol{t}$, $\boldsymbol{w}$, $\boldsymbol{p}$, and  $\boldsymbol{q}$, the optimization of our problem over 
$\boldsymbol{b}$ generates a sequence of monotonically decreasing
objective values and therefore is guaranteed to converge to a stationary point.
\end{lemma}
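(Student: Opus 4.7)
The plan is to invoke the NOVA convergence framework of \cite{scutNOVA} in essentially the same manner as in the proofs sketched for Lemmas \ref{lem_pi} and \ref{lem_t}. To do so I need to verify that (i) the feasible set in $\boldsymbol{b}$ (with $\boldsymbol{q}$, $\boldsymbol{p}$, $\boldsymbol{t}$, $\boldsymbol{w}$ frozen) is convex, (ii) the objective $U(\boldsymbol{b})$ is continuously differentiable on that set, and (iii) the surrogate $U_b(\boldsymbol{b};\boldsymbol{b}^{\nu})$ in \eqref{eq:U_b} satisfies NOVA's uniform strong convexity and gradient-consistency requirements.

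For (i), I would check each constraint separately. Constraint \eqref{eq:bil} is linear in $\boldsymbol{b}$. Constraint \eqref{eq:Lambda_j} is linear since, with $\boldsymbol{q}$ and $\boldsymbol{p}$ held fixed, $\Lambda_{j,\nu_j}=\sum_{f,\ell}\lambda_f b_{f,\ell}q_{f,j}^{(\ell)}p_{j,\nu_j}^{(\ell)}$ is a linear functional of $\boldsymbol{b}$. Constraint \eqref{eq:rho_j} is linear in $\boldsymbol{b}$ from \eqref{eq:rho_j_1} once $\pi_{i,j,\nu_j}^{(\ell)}$, $\alpha_{j,\nu_j}^{(\ell)}$, and $\beta_{j,\nu_j}^{(\ell)}$ are fixed. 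For \eqref{eq:don_pos_cond2}, with $\boldsymbol{q},\boldsymbol{p},\boldsymbol{t},\boldsymbol{w}$ fixed the left-hand side is a linear combination of the $b_{f,\ell}$'s (the coefficients $(\alpha_{j,\nu_j}^{(\ell)}e^{\beta_{j,\nu_j}^{(\ell)}t_i}/(\alpha_{j,\nu_j}^{(\ell)}-t_i))^{L_f}$ are constants), so it is linear, hence convex. Intersecting these half-spaces and the simplex \eqref{eq:bil} yields a convex feasible set.

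For (ii), the first term of \eqref{eq:joint_otp_prob} is linear in $\boldsymbol{b}$. In the second term $b_{i,\ell}$ appears both as an outer factor and inside the logarithm through the terms $H_{i,j,\nu_j,\ell}$, which themselves depend on $\boldsymbol{b}$ through $\rho_{j,\nu_j}$, $\Lambda_{j,\nu_j}$, and $B_{j,\nu_j}(t_i)$; each of these is a smooth (indeed polynomial) function of $\boldsymbol{b}$ on the open feasible set, and constraints \eqref{eq:rho_j} and \eqref{eq:don_pos_cond2} guarantee that the logarithm argument is bounded away from $0$ and $\infty$. Hence $U(\boldsymbol{b})$ is $C^1$ on the feasible set, as required by NOVA. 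For (iii), the surrogate defined in \eqref{eq:U_b} is the sum of an affine term and a quadratic penalty $\frac{\tau_b}{2}\|\boldsymbol{b}-\boldsymbol{b}^{\nu}\|^2$ with $\tau_b>0$, which makes $U_b(\cdot;\boldsymbol{b}^{\nu})$ uniformly strongly convex, and by construction $\nabla_{\boldsymbol{b}}U_b(\boldsymbol{b}^{\nu};\boldsymbol{b}^{\nu})=\nabla_{\boldsymbol{b}}U(\boldsymbol{b}^{\nu})$.

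With (i)--(iii) verified, I would conclude by quoting the NOVA convergence theorem from \cite{scutNOVA}: the iterates $\{\boldsymbol{b}^{\nu}\}$ produced by Algorithm \ref{alg:NOVA_Alg3} remain feasible, the sequence $\{U(\boldsymbol{b}^{\nu})\}$ is monotonically decreasing, and every limit point is a stationary point of the original non-convex subproblem. The main obstacle I anticipate is the bookkeeping in (ii)---specifically confirming that, even though $\boldsymbol{b}$ appears in multiple nested places inside $H_{i,j,\nu_j,\ell}$, the logarithm argument $\sum_{j,\nu_j}q_{i,j}^{(\ell)}p_{j,\nu_j}^{(\ell)}(1+H_{i,j,\nu_j,\ell})$ stays strictly positive and finite on the feasible set; once this is secured, differentiability and the NOVA hypotheses follow routinely.
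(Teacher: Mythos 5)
Your proposal is correct and follows essentially the same route as the paper: the paper likewise observes that constraints \eqref{eq:rho_j}, \eqref{eq:Lambda_j}, \eqref{eq:bil}, and \eqref{eq:don_pos_cond2} are linear in $\boldsymbol{b}$ (hence the feasible set is convex), and then invokes the NOVA convergence guarantee of \cite{scutNOVA} with the strongly convex first-order surrogate \eqref{eq:U_b}. Your additional verification of differentiability of the objective and positivity of the logarithm argument is a more careful spelling-out of the standing NOVA assumptions the paper discusses once (in the server-access subsection) and reuses here, not a different argument.
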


\subsubsection{ Bandwidth Allocation Weights Optimization}
Given the auxiliary variables, the streamed video quality probabilities, the PS selection probabilities, and the scheduling probabilities, this subproblem can be written as follows. 

\textbf{Input: $\boldsymbol{q}$, $\boldsymbol{p}$, $\boldsymbol{t}$, and $\boldsymbol{b}$}

\textbf{Objective:} $\qquad\quad\;\;$min $\left(\ref{eq:joint_otp_prob}\right)$

$\hphantom{\boldsymbol{\text{Objective:}\,}}\qquad\qquad$s.t. 
\eqref{eq:rho_j},
\eqref{eq:w_alpha},
\eqref{eq:sum_w_nuj},
\eqref{eq:don_pos_cond2}

$\hphantom{\boldsymbol{\text{Objective:}\,}}\qquad\qquad$var. $\boldsymbol{w}$

This optimization problem can be solved using NOVA algorithm. It is easy to notice that the constraints    \eqref{eq:w_alpha} and \eqref{eq:sum_w_nuj} are linear and thus  convex with respect to $\boldsymbol{b}$. Further, the next two Lemmas show that the constraints  \eqref{eq:rho_j} and  \eqref{eq:don_pos_cond2} are convex in $\boldsymbol{w}$, respectively. 

\begin{lemma}
The  constraint \eqref{eq:rho_j} is convex with respect to  $\boldsymbol{{w}}$. 
\end{lemma}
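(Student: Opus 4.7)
The plan is to reduce the claim to the elementary fact that $f(w)=1/w$ is convex on the positive reals. By the standard convention already used in the preceding two convexity lemmas, showing that $\rho_{j,\nu_j}$ is a convex function of $\boldsymbol{w}$ is enough, because the feasible region $\{\boldsymbol{w}:\rho_{j,\nu_j}(\boldsymbol{w})<1\}$ is then a sublevel set of a convex function and hence convex.

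The first step is to use the parameterization \eqref{eq:alphabeta}. In the bandwidth subproblem, the variables $\boldsymbol{q},\boldsymbol{p},\boldsymbol{b}$ are held fixed, so $\pi_{i,j,\nu_j}^{(\ell)}=q_{i,j}^{(\ell)}p_{j,\nu_j}^{(\ell)}$ is a constant and so is every $\beta_{j,\nu_j}^{(\ell)}=\beta_j a_\ell$. The only $\boldsymbol{w}$-dependence in \eqref{eq:rho_j_1} comes through $1/\alpha_{j,\nu_j}^{(\ell)}=a_\ell/(\alpha_j w_{j,\nu_j})$, which involves $\boldsymbol{w}$ only through the single scalar coordinate $w_{j,\nu_j}$. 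Substituting into \eqref{eq:rho_j_1} and collecting terms gives
\[
\rho_{j,\nu_j}(\boldsymbol{w}) \;=\; A_{j,\nu_j} \;+\; \frac{B_{j,\nu_j}}{w_{j,\nu_j}},
\]
where
\[
A_{j,\nu_j}=\sum_{i,\ell}\pi_{i,j,\nu_j}^{(\ell)}\lambda_i b_{i,\ell}L_i\beta_j a_\ell, \qquad B_{j,\nu_j}=\frac{1}{\alpha_j}\sum_{i,\ell}\pi_{i,j,\nu_j}^{(\ell)}\lambda_i b_{i,\ell}L_i a_\ell
\]
are both nonnegative constants (independent of $\boldsymbol{w}$).

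The second step is the one-line calculus fact: $f(w)=1/w$ satisfies $f''(w)=2/w^{3}>0$ for $w>0$, so $f$ is convex; scaling by the nonnegative constant $B_{j,\nu_j}$ and adding the constant $A_{j,\nu_j}$ preserves convexity. Because $\rho_{j,\nu_j}$ depends on a single coordinate of $\boldsymbol{w}$, convexity in that scalar lifts automatically to joint convexity in $\boldsymbol{w}$, and the lemma follows. There is no genuine obstacle here: in contrast to the earlier lemmas on \eqref{M_telda_less_1} and \eqref{eq:don_pos_cond2}, where one had to differentiate a product of an exponential and a rational function twice and check the sign of the result, the $\boldsymbol{w}$-dependence collapses to a single $1/w_{j,\nu_j}$ term. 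The only mild housekeeping point is that the domain is tacitly $w_{j,\nu_j}>0$, which is enforced by the constraint itself since $\rho_{j,\nu_j}\to\infty$ as $w_{j,\nu_j}\downarrow 0$ whenever $B_{j,\nu_j}>0$.
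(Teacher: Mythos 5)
Your proof is correct and follows essentially the same route as the paper's: both reduce the constraint to the convexity of a single reciprocal term (the paper checks $C_1(\alpha)=1/\alpha$ with $C_1''(\alpha)=2/\alpha^3$ and invokes the linearity of $\alpha_{j,\nu_j}^{(\ell)}$ in $w_{j,\nu_j}$, while you substitute that linear relation first and check $1/w_{j,\nu_j}$ directly). Your explicit collapse of $\rho_{j,\nu_j}$ into $A_{j,\nu_j}+B_{j,\nu_j}/w_{j,\nu_j}$ and the remark that the domain is $w_{j,\nu_j}>0$ are slightly more careful bookkeeping of the same argument, not a different approach.
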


\begin{proof}

Since there is no coupling between the subscripts $j$, $\ell$, and $\nu_j$ in \eqref{eq:rho_j}, we remove the subscripts in the rest of the proof. Moreover, since $\alpha$ is linear in $w$, it is enough to prove the convexity with respect to $\alpha$. Also, the constraint \eqref{eq:rho_j} is separable for each $\alpha$ and thus it is enough to prove convexity of $C_1(\alpha) = 1/\alpha$. It is easy to show that the second derivative of $C_1(\alpha)$ with respect to $\alpha$ is given by

%
%
%
\begin{equation}
C_1^{''}(\alpha)=\frac{2}{\alpha^3}\label{eq:C_alpha}
\end{equation}

Since $\alpha\ge0$, $C_1^{''}(\alpha)$ given in  (\ref{eq:C_alpha}) is non-negative, which proves the Lemma. 
\vspace{-.1in}
\end{proof}

\begin{lemma}
The  constraint \eqref{eq:don_pos_cond2} is convex with respect to   $\boldsymbol{{w}}$.  
\end{lemma}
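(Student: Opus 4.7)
The plan is to mirror the structure of the earlier $t$-convexity lemma. First I would reduce the constraint to the single scalar quantity that actually depends on $\boldsymbol{w}$. By \eqref{eq:alphabeta}, $\alpha_{j,\nu_j}^{(\ell)} = \alpha_j w_{j,\nu_j}/a_\ell$ is linear in $w_{j,\nu_j}$, while $\beta_{j,\nu_j}^{(\ell)} = \beta_j a_\ell$ is independent of $\boldsymbol{w}$. Since the inputs to this subproblem are $\boldsymbol{q}, \boldsymbol{p}, \boldsymbol{t}, \boldsymbol{b}$, the quantities $\Lambda_{j,\nu_j}$, $t_i$, $q_{f,j}^{(\ell)}p_{j,\nu_j}^{(\ell)}b_{f,\ell}\lambda_f$ and $e^{\beta_{j,\nu_j}^{(\ell)}t_i}$ are all constants (in $\boldsymbol{w}$). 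Thus the only $\boldsymbol{w}$-dependence in \eqref{eq:don_pos_cond2} enters through $\alpha_{j,\nu_j}^{(\ell)}$.

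Next I would exploit separability. The constraint \eqref{eq:don_pos_cond2} decouples across $(j,\nu_j)$; moreover, the left-hand side is a non-negative combination (with weights independent of $\boldsymbol{w}$) of terms of the form $\bigl(\alpha_{j,\nu_j}^{(\ell)} e^{\beta_{j,\nu_j}^{(\ell)} t_i}/(\alpha_{j,\nu_j}^{(\ell)} - t_i)\bigr)^{L_f}$, plus a $\boldsymbol{w}$-independent affine piece $-(\Lambda_{j,\nu_j}+t_i)$. Since non-negative sums preserve convexity and affine terms are convex, it suffices to prove that $G(\alpha) = \bigl(\alpha/(\alpha - t)\bigr)^{L_f}$ is convex in $\alpha$ on the feasible set $\alpha > t > 0$ (the exponential factor is a positive multiplicative constant), and then compose with the affine map $w_{j,\nu_j} \mapsto \alpha_j w_{j,\nu_j}/a_\ell$ to obtain convexity in $\boldsymbol{w}$.

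The remaining step is the scalar convexity of $G$, which I would handle by direct differentiation. Writing $G(\alpha) = \alpha^{L_f}(\alpha-t)^{-L_f}$, a short computation gives
\begin{equation}
G''(\alpha) \;=\; L_f\, t\, \alpha^{L_f-2}(\alpha-t)^{-L_f-2}\bigl[\,2\alpha + (L_f-1)\,t\,\bigr],
\end{equation}
which is non-negative whenever $L_f \geq 1$, $t>0$, and $\alpha > t$. These conditions are exactly the feasibility conditions \eqref{eq:t_i_alpha_j2} assumed in the problem. This finishes the argument.

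The main obstacle is simply getting the algebra for $G''$ right; once that sign is established, the composition-with-an-affine-map and the non-negative-sum arguments are routine. One subtlety worth being explicit about is that $\alpha_{j,\nu_j}^{(\ell)}$ is shared across all $\ell$ (via the single variable $w_{j,\nu_j}$) through the scaling $1/a_\ell$, so I would be careful to note that convexity of $G$ in $\alpha_{j,\nu_j}^{(\ell)}$ for each $\ell$ separately is exactly what is required, since all these maps $w_{j,\nu_j}\mapsto \alpha_j w_{j,\nu_j}/a_\ell$ are affine.
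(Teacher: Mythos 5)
Your proof is correct and follows essentially the same route as the paper: reduce to the scalar convexity of $\left(\alpha/(\alpha-t)\right)^{L_f}$ in $\alpha$ via its second derivative, then compose with the affine map $w_{j,\nu_j}\mapsto \alpha_j w_{j,\nu_j}/a_\ell$ and use that non-negative sums preserve convexity. Your expression for $G''$ is algebraically equivalent to the paper's $D_1''$, and your explicit handling of the coupling across $\ell$ through the single variable $w_{j,\nu_j}$ is a point the paper glosses over.
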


\begin{proof}

The constraint \eqref{eq:don_pos_cond2}   is separable for each  $\alpha_{j,\nu_j}^{\ell}$, and thus it is enough to prove convexity of \\
$E_1(\alpha_{j,\nu_j}^{(\ell)}) = \sum_{f=1}^r \sum_{\ell=1}^V \pi_{f,j,\nu_j}^{(\ell)}\lambda_{f}b_{f,\ell}\left(\frac{\alpha_{j,\nu_j}^{(\ell)}e^{\beta_{j,\nu_j}^{(\ell)}t}}{\alpha_{j,\nu_j}^{(\ell)}-t}\right)^{L_{f}}-\left(\Lambda_{j,\nu_j}+t\right)$ for $t<\alpha_{j,\nu_j}^{(\ell)}$. Since there is only a single index $j$, $\nu_j$, and $\ell$ here, we ignore the subscripts and superscripts for the
rest of this proof. Thus, it is enough to prove that $E_1''(\alpha)\ge 0$  for $t<\alpha$. We further note that it is enough to prove that $D_1''(\alpha)\ge 0$, where $\ensuremath{D_{1}(\alpha)=\left(1-\frac{t}{\alpha}\right)^{-L_{i}}}$. This holds since,

\begin{align}
D_{1}^{'}(\alpha) & =\frac{-L_{i}\times t}{\alpha^{2}}\left(\frac{\alpha}{\alpha-t}\right)^{L_{i}+1}\label{eq:D_1_alpha}\\
D_{1}^{''}(\alpha) & =\frac{L_{i}\times t}{\alpha^{3}}\left(\frac{\alpha}{\alpha-t}\right)^{L_{i}+1}\left[2+\frac{\alpha\left(L_{i}+1\right)}{\alpha_{j}-t}\right]\ge0\label{eq:D_2_alpha}
\end{align}
\vspace{-.1in}
\end{proof}

Algorithm \ref{alg:NOVA_Alg4} (given in Appendix \ref{apdx_table})  shows the used procedure to solve for $\boldsymbol{w}$. Let ${U_w}\left(\boldsymbol{w};\boldsymbol{w^\nu}\right)$ be the
convex approximation at iterate $\boldsymbol{w^\nu}$ to the original non-convex problem $U\left(\boldsymbol{w}\right)$, where $U\left(\boldsymbol{w}\right)$ is given by (\ref{eq:joint_otp_prob}), assuming other parameters constant. Then, a valid choice of  ${U_w}\left(\boldsymbol{w};\boldsymbol{w^\nu}\right)$ is the first order approximation of  $U\left(\boldsymbol{w}\right)$, i.e.,  
\begin{equation}
{U_w}\left(\boldsymbol{w},\boldsymbol{w^\nu}\right)=\nabla_{\boldsymbol{w}}U\left(\boldsymbol{w^\nu}\right)^{T}\left(\boldsymbol{w}-\boldsymbol{w^\nu}\right)+\frac{\tau_{w}}{2}\left\Vert \boldsymbol{w}-\boldsymbol{w^\nu}\right\Vert ^{2}.\label{eq:U_b}
\end{equation}
where $\tau_t$ is a regularization parameter. The detailed steps can be seen in Algorithm \ref{alg:NOVA_Alg4}. Since all the constraints  have been shown to be convex, the optimization problem in Step 1 of  Algorithm \ref{alg:NOVA_Alg4} can be solved by the standard projected gradient descent algorithm.

\begin{lemma} \label{lem_t}

For fixed $\boldsymbol{q}$, $\boldsymbol{p}$, $\boldsymbol{t}$, and  $\boldsymbol{b}$, the optimization of our problem over 
$\boldsymbol{w}$ generates a sequence of  decreasing
objective values and therefore is guaranteed to converge to a stationary point.
\end{lemma}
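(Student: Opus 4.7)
The plan is to mirror the proofs of Lemmas \ref{lem_pi}, \ref{lem_t}, and \ref{lem_b} and reduce this statement to a direct invocation of the convergence guarantee of the NOVA algorithm from \cite{scutNOVA}. The bandwidth subproblem has exactly the same algorithmic template as the earlier four: a non-convex smooth objective \eqref{eq:joint_otp_prob} (with the other four variable blocks held fixed), replaced at each iterate $\boldsymbol{w}^\nu$ by the proximal first-order surrogate ${U_w}(\boldsymbol{w};\boldsymbol{w}^\nu)$ in \eqref{eq:U_b}, minimized subject to the constraints \eqref{eq:rho_j}, \eqref{eq:w_alpha}, \eqref{eq:sum_w_nuj}, \eqref{eq:don_pos_cond2}, and followed by a step-size update of the iterate.

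First I would check the two sets of hypotheses that NOVA requires. The smoothness conditions on $U(\boldsymbol{w})$ are inherited from the fact that the objective is a finite sum of compositions of exponentials, rational functions, and logarithms in which the denominators are bounded away from zero on the feasible region (the latter being precisely what \eqref{eq:t_i_alpha_j2}, \eqref{M_telda_less_1}, and \eqref{eq:don_pos_cond2} ensure). The uniform strong convexity of the surrogate ${U_w}(\boldsymbol{w};\boldsymbol{w}^\nu)$ follows directly from the quadratic proximal term $\frac{\tau_w}{2}\|\boldsymbol{w}-\boldsymbol{w}^\nu\|^2$ with $\tau_w>0$, and the first-order consistency condition $\nabla_{\boldsymbol{w}} {U_w}(\boldsymbol{w}^\nu;\boldsymbol{w}^\nu) = \nabla_{\boldsymbol{w}} U(\boldsymbol{w}^\nu)$ is immediate from \eqref{eq:U_b}.

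Second, I would verify that the feasible set in $\boldsymbol{w}$ is convex, so that the inner minimization in Step 1 of Algorithm \ref{alg:NOVA_Alg4} is a convex program solvable by projected gradient descent. Constraints \eqref{eq:w_alpha} and \eqref{eq:sum_w_nuj} are linear in $\boldsymbol{w}$, hence trivially convex. Convexity of \eqref{eq:rho_j} and \eqref{eq:don_pos_cond2} with respect to $\boldsymbol{w}$ is exactly what the two preceding lemmas in this subsection established (via the reductions to $C_1(\alpha)=1/\alpha$ and $D_1(\alpha)=(1-t/\alpha)^{-L_f}$, combined with the linearity of $\alpha_{j,\nu_j}^{(\ell)}$ in $w_{j,\nu_j}$ from \eqref{eq:alphabeta}). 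Hence the feasible domain is a convex set and the surrogate subproblem is a well-posed strongly convex optimization with a unique minimizer at every iterate.

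With the NOVA preconditions in place, the conclusion is standard: Theorem 2 of \cite{scutNOVA} guarantees that for any sequence of step-sizes $\gamma^\nu$ satisfying the diminishing-but-not-summable conditions used in Algorithm \ref{alg:NOVA_Alg4}, the generated iterates $\{\boldsymbol{w}^\nu\}$ remain feasible, the objective sequence $\{U(\boldsymbol{w}^\nu)\}$ is non-increasing, and every limit point of $\{\boldsymbol{w}^\nu\}$ is a stationary point of the original non-convex problem. The only step requiring any genuine care is checking that the iterates remain bounded away from the boundary of the constraint region (so that the smoothness constants used by NOVA are finite); this is handled by noting that the strict inequalities \eqref{eq:rho_j} and \eqref{eq:don_pos_cond2} are preserved by the feasibility maintenance built into NOVA, whereas \eqref{eq:w_alpha} and \eqref{eq:sum_w_nuj} define a compact simplex-like set. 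This is the main technical point, but it is identical in structure to the corresponding argument in Lemmas \ref{lem_pi} and \ref{lem_t}, and completes the proof.
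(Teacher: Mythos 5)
Your proposal is correct and follows essentially the same route as the paper: establish convexity of the feasible set in $\boldsymbol{w}$ (linearity of \eqref{eq:w_alpha} and \eqref{eq:sum_w_nuj}, plus the two convexity lemmas for \eqref{eq:rho_j} and \eqref{eq:don_pos_cond2} via $C_1(\alpha)=1/\alpha$ and $D_1(\alpha)=(1-t/\alpha)^{-L_i}$ together with the linearity of $\alpha_{j,\nu_j}^{(\ell)}$ in $w_{j,\nu_j}$), note the strong convexity and first-order consistency of the proximal surrogate, and invoke the NOVA convergence guarantee of \cite{scutNOVA}. Your write-up is in fact somewhat more explicit than the paper's (which leaves the lemma's justification to the surrounding discussion), but there is no substantive difference in approach.
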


\subsubsection{PS Selection Probabilities}
Given the auxiliary variables, the bandwidth allocation weights, the streamed video quality probabilities, and the scheduling probabilities, this subproblem can be written as follows. 

\textbf{Input: $\boldsymbol{q}$, $\boldsymbol{b}$, $\boldsymbol{t}$, and $\boldsymbol{w}$}

\textbf{Objective:} $\qquad\quad\;\;$min $\left(\ref{eq:joint_otp_prob}\right)$

$\hphantom{\boldsymbol{\text{Objective:}\,}}\qquad\qquad$s.t. 
\eqref{eq:rho_j},
\eqref{eq:Lambda_j},
\eqref{eq:sum_p_nuj},
\eqref{eq:don_pos_cond2},

$\hphantom{\boldsymbol{\text{Objective:}\,}}\qquad\qquad$var. $\boldsymbol{p}$

 This optimization can be solved using NOVA algorithm. The constraints  \eqref{eq:rho_j}, \eqref{eq:Lambda_j}, \eqref{eq:sum_p_nuj}, and \eqref{eq:don_pos_cond2} are linear in $\boldsymbol{p}$, and hence, the domain is convex. 

 Algorithm \ref{alg:NOVA_Alg5} (given in Appendix \ref{apdx_table}) shows the used procedure to solve for $\boldsymbol{p}$. Let ${U_p}\left(\boldsymbol{p};\boldsymbol{p^\nu}\right)$ be the
convex approximation at iterate $\boldsymbol{p^\nu}$ to the original non-convex problem $U\left(\boldsymbol{p}\right)$, where $U\left(\boldsymbol{p}\right)$ is given by (\ref{eq:joint_otp_prob}), assuming other parameters constant. Then, a valid choice of  ${U_p}\left(\boldsymbol{p};\boldsymbol{p^\nu}\right)$ is the first order approximation of  $U\left(\boldsymbol{p}\right)$, i.e.,  
\begin{equation}
{U_p}\left(\boldsymbol{p},\boldsymbol{p^\nu}\right)=\nabla_{\boldsymbol{p}}U\left(\boldsymbol{p^\nu}\right)^{T}\left(\boldsymbol{p}-\boldsymbol{p^\nu}\right)+\frac{\tau_{p}}{2}\left\Vert \boldsymbol{p}-\boldsymbol{b^\nu}\right\Vert ^{2}.\label{eq:U_b}
\end{equation}
where $\tau_p$ is a regularization parameter. The detailed steps can be seen in Algorithm \ref{alg:NOVA_Alg5}. Since all the constraints  have been shown to be convex in $\boldsymbol{p}$, the optimization problem in Step 1 of  Algorithm \ref{alg:NOVA_Alg5} can be solved by the standard projected gradient descent algorithm. 



\begin{lemma} \label{lem_t}

For fixed $\boldsymbol{t}$, $\boldsymbol{w}$, $\boldsymbol{b}$, and  $\boldsymbol{q}$, the optimization of our problem over 
$\boldsymbol{p}$ generates a sequence of monotonically decreasing
objective values and therefore is guaranteed to converge to a stationary point.
\end{lemma}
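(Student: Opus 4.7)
The plan is to verify the hypotheses of the NOVA convergence theorem from \cite{scutNOVA} for the $\boldsymbol{p}$-subproblem with $\boldsymbol{q},\boldsymbol{t},\boldsymbol{b},\boldsymbol{w}$ held fixed, and then invoke it directly. NOVA requires three things: (i) the feasible set is closed and convex and the objective is continuously differentiable on an open set containing it; (ii) the surrogate used at each iterate is uniformly strongly convex, has Lipschitz gradient, and matches the true objective up to first order at the current iterate; and (iii) a step-size/proximal scheme from the admissible family in \cite{scutNOVA}. Under (i)--(iii), NOVA guarantees that the iterates remain feasible, the sequence of objective values is non-increasing, and every limit point is a stationary point of the sub-problem.

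First I would check convexity of the feasible set in $\boldsymbol{p}$. With $\boldsymbol{q}$ fixed, the products $\pi_{i,j,\nu_j}^{(\ell)}=q_{i,j}^{(\ell)}p_{j,\nu_j}^{(\ell)}$ in \eqref{eq:pi_i_j_nu} are linear in $\boldsymbol{p}$, so $\rho_{j,\nu_j}$ in \eqref{eq:rho_j_1} and $\Lambda_{j,\nu_j}$ in \eqref{eq:Lambda_j} are affine in $\boldsymbol{p}$, and the coefficients of $\boldsymbol{p}$ in \eqref{eq:don_pos_cond2} depend only on the frozen variables. Therefore each of \eqref{eq:rho_j}, \eqref{eq:Lambda_j}, \eqref{eq:sum_p_nuj}, and \eqref{eq:don_pos_cond2} is linear (affine) in $\boldsymbol{p}$, and together with the non-negativity constraint they cut out a closed convex polyhedron (the strict inequalities are replaced by closed versions with arbitrarily small slack, which suffices for stationary-point analysis).

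Next I would check smoothness of $U(\boldsymbol{p})$ defined by \eqref{eq:joint_otp_prob}. For fixed $\boldsymbol{q},\boldsymbol{t},\boldsymbol{b},\boldsymbol{w}$, the summand $H_{i,j,\nu_j,\ell}$ is, via \eqref{eq:H}, a rational function of $\boldsymbol{p}$ through $\Lambda_{j,\nu_j}$, $\rho_{j,\nu_j}$ and $B_{j,\nu_j}(t_i)$. The denominator $t_i-\Lambda_{j,\nu_j}(B_{j,\nu_j}(t_i)-1)$ stays bounded away from zero on the feasible set by \eqref{eq:don_pos_cond2}, and $1-\rho_{j,\nu_j}>0$ by \eqref{eq:rho_j}; these keep the logarithm argument in \eqref{eq:joint_otp_prob} positive. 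Hence $U$ is $C^{\infty}$ on an open neighborhood of the feasible set, so the gradient $\nabla_{\boldsymbol{p}}U$ appearing in the surrogate is well defined.

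The surrogate ${U_p}(\boldsymbol{p};\boldsymbol{p}^\nu)$ in \eqref{eq:U_b} is a first-order term plus a proximal quadratic $\tfrac{\tau_p}{2}\|\boldsymbol{p}-\boldsymbol{p}^\nu\|^2$ with $\tau_p>0$; it is uniformly strongly convex with modulus $\tau_p$, its gradient is Lipschitz with constant $\tau_p$, and by construction $\nabla_{\boldsymbol{p}}{U_p}(\boldsymbol{p}^\nu;\boldsymbol{p}^\nu)=\nabla_{\boldsymbol{p}}U(\boldsymbol{p}^\nu)$. Hence (ii) is satisfied, Step~1 of Algorithm~\ref{alg:NOVA_Alg5} is a strongly-convex program over a convex polyhedron that is solvable by projected gradient descent, and the NOVA step-size rule supplies (iii). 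Invoking \cite{scutNOVA} then yields the lemma. The only genuine obstacle is the bookkeeping around the strict inequalities \eqref{eq:rho_j}, \eqref{eq:don_pos_cond2}: I would handle it by restricting to the closed sub-level set on which these hold with a fixed slack, noting that both the initial point and all NOVA iterates stay in this set because $U$ blows up as one approaches the boundary and the objective is monotonically non-increasing.
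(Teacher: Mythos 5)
Your proposal is correct and follows essentially the same route as the paper, which justifies this lemma by observing that for fixed $\boldsymbol{q}$, $\boldsymbol{t}$, $\boldsymbol{b}$, $\boldsymbol{w}$ the constraints \eqref{eq:rho_j}, \eqref{eq:Lambda_j}, \eqref{eq:sum_p_nuj}, and \eqref{eq:don_pos_cond2} are linear in $\boldsymbol{p}$ (since $\pi_{i,j,\nu_j}^{(\ell)}=q_{i,j}^{(\ell)}p_{j,\nu_j}^{(\ell)}$ is then linear in $\boldsymbol{p}$), so the strongly convex proximal surrogate of Algorithm \ref{alg:NOVA_Alg5} satisfies the hypotheses of the NOVA convergence theorem in \cite{scutNOVA}. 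Your treatment is in fact more careful than the paper's one-line argument, notably in handling the strict inequalities and the smoothness of the objective near the boundary of the feasible set.
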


\subsubsection{Proposed Algorithm Convergence}

We first initialize 
$q_{i,j}^{(\ell)}$, $p_{j,\nu_j}^{(\ell)}$, $w_{j,\nu_j}$,  $t_i$ and $b_{i,\ell}$,  $\forall$ $i, j, \nu_j, \ell$ such that the choice is feasible for the problem. Then, we do alternating minimization over the five sub-problems defined above. Since each sub-problem converges (decreasing) and the overall problem is bounded from below, we have the following result.


\begin{theorem} 
 The proposed algorithm  converges to a  local optimal solution.
\end{theorem}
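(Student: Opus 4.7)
The plan is to combine three ingredients: (i) a uniform lower bound on the joint objective, (ii) per-block monotone decrease across outer iterations, and (iii) a block-coordinate-descent stationarity argument at any limit point of the iterate sequence.

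First I would show the objective in \eqref{eq:joint_otp_prob} is bounded below on the feasible set. The quality term $-\sum_{i,\ell}\tfrac{\lambda_{i}}{\overline\lambda}\,b_{i,\ell}L_{i}a_{\ell}$ is trivially bounded below since $b_{i,\ell}\in[0,1]$ and the $L_i,\,a_\ell,\,\lambda_i$ are fixed constants, while the stall term is the bound on a non-negative quantity, as seen from its original definition in \eqref{eq:E_T_s_2} together with $T_{i}^{(L_i,\ell)}\ge d_s+(L_i-1)\tau$. Hence the joint objective has a finite infimum on the feasible region defined by \eqref{eq:rho_j}--\eqref{eq:don_pos_cond2}.

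Next I would establish monotone non-increase of the objective across outer iterations. Lemma~\ref{lem_pi} and the four analogous block lemmas (for $\boldsymbol{p}$, $\boldsymbol{t}$, $\boldsymbol{b}$, $\boldsymbol{w}$) each show that the corresponding NOVA inner loop produces a non-increasing sequence of objective values that converges to a stationary point of the associated sub-problem while the remaining four blocks are held fixed. Concatenating the five updates within one outer pass, the joint objective can only decrease or stay equal. Combined with the lower bound from the previous step, the monotone convergence theorem guarantees that the sequence of outer-iteration objective values converges.

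The remaining step is to argue that a limit point of the iterate sequence itself is a local optimum, and this is the main obstacle: monotone decrease alone does not preclude cycling or convergence to a non-stationary fixed point of the block map. I would handle this as follows. The feasible set is closed and bounded (the constraints in \eqref{eq:rho_j}--\eqref{eq:don_pos_cond2} define a compact region), so by Bolzano--Weierstrass the iterates $(\boldsymbol{q}^{\nu},\boldsymbol{p}^{\nu},\boldsymbol{t}^{\nu},\boldsymbol{b}^{\nu},\boldsymbol{w}^{\nu})$ have at least one accumulation point. Because each block sub-problem has a continuously differentiable objective restricted to a convex feasible set (as verified in the convexity lemmas immediately preceding this theorem) and because the NOVA surrogate \eqref{eq:U_x_u_bar} is \emph{uniformly} strongly convex due to the quadratic regularizer $\tfrac{\tau_u}{2}\|\cdot\|^2$, each block update map is single-valued and continuous. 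A standard block-coordinate-descent argument (e.g., Proposition~2.7.1 of Bertsekas' \emph{Nonlinear Programming}, or the convergence result established for NOVA in \cite{scutNOVA}) then shows that at any accumulation point the first-order optimality conditions hold simultaneously in all five blocks, so the limit is a stationary point of the joint problem and thus a local optimum in the KKT sense.
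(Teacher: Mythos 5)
Your first two steps are precisely the paper's entire argument: the published proof consists of the observation that each of the five block updates produces a decreasing sequence of objective values (the block lemmas) and that the overall objective is bounded from below, from which the theorem is asserted. So on that portion you and the authors take the same route, and your explicit verification of the lower bound (quality term bounded since $b_{i,\ell}\in[0,1]$; stall term bounding a non-negative quantity) is a detail the paper leaves implicit.

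Where you genuinely diverge is your third paragraph, which the paper omits entirely: the paper never argues that a limit point of the iterates is stationary for the joint problem, only that the objective values converge. You are right that this is the real gap, and your instinct to close it with a block-coordinate-descent limit-point argument is the correct one. However, as written that step does not quite go through. First, the feasible region is not closed: constraints \eqref{eq:rho_j}, \eqref{eq:t_i_alpha_j2}, \eqref{M_telda_less_1}, and \eqref{eq:don_pos_cond2} are strict inequalities, so compactness and hence Bolzano--Weierstrass require either restricting to a closed subset or showing the iterates stay uniformly away from the boundary. Second, each block sub-problem is still non-convex in its own variable (only the \emph{constraints} are shown convex per block), and NOVA returns a stationary point of that sub-problem, not a unique global minimizer; Bertsekas' Proposition~2.7.1 assumes each block minimization is solved exactly with a uniquely attained minimum, so it does not apply off the shelf. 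One would instead need an inexact/stationary-point BCD result. Third, simultaneous per-block first-order conditions give a coordinatewise stationary point, which is weaker than a local optimum (and the theorem's phrase ``local optimal solution'' is best read as ``stationary point'' in any case). None of this makes your outline wrong in spirit --- it is strictly more careful than the paper's own proof --- but the third step needs these repairs to be a complete argument.
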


\section{Numerical Results}\label{sec:num}

In this section, we evaluate our proposed  algorithm for joint optimization of the mean stall duration and the average streamed video quality. 




{\scriptsize{}}
\begin{table}[b]
{\scriptsize{}\caption{{\small{}
The value of $\alpha_j/a_1$  used in the Numerical Results, where the units are 1/s.
\vspace{-.1in}
\label{tab:Storage-Nodes-Parameters}}}
}{\scriptsize \par}

{\scriptsize{}}%
\begin{tabular}{|c|>{\centering}p{2.53cc}|>{\centering}p{2.53cc}|c|c|c|}
\multicolumn{1}{>{\centering}p{2.53cc}}{{\scriptsize{}Node 1}} & \multicolumn{1}{>{\centering}p{2.53cc}}{{\scriptsize{}Node 2}} & \multicolumn{1}{>{\centering}p{2.53cc}}{{\scriptsize{}Node 3}} & \multicolumn{1}{c}{{\scriptsize{}Node 4}} & \multicolumn{1}{c}{{\scriptsize{}Node 5}} & \multicolumn{1}{c}{{\scriptsize{}Node 6}}\tabularnewline
\hline 
{\scriptsize{}$18.238$} & {\scriptsize{}$24.062$} & {\scriptsize{}$11.950$} & {\scriptsize{}$17.053$} & {\scriptsize{}$26.191$} & {\scriptsize{}$23.906$}\tabularnewline
\hline 
\end{tabular}{\scriptsize \par}

{\scriptsize{}}%
\begin{tabular}{|c|>{\centering}p{2.53cc}|>{\centering}p{2.53cc}|c|c|c|}
 \multicolumn{1}{>{\centering}p{2.53cc}}{{\scriptsize{}Node 7}} & \multicolumn{1}{>{\centering}p{2.53cc}}{{\scriptsize{}Node 8}} & \multicolumn{1}{>{\centering}p{2.53cc}}{{\scriptsize{}Node 9}} & \multicolumn{1}{c}{{\scriptsize{}Node 10}} & \multicolumn{1}{c}{{\scriptsize{}Node 11}} & \multicolumn{1}{c}{{\scriptsize{}Node 12}}\tabularnewline
\hline 
{\scriptsize{}$27.006$} & {\scriptsize{}$21.381$} & {\scriptsize{}$9.910$} & {\scriptsize{}$24.959$} & {\scriptsize{}$26.529$} & {\scriptsize{}$23.807$}\tabularnewline
\hline 
\end{tabular}{\scriptsize \par}
\vspace{-.2in}
\end{table}
{\scriptsize \par}

%

\subsection{Parameter  Setup}
 We simulate our algorithm in a distributed storage system of $m=12$ distributed nodes, where each video file uses an $(7,4)$ erasure code. However, our model can be used for any given number of storage servers and for any erasure coding setting. We assume $d_j=20$ (unless otherwise explicitly stated) and $r=1000$ files, whose sizes are generated based on Pareto distribution \cite{arnold2015pareto} (as it is a commonly used distribution for file sizes \cite{Vaphase}) with shape factor of $2$ and scale of $300$, respectively. Since we assume that the video file sizes are not heavy-tailed, the first $1000$ file-sizes that are less than 60 minutes are chosen.  We also assume that the chunk service time  follows a shifted-exponential distribution with rate $\alpha_{j}^{(\ell)}$ and shift $\beta_{j}^{(\ell)}$, given as \eqref{eq:alphabeta}.  The value of $\beta_j a_1$ is chosen to be 10 ms, while the value of $\alpha_j/a_1$ is chosen as  in Table \ref{tab:Storage-Nodes-Parameters} (the parameters of $\alpha_j/a_1$ were chosen using a distribution, and kept fixed for the experiments). Unless explicitly stated, the arrival rate for the first $500$ files is $0.002s^{-1}$ while for the next $500$ files is set to be $0.003s^{-1}$. Chunk size $\tau$ is set to be equal to $4$ seconds (s). When generating video files, the size of each video file is rounded up to the multiple of $4$ seconds. The values of $a_\ell$ for the $4$ second chunk are given in Table \ref{tab:DataRates}, where the numbers have been taken from the dataset in  \cite{svcParm015}. 
 We use a random placement of each file on  $7$ out of the $12$ servers.  In order to initialize our algorithm, we assume uniform scheduling,  $q_{i,j}^{(\ell)}=k/n$ on the placed servers and $p_{j,\nu_j}^{(\ell)} = 1/d_j$. Further, we choose  $t_{i}=0.01$,  $b_{i, \ell}=1/V$, and $w_{j, \nu_j}=1/d_j $. However, these choices of the initial parameters may not be feasible. Thus, we modify the parameter initialization
 to be closest norm feasible solutions.



 \begin{table}[t]
\caption{{\small{}Data Size (in $Mb$) of the different quality levels.\label{tab:DataRates}}}
\vspace{-.0in}
\centering
\begin{tabular}{|c|c|c|c|c|c|c|}
\hline 
$\ell$ & 1 & 2 & 3 & 4 & 5 & 6\tabularnewline
\hline 
$a_\ell$ & $6$ & $11$ & $19.2$ & $31.2$ & $41$ & $56.2$\tabularnewline
\hline 
\end{tabular}
\vspace{-.1in}
\end{table}

\subsection{Baselines}
We  compare our proposed approach with six strategies, which are described as follows. 
 \begin{enumerate}[leftmargin=0cm,itemindent=.5cm,labelwidth=\itemindent,labelsep=0cm,align=left]
 


\item {\em Projected Equal Access, Optimized Quality Probabilities, Auxiliary variables and Bandwidth Wights (PEA-QTB):} Starting with the initial solution mentioned above, the problem in \eqref{eq:joint_otp_prob} is optimized over the choice of $\boldsymbol{t}$, $\boldsymbol{b}$, $\boldsymbol{w}$, and $\boldsymbol{p}$ (using Algorithms \ref{alg:NOVA_Alg1}, \ref{alg:NOVA_Alg3},  \ref{alg:NOVA_Alg4}, and
\ref{alg:NOVA_Alg5}, 
respectively)  using alternating minimization. Thus, the value of $q_{i,j}^{(\ell)}$ will be approximately close to $k/n$ for the servers on which the content is placed, indicating equal access of the $k$-out-of-$n$ servers. 




\item {\em Projected Equal Bandwidth, Optimized Quality Probabilities, Auxiliary variables and Server Access (PEB-QTA):} Starting with the initial solution mentioned above, the problem in \eqref{eq:joint_otp_prob} is optimized over the choice of $\boldsymbol{q}$, $\boldsymbol{t}$, $\boldsymbol{b}$, and $\boldsymbol{p}$ (using Algorithms \ref{alg:NOVA_Alg1Pi}, \ref{alg:NOVA_Alg1},  \ref{alg:NOVA_Alg3}, and \ref{alg:NOVA_Alg5}, respectively) using alternating minimization. Thus, the bandwidth split $w_{j,\nu_j}$ will be  approximately $1/d_j$. 


\item {\em Projected Equal Quality, Optimized Bandwidth Wights, Auxiliary variables and Server Access (PEQ-BTA):} Starting with the initial solution mentioned above, the problem in \eqref{eq:joint_otp_prob} is optimized over the choice of $\boldsymbol{q}$, $\boldsymbol{t}$, $\boldsymbol{w}$, and $ \boldsymbol{p}$ (using Algorithms  \ref{alg:NOVA_Alg1Pi}, \ref{alg:NOVA_Alg1},  \ref{alg:NOVA_Alg4}, and \ref{alg:NOVA_Alg5}, respectively)  using alternating minimization. Thus, the quality assignment, $b_{i,\ell}$ will be approximately $1/V$.


\item {\em Projected Proportional Service-Rate, Optimized Quality, Auxiliary variables and Bandwidth Wights (PSP-QTB):} In the initialization, the access probabilities among the servers on which file $i$ is placed, is given as  $q_{i,j}^{(\ell)}=k_{i}\frac{\mu_{j}^{(\ell)}}{\sum_{j}\mu_{j}^{(\ell)}}, \, \forall i,j,\ell $.
This policy assigns servers proportional to their service rates. The choice of all parameters are then modified to the closest norm feasible solution.  Using this initialization, the problem in \eqref{eq:joint_otp_prob} is optimized over the choice of $\boldsymbol{t}$, $\boldsymbol{b}$, $\boldsymbol{w}$, and $ \boldsymbol{p}$ (using Algorithms  \ref{alg:NOVA_Alg1},  \ref{alg:NOVA_Alg3},  \ref{alg:NOVA_Alg4}, and \ref{alg:NOVA_Alg5}, respectively)  using alternating minimization. 


\item {\em Projected Lowest Quality, Optimized Bandwidth Wights, Auxiliary variables and Server Access (PLQ-BTA):}
In this strategy, we set
 $b_{i,1}=1\,\text{and \ensuremath{b_{i,\ell}=0},\ \ensuremath{\forall\ell}}\neq1$ in the initialization thus choosing the lowest quality for all videos.  Then, this choice is projected to the  closest norm feasible solution. Using this initialization, the problem in \eqref{eq:joint_otp_prob} is optimized over the choice of $\boldsymbol{q}$, $\boldsymbol{t}$, $\boldsymbol{w}$, and $ \boldsymbol{p}$  (using Algorithms  \ref{alg:NOVA_Alg1Pi}, \ref{alg:NOVA_Alg1},  \ref{alg:NOVA_Alg4}, and \ref{alg:NOVA_Alg5}, respectively)  using alternating minimization.

\item {\em Projected Highest Quality, Optimized Bandwidth Wights, Auxiliary variables and Server Access (PHQ-BTA):}  In this strategy, we set
$b_{i,6}=1\,\text{and \ensuremath{b_{i,\ell}=0},\ \ensuremath{\forall\ell}}\neq6$ in the initialization thus choosing the highest quality for all videos.  Then, this choice is projected to the  closest norm feasible solution. Using this initialization, the problem in \eqref{eq:joint_otp_prob} is optimized over the choice of $\boldsymbol{q}$, $\boldsymbol{t}$, $\boldsymbol{w}$, and $ \boldsymbol{p}$  (using Algorithms  \ref{alg:NOVA_Alg1Pi}, \ref{alg:NOVA_Alg1},  \ref{alg:NOVA_Alg4}, and \ref{alg:NOVA_Alg5}, respectively)  using alternating minimization. 

\end{enumerate}

%

\subsection{Results}
In this subsection, we
set $\theta = 10^{-7}$, {\em i.e.}, prioritizing stall minimization over quality enhancement. We note that the average quality numbers are orders of magnitude higher (since the quality term in \eqref{eq:joint_otp_prob} is proportional to the video length) than the mean stall duration and thus to bring the two to a comparable scale, the choice of $\theta = 10^{-7}$ is small.  This choice of $\theta$ is motivated since users prefer not seeing interruptions more than seeing better quality. In this section, we will consider the average quality definition as $\text{Average Quality } = \sum_{i,\ell}\frac{\lambda_i} {\overline{\lambda}} \frac{L_i}{\sum_{k=1}^r L_k} b_{i,\ell} a_{\ell}$. We note that the maximum average quality is bounded by $a_6 = 56.2$. The division by the sum of lengths is used as a normalization so that the numbers in the figures can be interpreted better.

\begin{figure}[t]
	\centering
	\includegraphics[trim=0in 0in 0in 0.0in, clip, width=.48\textwidth]{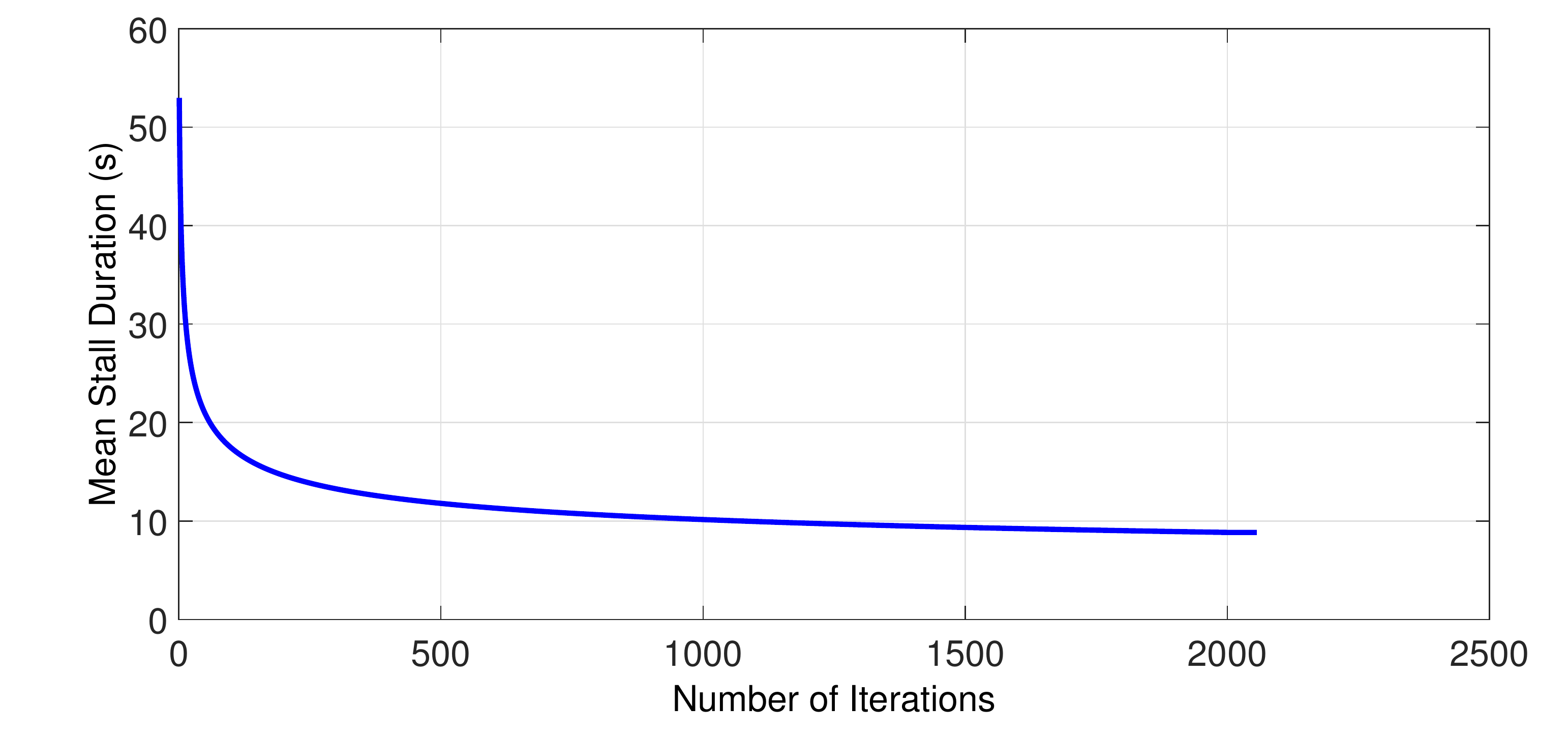}
			\vspace{-.16in}
		\captionof{figure}{Convergence of mean stall duration.}
		\label{fig:ConvgMeanStall}
		\vspace{-.2in}
\end{figure}

\begin{figure*}
	\centering
	\begin{minipage}{.32\textwidth}
		\centering
		\includegraphics[trim=0in 0in 4.1in 0in, clip, width=\textwidth]{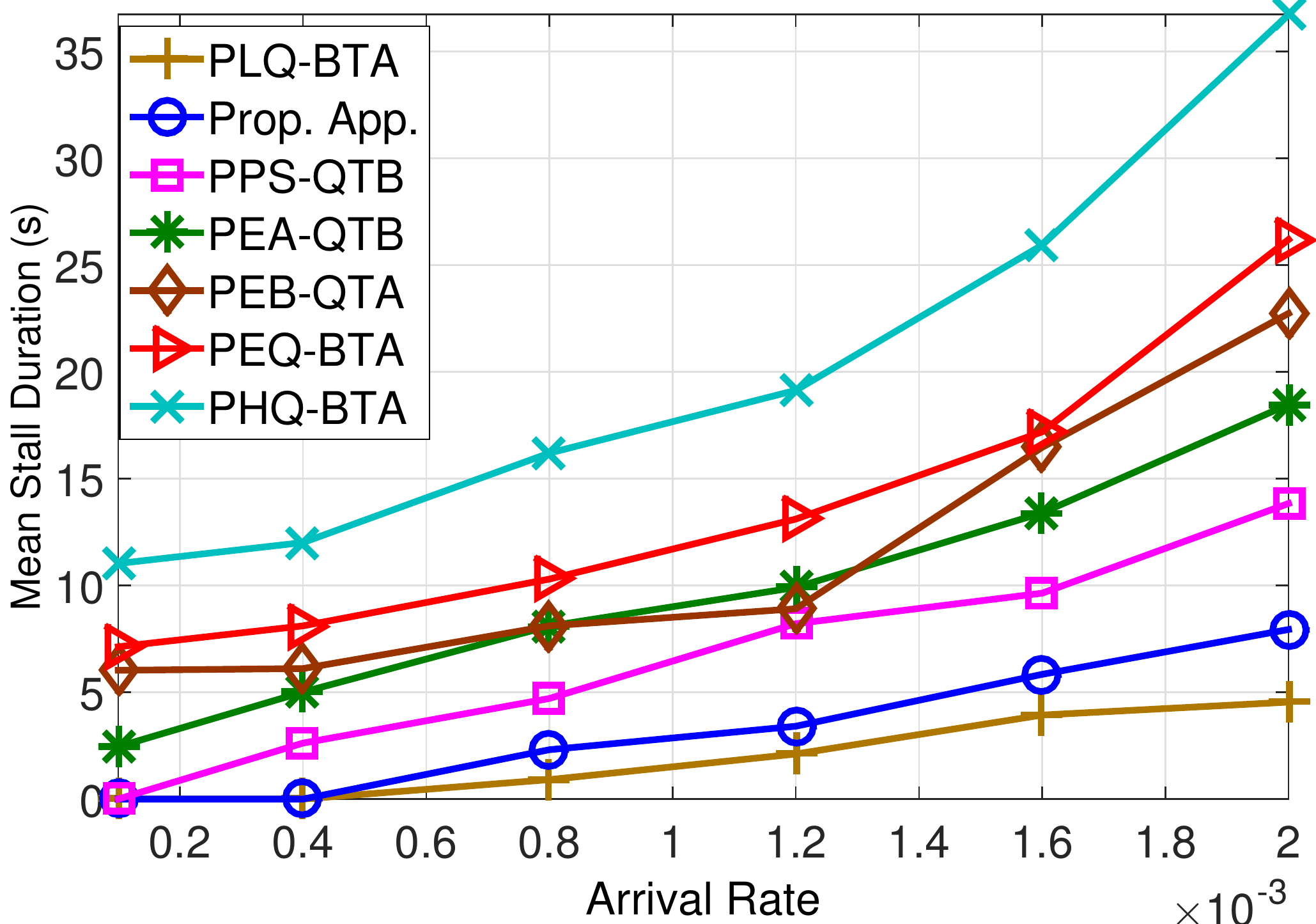}
		\captionof{figure}{Mean stall duration for different video arrival rates.}
		\label{fig:meanStallVsArrRate}
	\end{minipage}%
	\hspace{2mm}
	\begin{minipage}{.31\textwidth}
		\centering
		\includegraphics[trim=0.1in 0in 4in 0in, clip, width=\textwidth]{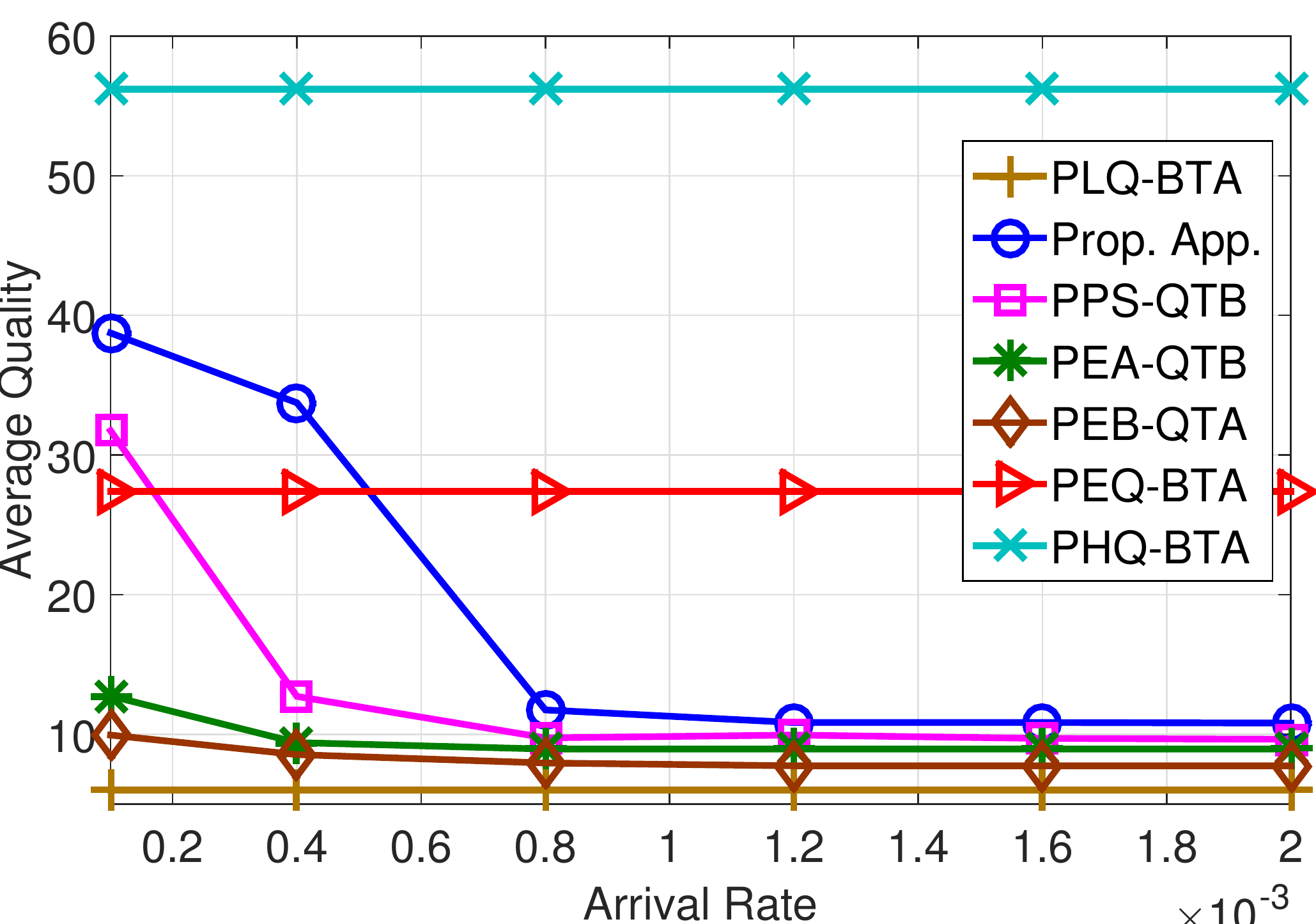}
		\captionof{figure}{Average quality for different video arrival rates.}
		\label{fig:meanQltyVsArrRate}
	\end{minipage}
	\hspace{2mm}
	\begin{minipage}{.32\textwidth}
		\centering
		\includegraphics[trim=0in 0in 4in 0in, clip,width=\textwidth]{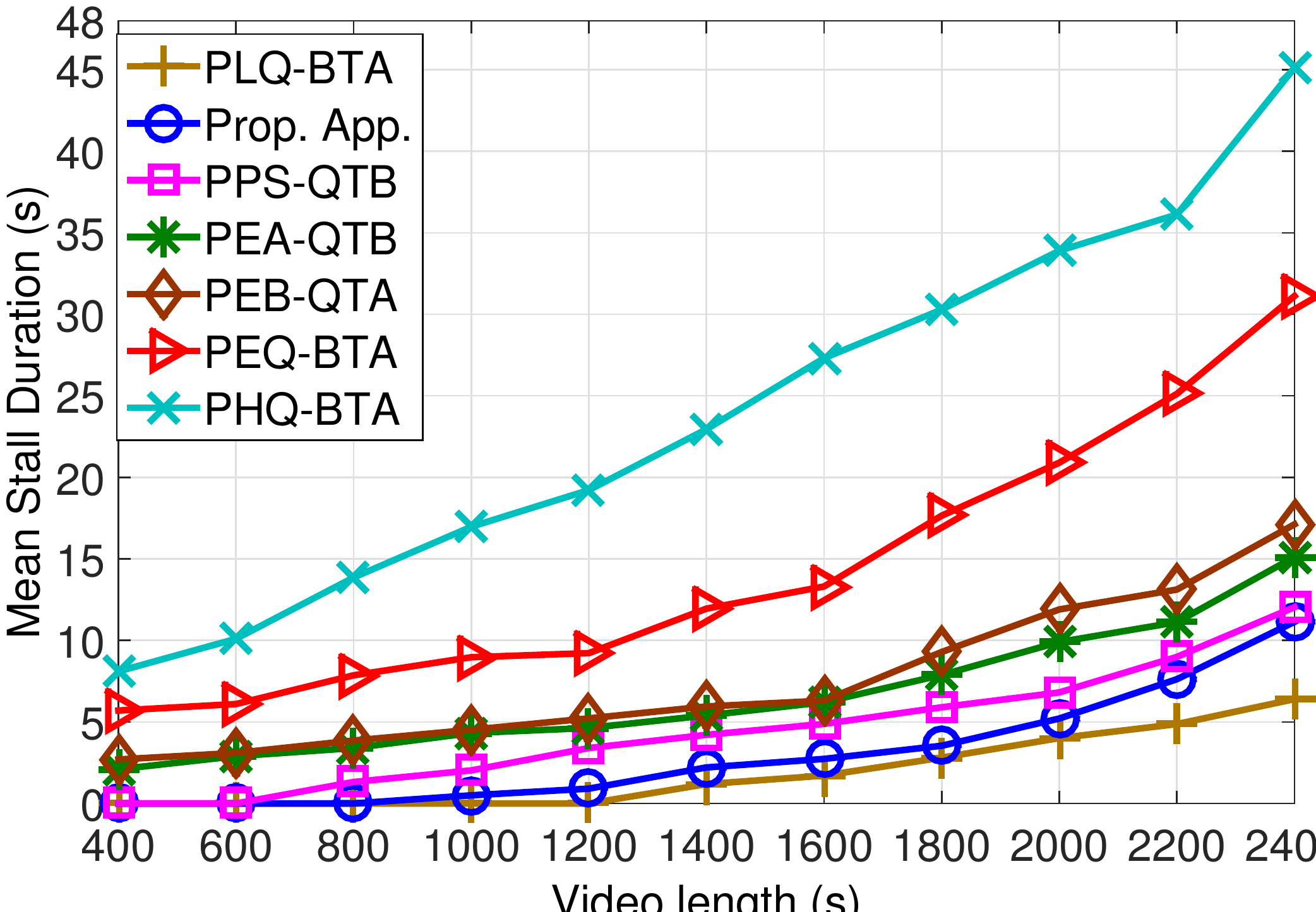}
		\captionof{figure}{Mean stall duration for different video lengths.}
		\label{fig:meanStallVsVidSize} 
	\end{minipage}
	\centering
	\begin{minipage}{.32\textwidth}
		\centering
		\includegraphics[trim=0.1in 0in 4.8in 0in, clip, width=\textwidth]{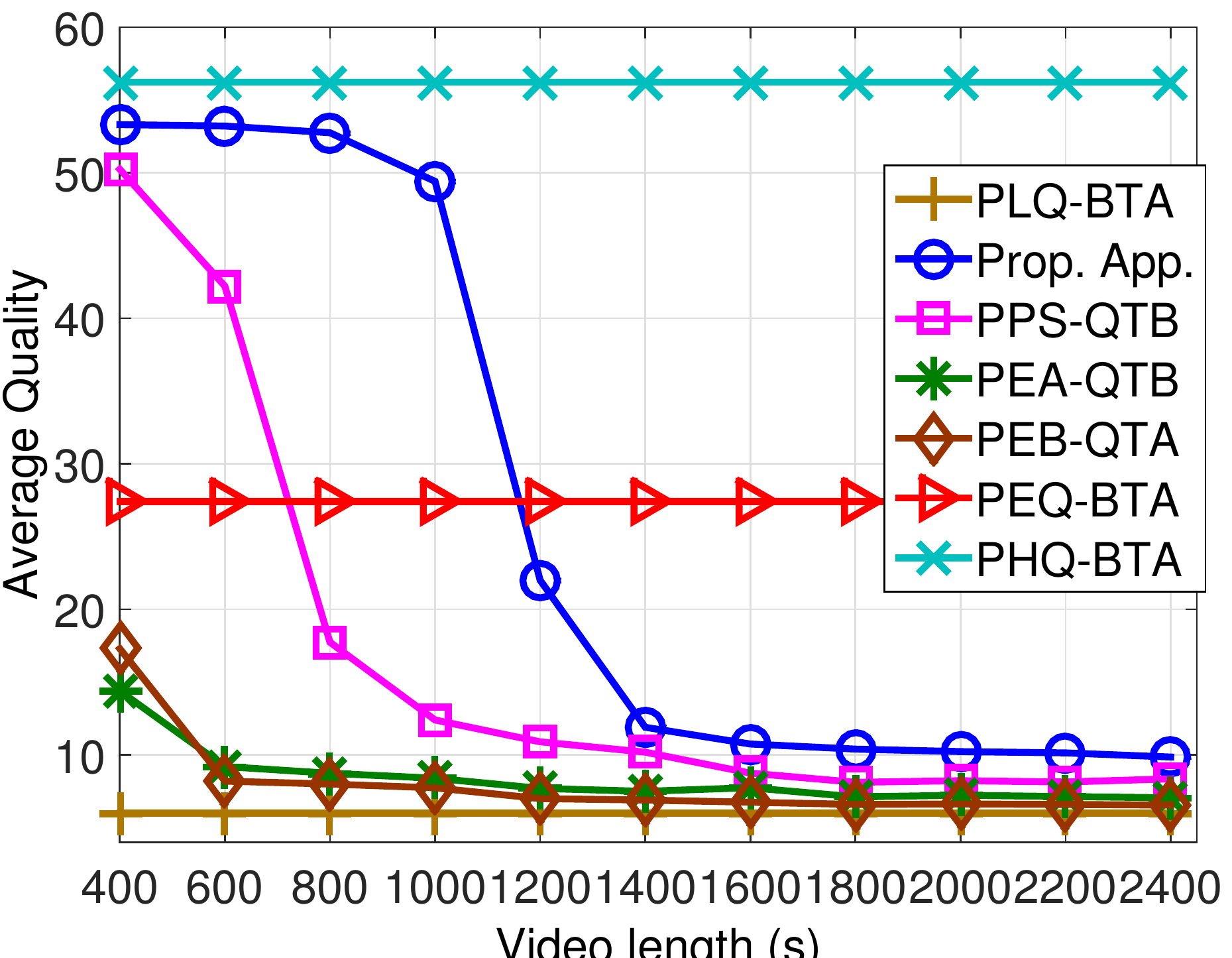}
		\captionof{figure}{Average  quality for different video lengths.}
		\label{fig:meanQltyVsVidSize}
	\end{minipage}%
	\hspace{2mm}
	\begin{minipage}{.32\textwidth}
		\centering
		\includegraphics[trim=0.0in 0in 3.8in 0in, clip, width=\textwidth]{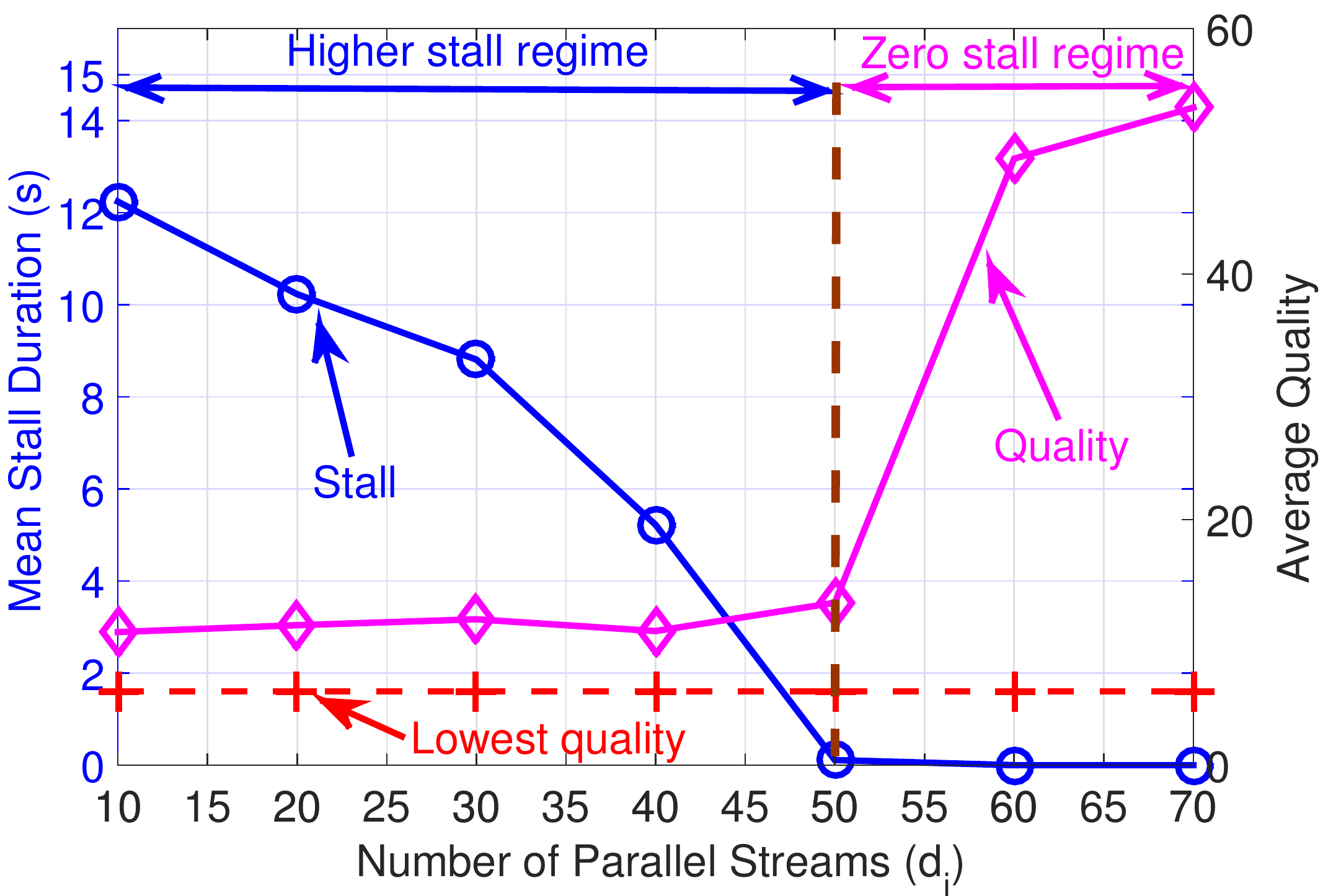}
		\captionof{figure}{Average video quality and mean stall duration for different number of parallel streams $d_j$.}
		\label{fig:qlty_stall_Vs_d_j}
	\end{minipage}
	\hspace{2mm}
	\begin{minipage}{.32\textwidth}
		\centering
		\includegraphics[trim=0.1in 0.1in 4.4in 0.1in, clip,width=\textwidth]{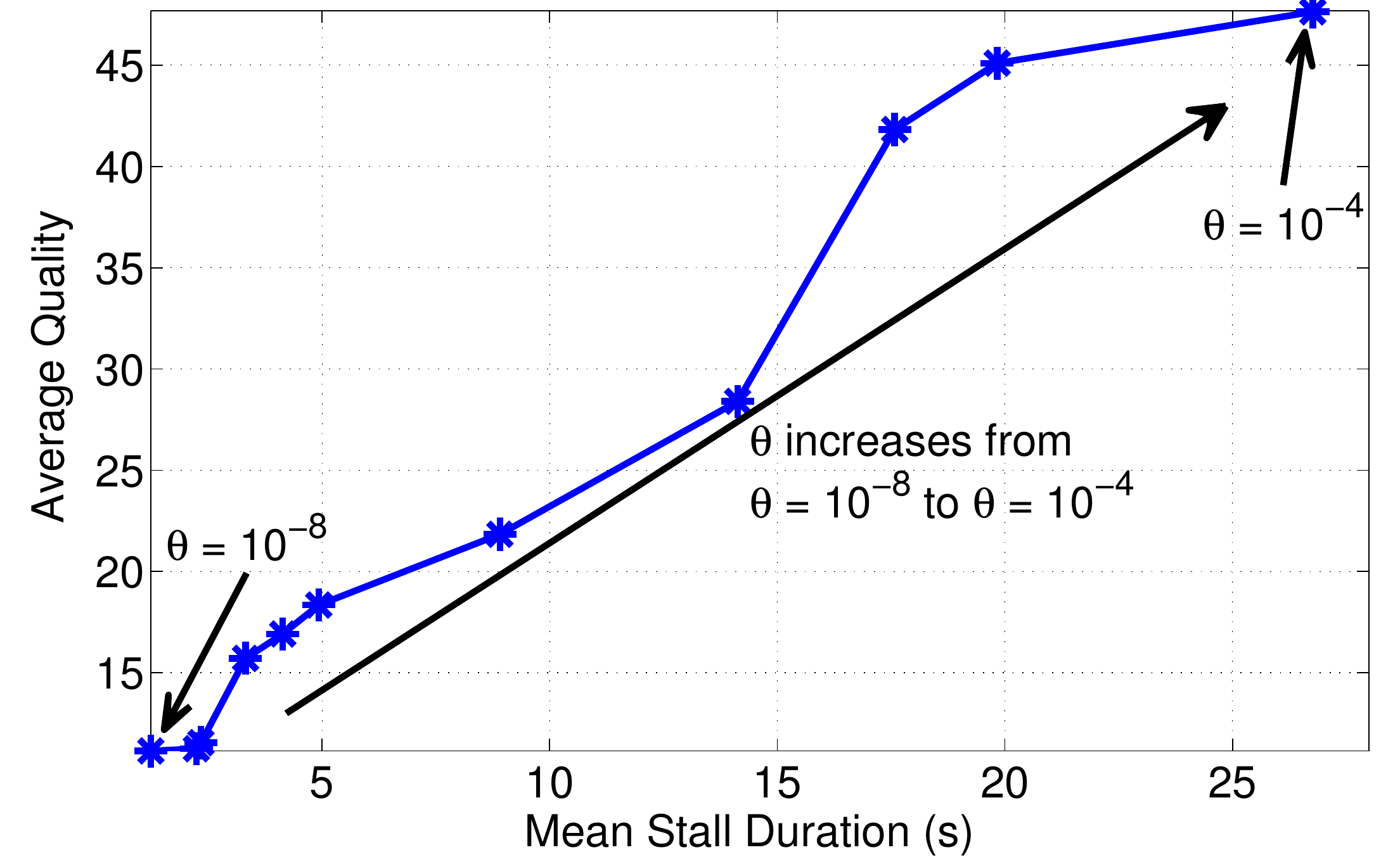}
		\captionof{figure}{Tradeoff between mean stall duration and  average streamed video quality obtained by varying $\theta$.}
		\vspace{-.15in}
		\label{fig:tradeoff} 
	\end{minipage}

\end{figure*}


%


\subsubsection*{Convergence of the Proposed Algorithm}
Figure \ref{fig:ConvgMeanStall} shows the convergence  of our proposed algorithm, where we see the convergence of mean stall duration in about $2000$ iterations. 


\subsubsection*{Effect of Arrival Rate} We assume the arrival rate of all the files the same, and vary the arrival rates as depicted in Figures \ref{fig:meanStallVsArrRate} and \ref{fig:meanQltyVsArrRate}. These figures show the effect of different video arrival rates on the mean stall duration and  averaged quality, respectively. We note that PLQ-BTA achieves lowest stalls and lowest quality, since it fetches all videos at the lowest qualities. Similarly, PHQ-BTA has highest stalls, and highest video quality since it fetches all videos in the highest possible rate. The proposed algorithm has mean stall duration less than all the algorithms other than PLQ-BTA, and is very close to PLQ-BTA.  Further, the proposed algorithm has the highest video quality among all algorithms except PHQ-BTA and PEQ-BTA. Thus, the proposed algorithm helps optimize both the QoEs simultaneously achieving close to the best possible stall durations and achieving better average video quality than the baselines. With the choice of low $\theta$, the stall duration can be made very close to the stall duration achieved with the lowest quality while the proposed algorithm will still opportunistically increase quality of certain videos to obtain better average quality.



\subsubsection*{Effect of Video Length}
The effect of having different video lengths on the mean stall duration and  average quality is also captured in Figures \ref{fig:meanStallVsVidSize} and \ref{fig:meanQltyVsVidSize}, respectively, where we assume that all the videos are of the same length. Apparently, the mean stall duration increases with the video length while the average quality decreases with the video length. The qualitative comparison of the different algorithms is the same as described in the case of varying arrival rates. Thus, at $\theta = 10^{-7}$, the proposed algorithm achieves the mean stall duration close to that of  PLQ-BTA while achieving significantly better quality. For algorithms other than PLQ-BTA, PEQ-BTA, and PHQ-BTA, the proposed algorithms outperforms all other baselines in both the metrics. 

\subsubsection*{ Effect of the Number of the Parallel Streams ($d_j$)}
Figure \ref{fig:qlty_stall_Vs_d_j} plots the average   streamed video quality and mean stall duration for varying number of parallel streams, $d_j$, for our proposed algorithm. We vary the number of PSs from $10$ to $70$ with increment step of 10 with $\theta=10^{-7}$. Increasing $d_j$ can only improve performance since some of the bandwidth splits can be zero thus giving the lower $d_j$ solution as one of the possible feasible solution. Increasing $d_j$ thus decreases stall durations by having more parallel streams, while increasing average quality. We note that for $d_j<50$, mean stall duration is non-zero and the stall duration decreases significantly while the average quality increases only slightly.  For $d_j>50$, the stall duration remains zero and the average video quality increases significantly with increase in $d_j$. Even though larger $d_j$ gives better results, the server may only be able to handle a limited parallel connections thus limiting the value of $d_j$ in the real systems.

\vspace{-.05in}
\subsubsection*{Tradeoff  between mean stall duration and  average video quality}

The preceding results show a trade off between the mean stall duration and
the  average quality of the streamed video. In order to investigate such tradeoff, Figure \ref{fig:tradeoff} plots the  average video quality versus the mean stall duration for different values of $\theta$ ranging from $\theta=10^{-8}$ to $\theta=10^{-4}$. This figure implies that a compromise between the two QoE metrics can be achieved by our proposed streaming algorithm by setting $\theta$  to an appropriate value. As expected, increasing $\theta$ will increase the mean stall duration as there is more priority to maximizing the average video quality.   Thus, an efficient tradeoff point between the QoE metrics can be chosen based on the service quality level desired by the service provider. 
\subsection{Testbed Configuration and Implementation Results}

\begin{table}[t]
\caption{Testbed Configuration. \label{tab:Testbed-Configuration.}}

\centering%
\begin{tabular}{ccc}
\toprule 
\multicolumn{3}{c}{Cluster Information }\tabularnewline
\midrule
\midrule 
Control Plane & \multicolumn{2}{c}{OpenStack Kilo}\tabularnewline
\midrule 
VM Flavor & \multicolumn{2}{c}{1 VCPU, 2GB RAM, 20G storage (HDD)}\tabularnewline
\bottomrule
\end{tabular}\\
$\vphantom{}$\\
$\vphantom{}$

\centering%
\begin{tabular}{ccc}
\toprule 
\multicolumn{3}{c}{Software Configuration}\tabularnewline
\midrule
\midrule 
Operating System & \multicolumn{2}{c}{Ubuntu Server 16.04 LTS}\tabularnewline
\midrule 
Storage Server  & \multicolumn{2}{c}{Apache Server}\tabularnewline
\midrule 
Client & \multicolumn{2}{c}{Apache JMeter with HLS Sampler}\tabularnewline
\bottomrule
\end{tabular}

\end{table}
\begin{figure}
\centering
\includegraphics[trim=0.9in .5in 0.3in 0.6in, clip, width = .5\textwidth]{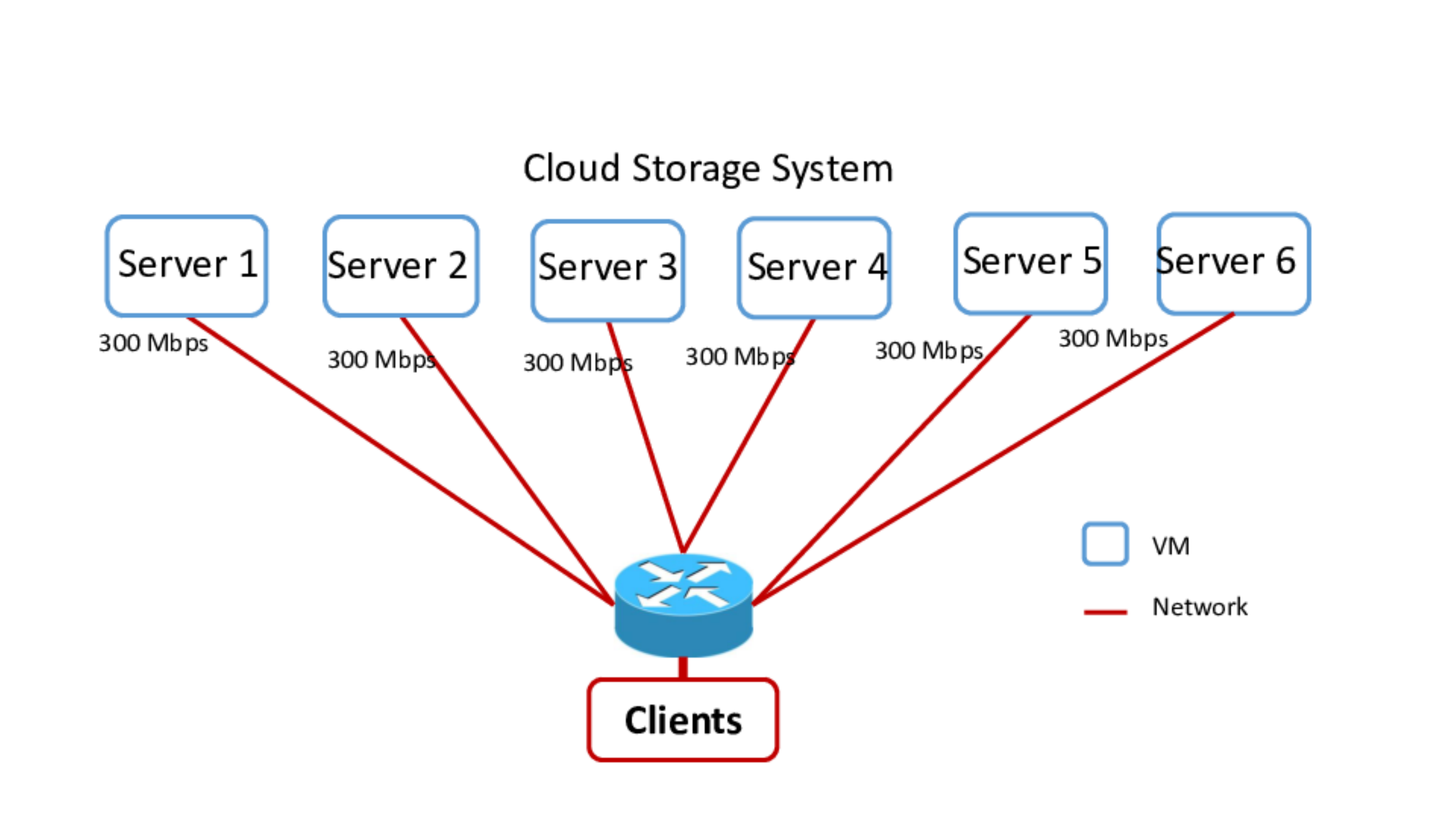}
\caption{Testbed in the cloud. \label{fig:Testbed-in-the}}
\end{figure}

An experimental environment in a virtualized cloud
environment is constructed. This  virtualized cloud is managed by open source software for creating private and
public cloud,  Openstack. We allocated 6 virtual machines
(VMs) as storage server nodes intended to store the
chunks. The schematic of our testbed is illustrated in Figure \ref{fig:Testbed-in-the}.
Table \ref{tab:Testbed-Configuration.} summarizes a detailed configuration
used for the experiments.

For client workload, we exploit a popular HTTP-trafic generator, Apache JMeter, with a plug-in that can generate traffic using HTTP Streaming protocol. 
We assume the amount of available bandwidth between origin server and each cache server is 200 Mbps, 
500 Mbps between cache server 1/2 and edge router 1, and 300 Mbps between cache server 3/4/5 and edge router 2. 
In this experiments, to allocate bandwidth to the clients, we throttle the client (i.e., JMeter) traffic according to the plan generated by our algorithm.  
We consider $500$ threads (i.e., users), $n=5$ , $k=3$ and set $e_j=40$, $d_j=20$. We chose the $(5,3)$ code as an example for our experiment. However, any other coding setting still works given that the required resources
are available. 
The video files are of length of $900$ seconds and the segment length is set to be $8$s. For each segment, we used JMeter built-in reports to estimate the downloaded time of each segment and then plug these times into our model to get the needed metric. 

%
%
\begin{figure}[t]
\centering\includegraphics[trim=0.0in 0in .6in .0in, clip,width=.65 \textwidth]{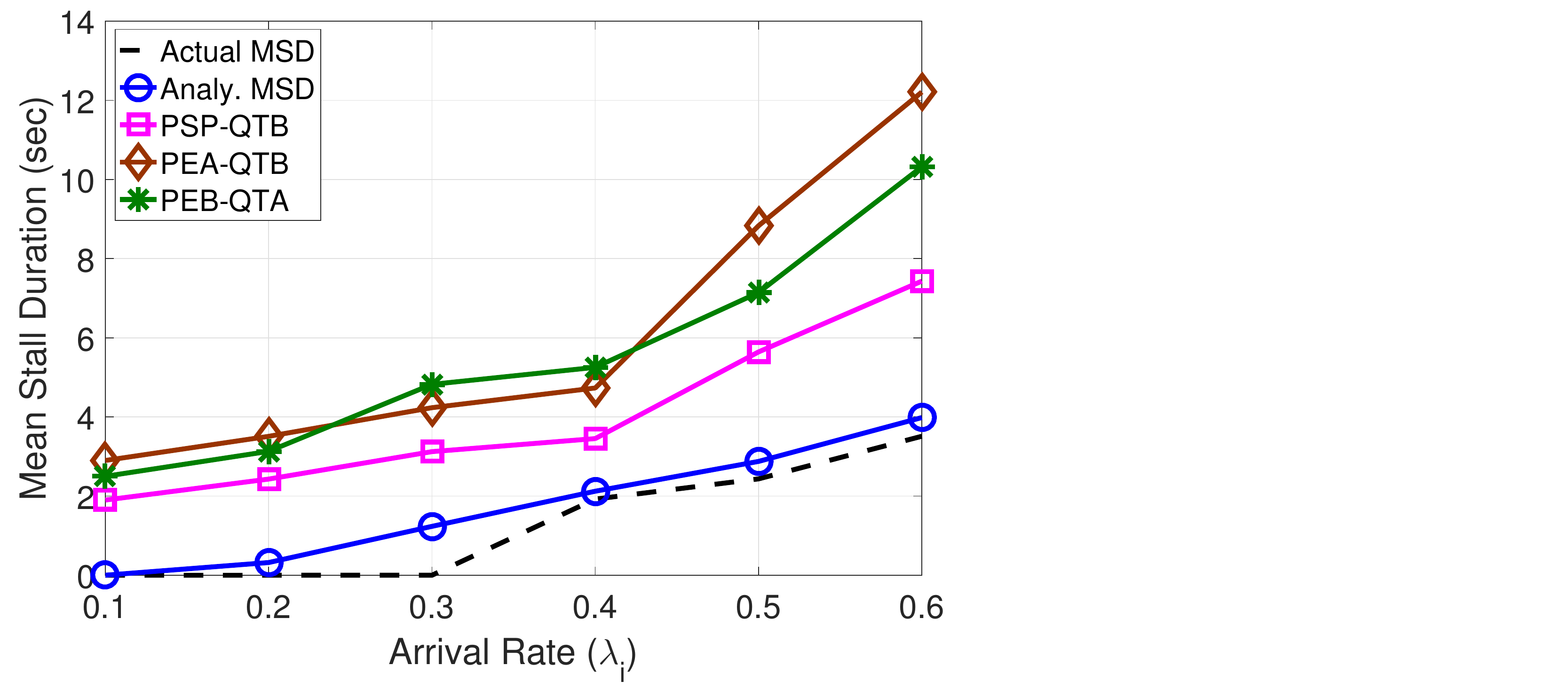}
\caption{Comparison of implementation results of our algorithm to analytical
mean stall durations, PSP-QTB, PEA-QTB, and PEB-QTA algorithms for different values of $\lambda_i$.\label{MSD}}

\end{figure}

Figure \ref{MSD} shows four different policies where we compare
the actual mean stall duration (MSD) for video files, analytical MSD,  PSP-QTB-based MSD, PEA-QTB-based MSD and PEB-QTA-based MSD  algorithms. We observe that the analytical MSD is very close to the actual measurements of the MSD obtained from our testbed, and approaches zero for reasonable large values of $\lambda_i$. Further, the proposed approach is shown to outperform the considered baselines. 

\section{Conclusion}\label{sec:conc}

In this paper, a video streaming over cloud is considered where the content is erasure-coded on the distributed servers. We consider two quality of experience metrics to optimize: mean stall duration and  average quality of the streamed video. A two-stage probabilistic scheduling is proposed for the choice of servers and the parallel streams between the server and the edge router. Using the two-stage probabilistic scheduling and probabilistic quality assignment for the videos, an upper bound on the mean stall duration is derived. An optimization problem that minimizes a convex combination of the two QoE metrics is formulated, over the choice of two-stage probabilistic scheduling, probabilistic quality assignment, bandwidth allocation, and auxiliary variables.  Efficient algorithm is proposed to solve the optimization problem and the evaluation results depict the improved performance of the algorithm as compared to the considered baselines. 

\bibliographystyle{IEEEtran}

\bibliography{vidStallRef,allstorage,Tian,ref_Tian2,ref_Tian3,Vaneet_cloud,Tian_rest}
\appendices
\newpage
\section{Algorithm Pseudo-codes for the Sub-problems}\label{apdx_table}

\begin{algorithm}[ht]
\caption{NOVA Algorithm to solve Access Optimization sub-problem\label{alg:NOVA_Alg1Pi}}

\begin{enumerate}
\item \textbf{Initialize} $\nu=0$, $k=0$,$\gamma^{\nu}\in\left(0,1\right]$,
$\epsilon>0$,$\boldsymbol{q}^{0}$ such that $\boldsymbol{q}^{0}$
is feasible ,
\item \textbf{while} $\mbox{obj}\left(k\right)-\mbox{obj}\left(k-1\right)\geq\epsilon$
\item $\quad$//\textit{\small{}Solve for $\boldsymbol{q}^{\nu+1}$ with
given $\boldsymbol{q}^{\nu}$}{\small \par}
\item $\quad$\textbf{Step 1}: Compute $\boldsymbol{\widehat{q}}\left(\boldsymbol{q}^{\nu}\right),$
the solution of $\boldsymbol{\widehat{q}}\left(\boldsymbol{q}^{\nu}\right)=$$\underset{\boldsymbol{q}}{\text{argmin}}$
$\boldsymbol{\widetilde{U}}\left(\boldsymbol{q},\boldsymbol{q}^{\nu}\right)\,\, $ s.t.
$\left(\ref{eq:rho_j}\right)$, $\left(\ref{eq:Lambda_j}\right)$, $\left(\ref{eq:sum_ij}\right)$, $\left(\ref{eq:pij}\right)$, 
$\left(\ref{eq:don_pos_cond2}\right)$, solved using  projected gradient descent 

\item $\quad$\textbf{Step 2}: $\ensuremath{\boldsymbol{q}^{\nu+1}=\boldsymbol{q}^{\nu}+\gamma^{\nu}\left(\widehat{\boldsymbol{q}}\left(\boldsymbol{q}^{\nu}\right)-\boldsymbol{q}^{\nu}\right)}$.
\item $\quad$//\textit{\small{}update index}{\small \par}
\item \textbf{Set} $\ensuremath{\nu\leftarrow\nu+1}$
\item \textbf{end while}
\item \textbf{output: }$\ensuremath{\widehat{\boldsymbol{q}}\left(\boldsymbol{q}^{\nu}\right)}$
\end{enumerate}
\end{algorithm}

\begin{algorithm}[ht]
	\caption{NOVA Algorithm to solve Auxiliary Variables  Optimization sub-problem\label{alg:NOVA_Alg1}}
	
	\begin{enumerate}
		\item \textbf{Initialize} $\nu=0$, $\gamma^{\nu}\in\left(0,1\right]$, $\epsilon>0$, $\boldsymbol{t}^{0}$ 
		such that $\boldsymbol{t}^{0}$  is feasible,
		\item \textbf{while} $\mbox{obj}\left(\nu\right)-\mbox{obj}\left(\nu-1\right)\geq\epsilon$
		\item $\quad$//\textit{\small{}Solve for  $\boldsymbol{t}^{\nu+1}$ with
			given $\boldsymbol{t}^{\nu}$}{\small \par}
		\item $\quad$\textbf{Step 1}: Compute $\boldsymbol{\widehat{t}}\left(\boldsymbol{t}^{\nu}\right),$
		the solution of  $\boldsymbol{\widehat{t}}\left(\boldsymbol{t}^{\nu}\right)=$$\underset{\boldsymbol{t}}{\text{argmin}}$
		$\boldsymbol{\overline{U}}\left(\boldsymbol{t},\boldsymbol{t}^{\nu}\right)$, s.t. \eqref{eq:t_i_alpha_j2}, \eqref{M_telda_less_1}, and  \eqref{eq:don_pos_cond2}
		using projected gradient descent
		\item $\quad$\textbf{Step 2}: $\ensuremath{\boldsymbol{t}^{\nu+1}=\boldsymbol{t}^{\nu}+\gamma^{\nu}\left(\widehat{\boldsymbol{t}}\left(\boldsymbol{t}^{\nu}\right)-\boldsymbol{t}^{\nu}\right)}$.
		\item $\quad$//\textit{\small{}update index}{\small \par}
		\item \textbf{Set} $\ensuremath{\nu\leftarrow\nu+1}$
		\item \textbf{end while}
		\item \textbf{output: }$\ensuremath{\widehat{\boldsymbol{t}}\left(\boldsymbol{t}^{\nu}\right)}$
	\end{enumerate}
\end{algorithm}
\begin{algorithm}[ht]
	\caption{NOVA Algorithm to solve Streamed Video Quality Optimization sub-problem\label{alg:NOVA_Alg3}}
	
	\begin{enumerate}
		\item \textbf{Initialize} $\nu=0$, $\gamma^{\nu}\in\left(0,1\right]$, $\epsilon>0$, $\boldsymbol{b}^{0}$ 
		such that $\boldsymbol{b}^{0}$  is feasible,
		\item \textbf{while} $\mbox{obj}\left(\nu\right)-\mbox{obj}\left(\nu-1\right)\geq\epsilon$
		\item $\quad$//\textit{\small{}Solve for  $\boldsymbol{b}^{\nu+1}$ with
			given $\boldsymbol{b}^{\nu}$}{\small \par}
		\item $\quad$\textbf{Step 1}: Compute $\boldsymbol{\widehat{b}}\left(\boldsymbol{b}^{\nu}\right),$
		the solution of  $\boldsymbol{\widehat{b}}\left(\boldsymbol{b}^{\nu}\right)=$$\underset{\boldsymbol{b}}{\text{argmin}}$
		$\boldsymbol{\overline{U}}\left(\boldsymbol{b},\boldsymbol{b}^{\nu}\right)$, s.t. \eqref{eq:rho_j},
		\eqref{eq:Lambda_j},
		\eqref{eq:sum_p_nuj},
		\eqref{eq:don_pos_cond2},
		using projected gradient descent
		\item $\quad$\textbf{Step 2}: $\ensuremath{\boldsymbol{b}^{\nu+1}=\boldsymbol{b}^{\nu}+\gamma^{\nu}\left(\widehat{\boldsymbol{b}}\left(\boldsymbol{b}^{\nu}\right)-\boldsymbol{b}^{\nu}\right)}$.
		\item $\quad$//\textit{\small{}update index}{\small \par}
		\item \textbf{Set} $\ensuremath{\nu\leftarrow\nu+1}$
		\item \textbf{end while}
		\item \textbf{output: }$\ensuremath{\widehat{\boldsymbol{b}}\left(\boldsymbol{b}^{\nu}\right)}$
	\end{enumerate}
\end{algorithm}

\begin{algorithm}[ht]
	\caption{NOVA Algorithm to solve Bandwidth Allocation Optimization sub-problem\label{alg:NOVA_Alg4}}
	
	\begin{enumerate}
		\item \textbf{Initialize} $\nu=0$, $\gamma^{\nu}\in\left(0,1\right]$, $\epsilon>0$, $\boldsymbol{w}^{0}$ 
		such that $\boldsymbol{w}^{0}$  is feasible,
		\item \textbf{while} $\mbox{obj}\left(\nu\right)-\mbox{obj}\left(\nu-1\right)\geq\epsilon$
		\item $\quad$//\textit{\small{}Solve for  $\boldsymbol{w}^{\nu+1}$ with
			given $\boldsymbol{w}^{\nu}$}{\small \par}
		\item $\quad$\textbf{Step 1}: Compute $\boldsymbol{\widehat{w}}\left(\boldsymbol{w}^{\nu}\right),$
		the solution of  $\boldsymbol{\widehat{w}}\left(\boldsymbol{w}^{\nu}\right)=$$\underset{\boldsymbol{w}}{\text{argmin}}$
		$\boldsymbol{\overline{U}}\left(\boldsymbol{w},\boldsymbol{w}^{\nu}\right)$, s.t. \eqref{eq:rho_j},
		\eqref{eq:w_alpha},
		\eqref{eq:sum_w_nuj}, and 
		\eqref{eq:don_pos_cond2}
		using projected gradient descent
		\item $\quad$\textbf{Step 2}: $\ensuremath{\boldsymbol{w}^{\nu+1}=\boldsymbol{w}^{\nu}+\gamma^{\nu}\left(\widehat{\boldsymbol{w}}\left(\boldsymbol{w}^{\nu}\right)-\boldsymbol{w}^{\nu}\right)}$.
		\item $\quad$//\textit{\small{}update index}{\small \par}
		\item \textbf{Set} $\ensuremath{\nu\leftarrow\nu+1}$
		\item \textbf{end while}
		\item \textbf{output: }$\ensuremath{\widehat{\boldsymbol{w}}\left(\boldsymbol{w}^{\nu}\right)}$
	\end{enumerate}
\end{algorithm}

\begin{algorithm}[ht]
	\caption{NOVA Algorithm to solve PS Selection Optimization sub-problem\label{alg:NOVA_Alg5}}
	
	\begin{enumerate}
		\item \textbf{Initialize} $\nu=0$, $\gamma^{\nu}\in\left(0,1\right]$, $\epsilon>0$, $\boldsymbol{p}^{0}$ 
		such that $\boldsymbol{p}^{0}$  is feasible,
		\item \textbf{while} $\mbox{obj}\left(\nu\right)-\mbox{obj}\left(\nu-1\right)\geq\epsilon$
		\item $\quad$//\textit{\small{}Solve for  $\boldsymbol{p}^{\nu+1}$ with
			given $\boldsymbol{p}^{\nu}$}{\small \par}
		\item $\quad$\textbf{Step 1}: Compute $\boldsymbol{\widehat{p}}\left(\boldsymbol{p}^{\nu}\right),$
		the solution of  $\boldsymbol{\widehat{p}}\left(\boldsymbol{p}^{\nu}\right)=$$\underset{\boldsymbol{p}}{\text{argmin}}$
		$\boldsymbol{\overline{U}}\left(\boldsymbol{p},\boldsymbol{p}^{\nu}\right)$, s.t. \eqref{eq:rho_j},
		\eqref{eq:Lambda_j},
		\eqref{eq:sum_p_nuj}, and
		\eqref{eq:don_pos_cond2},
		using projected gradient descent
		\item $\quad$\textbf{Step 2}: $\ensuremath{\boldsymbol{p}^{\nu+1}=\boldsymbol{p}^{\nu}+\gamma^{\nu}\left(\widehat{\boldsymbol{p}}\left(\boldsymbol{p}^{\nu}\right)-\boldsymbol{p}^{\nu}\right)}$.
		\item $\quad$//\textit{\small{}update index}{\small \par}
		\item \textbf{Set} $\ensuremath{\nu\leftarrow\nu+1}$
		\item \textbf{end while}
		\item \textbf{output: }$\ensuremath{\widehat{\boldsymbol{p}}\left(\boldsymbol{p}^{\nu}\right)}$
	\end{enumerate}
\end{algorithm}

\end{document}